\definecolor{dullmagenta}{rgb}{0.4,0,0.4}   % #660066
\definecolor{darkblue}{rgb}{0,0,0.4}
\newcommand{\opnorm}{\@ifstar\@opnorms\@opnorm}
\newcommand{\@opnorms}[1]{%
	\left|\mkern-1.5mu\left|\mkern-1.5mu\left|
	#1
	\right|\mkern-1.5mu\right|\mkern-1.5mu\right|
}
\newcommand{\@opnorm}[2][]{%
	\mathopen{#1|\mkern-1.5mu#1|\mkern-1.5mu#1|}
	#2
	\mathclose{#1|\mkern-1.5mu#1|\mkern-1.5mu#1|}
}
\let\mathbb=\mathds
\newcommandx{\eric}[2][1=]{\todo[inline, author={Eric}, linecolor=yellow,backgroundcolor=yellow!25,bordercolor=yellow,#1]{#2}}
\newcommandx{\ericside}[2][1=]{\todo[author={Eric}, linecolor=yellow,backgroundcolor=yellow!25,bordercolor=yellow,#1]{#2}}
\DeclareMathOperator{\Tr}{Tr}
\DeclareMathOperator{\e}{\mathrm{e}}
\newcommand{\ket}[1]{| #1 \rangle}
\newcommand{\be}{{\mathbf e}}
\def\0{{\mathbf{0}}}
\def\1{{\mathbf{1}}}
\def\2{{\mathbf{2}}}
\def\3{{\mathbf{3}}}
\def\4{{\mathbf{4}}}
\def\5{{\mathbf{5}}}
\def\6{{\mathbf{6}}}
\def\7{{\mathbf{7}}}
\def\8{{\mathbf{8}}}
\def\9{{\mathbf{9}}}
\def\be{\begin{equation}}
\def\ee{\end{equation}}
\def\bea{\begin{eqnarray}}
\def\eea{\end{eqnarray}}
\def\eps{\varepsilon}
\theoremstyle{plain}
\newtheorem{theo}{Theorem} %[section]%[chapter]
\newtheorem{prop}[theo]{Proposition} %[section]
\newtheorem{lemm}[theo]{Lemma} %[section]
\newtheorem{coro}[theo]{Corollary} %[section]
\theoremstyle{definition}
\newtheorem{defn}[theo]{Definition} %[section]
\theoremstyle{remark}
\newtheorem{remark}{Remark}[section]
\begin{document}
	
%%%%% How to disable amsart.cls to capitalize the article title?
\let\origmaketitle\maketitle
\def\maketitle{
	\begingroup
	\def\uppercasenonmath##1{} % this disables uppercasing title
	\let\MakeUppercase\relax % this disables uppercasing authors
	\origmaketitle
	\endgroup
}
%%%%%%%%%%%%%%%%%%%%%%%%%%%%%%%%

\title{\bfseries \Large{Discrimination of quantum states \protect\\ 
                        under locality constraints in the many-copy setting}}

\author{{Hao-Chung Cheng$^{1-4}$, Andreas Winter$^{5,6}$, and Nengkun Yu$^{7,8}$}}
\address{\small  		
	$^{1}$Department of Electrical Engineering \& Graduate Institute of Communication Engineering\\
	Department of Mathematics
	\\ National Taiwan University, Tapei 106, Taiwan (R.O.C.) \\
	$^{2}$Center for Quantum Science and Engineering,  National Taiwan University, Tapei 106, Taiwan (R.O.C.)\\
	$^{3}$Physics Division, National Center for Theoretical Sciences, Taipei 10617, Taiwan (R.O.C.)\\
	$^{4}$Hon Hai (Foxconn) Quantum Computing Center, New Taipei City 236, Taiwan (R.O.C.)\\	
	$^{5}$ICREA—Instituci{\'o} Catalana de Recerca i Estudis Avan\c{c}ats, 08010 Barcelona, Spain \\
	$^{6}$F{\'i}sica Te{\`o}rica: Informaci{\`o} i Fen{\`o}mens Qu{\`a}ntics, Departament de F{\'i}sica, \\ Universitat Aut{\`o}noma de Barcelona, 08193 Barcelona, Spain \\
	$^{7}$Computer Science Department,
 Stony Brook University, NY, USA\\
	$^{8}$Centre for Quantum Software and Information \& Faculty of Engineering and Information Technology,\\ University of Technology Sydney, Ultimo NSW 2007, Australia
}

\email{\href{mailto:haochung.ch@gmail.com}{haochung.ch@gmail.com}; \href{mailto:andreas.winter@uab.cat}{andreas.winter@uab.cat}; \href{mailto:nengkunyu@gmail.com}{nengkunyu@gmail.com}}

%\date{\today}           % It's always today... 
\date{15 August 2023} % Latest revision done on this da
	
\begin{abstract}

%In quantum information, hypothesis testing determines a system's specific property by modelling the hypotheses of two quantum states. We study the discrimination of a pair of orthogonal quantum states in the many-copy setting.
%The discrimination of orthogonal quantum states is not a problem when arbitrary quantum measurements are allowed, as one can always discriminate orthogonal states perfectly even with one copy. However, when we consider quantum states in a multipartite setting where the locality constraints are imposed, this problem becomes highly nontrivial. In this paper, we focus on restricted families of measurements
%such as local operation and classical communication (LOCC), separable operations (SEP), and positive-partial-transpose operations (PPT).

We study quantum hypothesis testing between orthogonal states under restricted local measurements in the many-copy scenario.
For testing arbitrary multipartite entangled pure state against its orthogonal complement state via the local operation and classical communication (LOCC) operation, we prove that the optimal average error probability always decays exponentially in the number of copies.
Second, we provide a sufficient condition for the LOCC operations to achieve the same performance as the positive-partial-transpose (PPT) operations. 
We further show that testing a maximally entangled state against its orthogonal complement and testing extremal Werner states both fulfill the above-mentioned condition. Hence, we determine the explicit expressions for the optimal average error probability, the optimal trade-off between the type-I and type-II errors, and the associated Chernoff, Stein, Hoeffding, and strong converse exponents.

Then, we show an infinite {asymptotic} separation between the separable (SEP) and PPT operations by providing a pair
of states constructed from an unextendible product basis (UPB). The quantum states can be distinguished perfectly by PPT operations, while the optimal error probability, with SEP operations, admits an exponential lower bound. On the technical side, we prove this result by providing a quantitative version of the well-known statement that the tensor product of UPBs is a UPB.
\end{abstract}
	
\maketitle

\section{Introduction}
Testing whether a system has a specific property is fundamental. 
In statistics, this problem is called \emph{hypothesis testing} \cite{Leh86}. It has substantial 
applications in numerous fields, such as 
information sciences \cite{Bla74, HHH07, CK11, PPV10, CH17, CHT19, Hao-Chung}, 
computational learning theory \cite{GGR98, KR98, Ron07, Ron09}, 
property and distribution testing \cite{Fis04, ADK15, Can15, Yu19}, 
and differential privacy \cite{WZ10, DJW13, KOV17, BBG+19, DRS+19}.

The most basic form of hypothesis testing is binary hypothesis testing, i.e.,~ testing a null hypothesis 
$\mathsf{H}_0$ against an alternative hypothesis $\mathsf{H}_1$.
In quantum computing, the two hypotheses are modeled by quantum states $\rho_0$ and $\rho_1$, 
respectively. 
To distinguish the two quantum states, one has to perform a test $T$, or equivalently, a two-outcome 
positive-operator valued measure (POVM) measurement $\{T, \mathds{1}- T \}$ on the received 
state, where $0\leq T \leq \mathds{1}$ is a quantum observable. 
Such a test $T$ incurs two types of errors: the type-I error probability 
$\alpha(T) := \Pr\left[\mathsf{H}_1|\mathsf{H}_0 \right] = \Tr\left[ \rho_0 (\mathds{1} - T) \right]$, the probability of accepting $\mathsf{H}_1$ when $\mathsf{H}_0$ is true; and the type-II error probability
$\beta(T) := \Pr\left[\mathsf{H}_0|\mathsf{H}_1 \right] = \Tr\left[ \rho_1 T \right]$, the probability of deciding
for $\mathsf{H}_0$ when actually $\mathsf{H}_1$ is true.
Subsequently, we simply term $\alpha(T)$ and $\beta(T)$ as the type-I error and the type-II error, respectively.
If the prior probabilities of the hypotheses are known, say $p$ and $1-p$, we measure the performance 
of the decision scheme by calculating the average (Bayes) error probability. We name this the Bayesian 
approach, and specifically the symmetric setting, $p=1-p=\frac12$, when the two hypotheses are equally likely.
In most practical situations where the prior probabilities are unknown, the Neyman--Pearson approach is used to analyze the trade-off between the two types of errors. We name this the asymmetric setting.

In the \emph{Bayesian} setting, the optimal average error probability is defined as
\[
P_\text{e}(\rho_0, \rho_1; p ) := \inf_{0\leq T\leq \mathds{1}} \left\{  p \alpha (T) + (1-p) \beta (T) \right\}
\]
for $ p:= \Pr[\mathsf{H}_0] \in (0,1)$.
Helstrom and Holevo \cite{Hel76} proved a closed-form expression of $P_\text{e}$ and showed that the
optimal test is achieved by projection onto the positive support of $p \rho_0 - (1-p) \rho_1$. 
This measurement can be viewed as a quantum generalization of the classical 
\emph{Neyman--Pearson test}, as described in \cite{NP33, Che52, Che56, Hoe65}.

In the present paper, we are concerned with the many-copy and asymptotic behavior of 
$P_\text{e}(\rho_0^{\otimes n}, \rho_1^{\otimes n}; p)$, where $n$ identical copies of states are prepared.
Then, the celebrated quantum Chernoff theorem \cite{ACM+07, ANS+08, NS09} establishes that
\begin{align}
%\xi_\text{C}
\mathrm{Chernoff}(\rho_0, \rho_1) &:= \lim_{n\to \infty} - \frac1n \log P_\text{e} (\rho_0^{\otimes n}, \rho_1^{\otimes n}; p ) = - \min_{0\leq s \leq 1} \log \Tr\left[ \rho_0^{1-s} \rho_1^{s}\right]. \label{eq:Chernoff}
\end{align}
That is, the \emph{Chernoff exponent} determines the convergence rate of 
the error probability. 
{Because of this result in the asymptotic setting, we consider all the Bayesian error 
probabilities $P_\text{e}(\rho_0^{\otimes n}, \rho_1^{\otimes n}; p)$ as pertaining to
the symmetric setting, as long as $p$ does not depend on $n$.}

In the \emph{asymmetric} setting, one aims to study the asymptotic functional dependence 
between the type-I and type-II errors.
In particular, three exponents are the most important. Detailed definitions of exponents
will be given in Section~\ref{sec:notation}.
The \emph{Stein exponent} characterizes the best (i.e., ~the largest) exponential decay
rate of the type-II error when the type-I error upper is bounded by some $\eps \in(0,1)$.
Quantum Stein's lemma \cite{HP91,ON00} shows that the exponent is given by the quantum 
relative entropy \cite{Ume62}. Moreover, it is independent of $\eps \in (0,1)$, which is known as 
the strong converse property.
The \emph{Hoeffding exponent} and the \emph{strong converse exponent}, respectively, determine 
the optimal exponential rate of the type-I error (or the type-I success probability) when the 
type-II error exponentially decays at the rate below or above the Stein exponent.
They are proved to be given by quantities involving Petz's R\'enyi divergence \cite{ANS+08,NS09, Pet86, Nag06} 
and the sandwiched R\'enyi divergence \cite{Nag05, MO14, MDS+13, WWY14}.
Other extensions in the large, moderate, and small deviation regimes have been studied 
in depth by~\cite{Hay02, BS03, BDJ+05, NH07, TH13, Li14, CH17, CTT2017, CHT19, Hao-Chung}.

Small to intermediate scale quantum computers will be available in the near-term future \cite{BIS+18, Pre18}. 
However, such quantum computers will be built in geographically separated laboratories, which 
means that each lab will perform only local quantum operations, and mutual classical communication may be
available. 
These operations constitute a restricted class of measurements known as \emph{local 
operations and classical communication} (LOCC) \cite{BBC+93, NC09, HHH+09, CLM+14}. Naturally, this leads to the question of how well do LOCC measurements perform in hypothesis testing compared to global measurements? For instance, the above-mentioned quantum Neyman--Pearson test is generally not implementable 
via LOCC due to quantum entanglement and nonlocality \cite{BDF+99, HHH+09}.
Therefore, the problem of local discrimination and local hypothesis testing recently gained considerable attention in the field of quantum computation and quantum information \cite{WM08, BHL+14, OH14, OH15, HO17, CMM+08, CVM+10, Nat10, YDY12, YDY14, LWD17, YZ17, AK18, CDR19a, CDR19b}.
Unfortunately, only limited results are known due to the complicated mathematical structure of LOCC.

In the present work, our goal is to study the asymptotic behavior of the errors incurred by 
the restriction to LOCC and derive the above-mentioned four exponents.
We also consider other classes of measurements: the positive-partial-transpose operations (PPT) 
and separable operations (SEP) \cite{Rai01}, mainly out of theoretical interest or as a tool to analyze LOCC. It is well-known that strict inclusions hold among them \cite{BDF+99}, i.e.
\begin{align} \label{eq:relation}
	\text{LOCC} \subset \text{SEP} \subset \text{PPT} \subset \text{ALL}. 
\end{align}

Although hypothesis testing under LOCC has been studied in many papers in the one-shot setting
\cite{WM08, BHL+14, OH14, OH15, HO17, CMM+08, CVM+10, Nat10, YDY14, LWD17, AK18}, the explorations of the asymptotic error behavior are relatively limited \cite{WM08, OH14, HO17}.
In this work, we focus on demonstrating an interesting phenomenon in the many-copy setting. Namely, the results show that distinguishing a pair of orthogonal states under LOCC indeed exhibits a fundamental 
difference from the conventional task using global measurements.
By definition of the Chernoff exponent given in Eq.~\eqref{eq:Chernoff} and the 
inclusions Eq.~\eqref{eq:relation}, we have 
\begin{align} \label{eq:ordering}
%\xi_\text{C}^\text{LOCC} \leq \xi_\text{C}^\text{SEP} \leq \xi_\text{C}^\text{PPT} \leq \xi_\text{C}^\text{ALL} 
\mathrm{Chernoff}^\text{LOCC} \leq \mathrm{Chernoff}^\text{SEP} \leq \mathrm{Chernoff}^\text{PPT}  \leq  \mathrm{Chernoff}^\text{ALL} 
= - \min_{0\leq s \leq 1} \log \Tr\left[ \rho_0^{1-s} \rho_1^{s}\right].
\end{align}
Here and subsequently, we put a superscript `X' on $P_\text{e}$ and the Chernoff exponent to highlight the class X measurements allowed. Particularly intriguing examples arise in the context of data 
hiding \cite{TDL01, DLT02, WM08, MWW09, LW13, LPW18} where the underlying states are orthogonal, 
i.e., $\Tr\left[ \rho_0 \rho_1 \right] = 0$.
This result implies that $\mathrm{Chernoff}^\text{ALL}(\rho_0,\rho_1) = \infty$.
However, for a given 
pair of orthogonal states with entanglement, the problem of whether $\mathrm{Chernoff}^\text{LOCC}(\rho_0,\rho_1)$ is finite or not  has remained open since the early days of quantum information 
theory. We know that $\mathrm{Chernoff}^\text{LOCC}(\rho_0,\rho_1)$ is nonzero, which can be achieved by implementing local tomography, then applying the classical Chernoff bound. 
On the other hand, it is unclear if the definition of the Chernoff exponent for a class $\text{X}\in\{\text{LOCC},\text{SEP},\text{PPT}\}$ is faithful, in the sense that the 
sequence $\left( - \frac1n \log P_\text{e}^\text{X} (\rho_0^{\otimes n}, \rho_1^{\otimes n}; p ) \right)_n$ 
diverges if and only if $P_\text{e}^\text{X} (\rho_0^{\otimes n}, \rho_1^{\otimes n}; p ) = 0$
for some finite $n$; cf.~\cite{YZ17}\footnote{In \cite{YZ17}, the authors showed that if two quantum operation cannot be perfectly distinguishable with finite uses, then the associated Chernoff exponent is finite.}.
Hence, in this work, we study the case of distinguishing an entangled state (possibly on a 
multipartite system) and its orthogonal complement using restricted classes of POVMs.
In particular, we show that the Chernoff bounds in this case is faithful for all the three classes;
we remark that no simple expression for a general multipartite entangled state is known. 

Although it has been shown that the inclusion relations Eq.~\eqref{eq:relation} are all 
strict \cite{BDF+99}, it is not known whether strict inclusion still holds in the many-copy 
asymptotics. For instance, does any equality hold in Eq.~\eqref{eq:ordering}? This question naturally
arises since SEP and PPT operations are often exploited to approximate LOCC operations. 
Hence, one may ask how differently the restricted classes of measurements perform. 

	In this paper, we further show that the (one-way) LOCC operations could achieve the same performance
as the PPT operation, even if the state is highly entangled. In this case, the equalities in
(2) between LOCC, SEP, and PPT hold. On the other hand, we demonstrate that there is an infinite separation between the SEP and PPT operations, even if there is no entanglement
involved in the underlying states.
Namely, we 
construct a pair of states such that $\mathrm{Chernoff}^\text{PPT}  =\infty$, while 
$\mathrm{Chernoff}^\text{SEP} \leq -\log \mu < \infty$ for some $\mu>0$.

\medskip
\textit{Organization of the paper.}
We summarize our main contributions in Section~\ref{sec:contributions}.
In Section~\ref{sec:comparisons}, we compare our results with existing works.
Section~\ref{sec:notation} introduces necessary notation and definitions.
In Section~\ref{sec:general}, we consider hypothesis testing between arbitrary entangled pure state and its orthogonal complement. 
In Section~\ref{sec:optimal_LOCC}, we study the scenario when LOCC equals PPT POVMs.
A special case of testing a maximally entangled state and its orthogonal complement is shown in Section~\ref{sec:MES}, and testing extremal Werner states is shown in Section~\ref{sec:Werner}. 
In Section~\ref{sec:separation}, we demonstrate an infinite {asymptotic} separation for the SEP and PPT operations. In closing, we provide discussions and conclusions in Section~\ref{sec:conclusions}.
Appendix~\ref{sec:MES_high} includes the case of testing a pure state with equal non-zero Schmidt coefficients.

\subsection{Main Contributions} \label{sec:contributions}

This work concerns hypothesis testing between orthogonal states using restricted measurements and the relations between LOCC, SEP, and PPT in the many-copy scenario.
Below we list our contributions and findings regarding to these questions.

\begin{enumerate}[(I)]
	\item 
	We study the optimal error probability of distinguishing an arbitrary multipartite entangled pure state $\psi$ and its orthogonal complement $\psi^{\perp} := \tfrac{\mathds{1}- \psi}{D -1 }$ (where $D$ is the dimension of the underlying Hilbert space) in the many-copy scenario.
\begin{enumerate}[(i)]
	\item 
%	For $\psi$ being an arbitrary multipartite entangled pure state, 
	We prove that the optimal error probability always decays exponentially in the number of copies $n$ (Theorem~\ref{theo:faithfulness_Chernoff}, Section~\ref{sec:general}): There exists constants $0<a\leq b < \infty$ such that,
	\begin{align} \label{eq:main}
		\e^{-nb} \leq
		 	P_\textnormal{e}^\textnormal{LOCC}\left( \psi^{\otimes n} , (\psi^\perp)^{\otimes n}; p \right) \leq (1-p) \cdot \e^{-n a}, \quad \forall n\in\mathbb{N}.
	\end{align}
	Hence, the Chernoff bound is faithful.
	We remark that the derived error exponent $a$ in the upper bound depends only on the dimension of the underlying Hilbert spaces, and it is tight if $\psi$ is a maximally entangled state.
	
	Our key technique proving this result is by establishing an exponential lower bound to the optimal error probability for distinguishing a bipartite entangled state $\psi$ on a bipartite system $\mathbb{C}^d\otimes \mathbb{C}^d$ against its orthogonal complement $\psi^{\perp}$ using PPT POVMs (Proposition~\ref{prop:Exp_LB}, Section~\ref{sec:converse_sym}) expressed as:
	\begin{align}
		P_\textnormal{e}^\textnormal{PPT}\left( \psi^{\otimes n} , (\psi^\perp)^{\otimes n}; p \right)  \geq		\min\{1-p,p\} \cdot t^n, \quad \forall n\in\mathbb{N}
	\end{align}
where $t := \tfrac{1-\eta}{(d^2-1) \eta } \in (0,\tfrac{1}{d+1}]$ and $\eta$ denotes the largest squared Schmidt coefficient of $\psi$.
	
	The above bound also provides a lower bound to the error probability using LOCC POVMs.
%	In the case that $\psi$ is a bipartite entangled pure state, we establish a \emph{universal} upper bound for the optimal error probability using LOCC POVMs (Proposition~\ref{prop:upper}, Section~\ref{sec:general}) as follows:
%	\begin{align}
%		P_\textnormal{e}^\textnormal{LOCC}\left( \psi^{\otimes n} , (\psi^\perp)^{\otimes n}; p \right) \leq (1-p)\left(\frac{1}{d+1}\right)^n, \quad n \in\mathbb{N}.
%	\end{align}

	Moreover, our approach extends to a strong converse bound in the asymmetric setting. Specifically, the type-I error approaches $1$ at the exponential rate  $r - \log \tfrac1t$ whenever the type-II error decays exponentially at rate $r > \log \tfrac1t$, which also implies that the Stein exponent is upper bounded by the quantity $\log \tfrac1t$ (i.e., the strong converse property).

\item 

We find a sufficient condition to characterize when the three classes of measurement, LOCC, SEP, and PPT, collapse in either symmetric or asymmetric hypothesis testing (Theorem~\ref{theo:sufficient_condition}, Section~\ref{sec:optimal_LOCC}).
Moreover, we show that such a sufficient condition is fulfilled 
when the underlying pair of states are maximally entangled state
$\Phi_d := \frac1d \sum_{i,j=0}^{d-1} |ii\rangle\langle jj| $ on $\mathbb{C}^d\times \mathbb{C}^d$ and its orthogonal complement state $\Phi_d^\perp:= (\mathds{1} - \Phi_d)/(d^2-1)$.

We then explicitly calculate the average error probability (Theorem~\ref{theo:MES}, Section~\ref{sec:MES}) under an optimal LOCC protocol:
\begin{align}
P_\textnormal{e}^\textnormal{LOCC} (\Phi_d^{\otimes n}, (\Phi_d^\perp)^{\otimes n}; p ) = \min\left\{ (1-p) \left( \frac{1}{d+1} \right)^n, p 
\right\}.
\end{align}
This then immediately gives the Chernoff exponent: $\mathrm{Chernoff}^\text{LOCC} (\Phi_d, \Phi_d^\perp) = \log (d+1)$.

Further, we establish the optimal trade-off between the type-I and type-II errors, i.e.,~given any type-I error no larger than a constant $\varepsilon $, the minimum type-II error of using a restricted class `X' of measurements, denoted by $\beta_{\varepsilon}^{\mathrm{X}}(\Phi_d^{\otimes n}, (\Phi_d^{\perp})^{\otimes n} )$, satisfies
\begin{align} \label{eq:trade-off0}
\beta_{\varepsilon}^{\mathrm{X}}(\Phi_d^{\otimes n}, (\Phi_d^{\perp})^{\otimes n} ) = \frac{1-\varepsilon}{(d+1)^n}.
\end{align}
With such functional dependence and exact characterization, we then obtain the associated Stein, Hoeffding, and strong converse exponents in Corollaries~\ref{coro:MES_Stein}, \ref{coro:MES_Hoeffding}, and \ref{coro:MES_sc}, respectively.
We remark that, in the general cases of testing arbitrary states using restricted measurements, the single-letter formula is unknown even for the Stein exponent. Moreover, the known Stein exponent for the general case only holds at $\varepsilon = 0$, which asserts that the type-I error vanishes asymptotically as the type-II error decays at a rate below the Stein exponent.
Nevertheless, it does not provide the strong converse property; namely, the minimum type-I error converges to $1$ as the type-II error decays at a rate above the Stein exponent.
We refer the reader to Section~\ref{sec:comparisons} for more detailed discussions.

\item Our results apply to the case of testing the pure state with uniform nonzero Schmidt coefficients (i.e.~$\frac1m \sum_{i,j=0}^{m-1} |ii\rangle\langle jj|$ for $m\leq d$) and its orthogonal complement (Propositions~\ref{theo:MES_high} and \ref{theo:MES_high_1} of Section~\ref{sec:MES_high}).
Lastly, we extend Matthews and Winter's work \cite{WM08} to show that testing extremal Werner states \cite{Wer89} 
also satisfies the above sufficient condition.
Therefore, we establish the optimal trade-off between the type-I and type-II errors and the corresponding Stein, Hoeffding, and strong converse exponents (Theorem~\ref{theo:Werner}, Section~\ref{sec:Werner}).
In Table~\ref{table:summary} below, we summarize the established exact characterizations for the exponents in various testing setting.

\end{enumerate}

\item 
In Section~\ref{sec:separation}, we establish an \emph{infinite asymptotic separation} 
between SEP measurements and PPT measurements. Specifically, we consider the null hypothesis to be the 
uniform mixture of an unextendible product basis (UPB) \cite{BDM+99} and the alternative hypothesis to be 
a state supported on the orthogonal complement of the former. 
Such a pair of states can be discriminated perfectly by a PPT measurement, while we show that the optimal 
error probability under SEP measurements possesses an exponential lower bound (Theorem~\ref{theo:separation} 
of Section~\ref{sec:separation}).
Our key technique to establishing this result is introducing a novel quantity, \emph{unextendibility},
(Definition~\ref{defn:delta}, Section~\ref{sec:separation}) to measure how far a UPB
is from being an extendible product basis and proving its multiplicativity property under the tensor product (Proposition~\ref{prop:multiplicative}). Accordingly, our result gives a quantitative characterization of when the tensor product of UPBs is still a UPB \cite{DMS+03}.

\end{enumerate}

\begin{table}[th!]
	\centering
	%	\setcellgapes{-3pt}
	%	\makegapedcells
	\resizebox{1\columnwidth}{!}{
		\begin{tabular}{@{}c|c|c|c|c@{}} % 	
			\toprule
			
			Setting \textbackslash\,  Exponents &  Chernoff   &   Stein  &  Hoeffding  &  Strong converse  \\
			
			\midrule
			\midrule		
			
			$ \begin{cases} \mathsf{H}_0 :   \Phi_d^{\otimes n} \\ \mathsf{H}_1: \left( \Phi_d^\perp \right)^{\otimes n} \\ \end{cases} $ & \multirow{4}{*}{$\displaystyle \log (d+1)$} &  $\displaystyle\log(d+1)$ &  $\displaystyle \infty$ &  $\displaystyle r - \log (d+1)$ \\
			\cmidrule{1-1} \cmidrule{3-5}			
			
			$ \begin{cases} \mathsf{H}_0 :  \left( \Phi_d^\perp \right)^{\otimes n}   \\ \mathsf{H}_1: \Phi_d^{\otimes n} \\ \end{cases} $  &  &  $\displaystyle\infty$ &   $ \displaystyle\log (d+1)$ &  $\displaystyle 0$  \\
			\midrule	
			
			$ \begin{cases} \mathsf{H}_0 :   \sigma_{\text{s},d}^{\otimes n} \\ \mathsf{H}_1:  \sigma_{\text{a},d} ^{\otimes n} \\ \end{cases} $ & \multirow{4}{*}{$\displaystyle \log  \frac{d+1}{d-1} $ \cite{WM08} }  & $ \displaystyle\infty$ &   $ \displaystyle\log  \frac{d+1}{d-1} $ &  $\displaystyle 0$ \\
			\cmidrule{1-1} \cmidrule{3-5}					
			
			$ \begin{cases} \mathsf{H}_0 : \sigma_{\text{a},d}^{\otimes n}  \\ \mathsf{H}_1: \sigma_{\text{s},d}^{\otimes n} \\ \end{cases} $ &  &  $\displaystyle\log \frac{d+1}{d-1} $ &  $\displaystyle  \infty$ &  $\displaystyle r - \log  \frac{d+1}{d-1} $ \\
			\midrule
			
			$ \begin{cases} \mathsf{H}_0 :  \left( \Phi_m \oplus \mathbb{O} \right)^{\otimes n} \\ \mathsf{H}_1: \left( \lambda\Phi_m^\perp \oplus (1-\lambda)\tau \right)^{\otimes n} \\ \end{cases} $  &    $\displaystyle\log \frac{m+1}{\lambda}$ &   $\displaystyle\log  \frac{m+1}{\lambda} $ &  $\displaystyle\infty$ &   $\displaystyle r - \log \frac{m+1}{\lambda}$\\
			\midrule
			
			$ \begin{cases} \mathsf{H}_0 :  \left( \Phi_m^\perp \oplus \mathbb{O} \right)^{\otimes n} \\ \mathsf{H}_1: \left( \lambda\Phi_m \oplus (1-\lambda)\tau \right)^{\otimes n} \\ \end{cases} $  &    $\displaystyle  \max\left\{\log(m+1), \log\frac{1}{\lambda} \right\}$ &   $\infty$ &  $ \displaystyle \begin{cases} \infty & r \leq \log \frac{1}{\lambda} \\  \log (m+1) & r>\log \frac{1}{\lambda} \\ \end{cases} $ &   $\displaystyle 0$  \\
			\midrule
			
			$ \begin{cases} \mathsf{H}_0 :  \left( \sigma_{\text{s},m} \oplus \mathbb{O} \right)^{\otimes n} \\ \mathsf{H}_1: \left( \lambda\sigma_{\text{a},m} \oplus (1-\lambda)\tau \right)^{\otimes n} \\ \end{cases} $ &    $ \displaystyle \max\left\{\log \frac{m+1}{m-1} , \log\frac{1}{\lambda} \right\}$ &   $\displaystyle  \infty$ &  $\displaystyle  \begin{cases} \infty & r \leq \log \frac{1}{\lambda} \\  \log \frac{m+1}{m-1} & r>\log \frac{1}{\lambda} \\ \end{cases} $ &   $\displaystyle 0$  \\
			\midrule
			
			$ \begin{cases} \mathsf{H}_0 :  \left( \sigma_{\text{a},m} \oplus \mathbb{O} \right)^{\otimes n} \\ \mathsf{H}_1: \left( \lambda\sigma_{\text{s},d} \oplus (1-\lambda)\tau \right)^{\otimes n} \\ \end{cases} $ &  $\displaystyle \log \frac{m+1}{\lambda(m-1)} $ & $\displaystyle\log \frac{m+1}{\lambda(m-1)} $ & $ \displaystyle \infty $ & $\displaystyle r - \log \frac{m+1}{\lambda(m-1)}$  \\%[2pt]
			
			\bottomrule				
		\end{tabular}
	} % end of \resizebox
	\caption{Exact characterizations of the Chernoff exponent, Eq.~\eqref{eq:Chernoff_X}, Stein exponent, Eq.~\eqref{eq:Stein}, Hoeffding exponent, Eq.~\eqref{eq:Hoeffding}, and strong converse exponent, Eq.~\eqref{eq:sc} under various settings of binary hypothesis testing via LOCC, SEP, and PPT measurements. 
	Namely, the established exponents in the table are all the same for the three classes of the restricted measurements.
	The states $\Phi_d$, $\Phi_d^\perp$, $\sigma_{\text{s},d}$, and $\sigma_{\text{a},d}$ are the 
	maximally entangled state, its orthogonal complement state, 
	the completely symmetric Werner state
	and the completely anti-symmetric Werner state \cite{Wer89}
	on $\mathbb{C}^d \otimes \mathbb{C}^d$, respectively;
	and $\tau$ is the completely mixed state of another system (whose dimension is irrelevant here). 
	The parameter $\lambda\in[0,1]$ is an arbitrary scalar.
	}	\label{table:summary}
	
\end{table}

\subsection{Comparisons to existing results} \label{sec:comparisons}

The Stein exponent for binary quantum hypothesis testing when $\eps = 0$, i.e.~$\mathrm{Stein}^\text{X} (\rho_0, \rho_1, 0)$,  was studied by Brand{\~a}o \emph{et al.} \cite{BHL+14} and proved to be given by the regularized relative entropy between the measurement outcomes, i.e.~
\begin{align}
	\lim_{\eps \to 0} \mathrm{Stein}^\text{X} (\rho_0, \rho_1,\eps) = \lim_{n\to\infty} \sup_{ \mathcal{M} \in \text{X} } \frac{ D(\mathcal{M}(\rho_0^{\otimes n})\| \mathcal{M}(\rho_1^{\otimes n}) )}{n},
\end{align}
where $D$ is the quantum relative entropy \cite{Ume62}, and $\mathcal{M}$ is any POVM in the class `$\text{X}$', and the precise definition for $\mathrm{Stein}^\text{X} (\rho_0, \rho_1, \eps)$ will be introduced later in Section~\ref{sec:notation}.
First, calculating such a regularized quantity is computationally intractable. 
Second, it is not known whether the strong converse property holds. Namely, if $\mathrm{Stein}^\text{X} (\rho_0, \rho_1,\eps)$ is dependent on $\eps \in (0,1)$.
Hence, in Corollaries~\ref{coro:MES_Stein} and \ref{coro:Werner_Stein}, we establish a single-letter formula for testing $\Phi_d$ against $\Phi_d^\perp$, and the completely anti-symmetric state against the completely symmetric state, respectively.
Moreover, in Corollaries~\ref{coro:MES_sc} and \ref{coro:Werner_sc} establish the corresponding strong converse exponents, which characterize how fast the type-I errors approach one whenever the type-II error decays too fast.

Owari and Hayashi studied the Chernoff, Stein, and Hoeffding exponents for binary hypothesis testing between an arbitrary bipartite pure state and the white noise state (i.e.,~the maximally mixed state), under one-way LOCC, two-way LOCC, and SEP POVMs \cite{OH14, OH15, HO17}. 
The authors considered the one-copy setting in Ref.~\cite{OH15}. Specifically, the authors showed that the hypothesis testing using one-way LOCC POVMs is equivalent to a \emph{classical} hypothesis testing between a probability distribution defined by the Schmidt coefficients of the bipartite pure state and the classical white noise (i.e.,~ the uniform distribution).
On the other hand, hypothesis testing using SEP POVMs is equivalent to a hypothesis-testing problem with a composite null hypothesis under global POVMs, in which solving the latter problem is easier than solving the original one.
Then, Ref.~\cite{OH14} extends the analysis of \cite{OH15} to the many-copy scenario to obtain the Chernoff, Stein, and Hoeffding exponents for several restricted POVMs. For one-way LOCC POVMs, the hypothesis-testing problem is essentially classical, while for SEP POVMs, additional large-deviation-type techniques were used. In particular, the authors proved that the Stein exponent is the same for all three classes of POVMs.
In Ref.~\cite{HO17}, the results for the Stein exponent are sharpened up to the third-order term. Moreover, the Hoeffding exponents for the two-way LOCC and SEP POVMs coincide. There is a difference in the Hoeffding exponent between the one-way LOCC and two-way LOCC POVMs unless the Schmidt coefficients are uniform.

Below we highlight the differences between our work from Refs.~\cite{OH14, OH15, HO17}. 
First, since the pair of  states for testing in \cite{OH14, OH15, HO17} are not orthogonal, the faithfulness of the error exponents therein holds trivially.
Namely, the discrimination error between an entangled bipartite state and the completely mixed state can not decay super-exponentially.
On the other hand, testing orthogonal states in our work becomes highly nontrivial because examples show that more than one copy is required for perfect discrimination. In contrast, two copies suffice for discrimination without mistake (see e.g.~\cite{YDY12, YDY14}).
Second, the proofs in Refs.~\cite{OH14, OH15, HO17} heavily rely on the equivalence of testing a bipartite pure state against the white noise state to a global hypothesis-testing problem between a \emph{classical distribution} against the uniform one. Note that the white noise state plays a crucial role here. It is unclear whether such an equivalence still holds generally.
As for the first part of this work, we primarily rely on the twirling operation of the hypotheses.
Third, Refs.~\cite{OH14, OH15, HO17} concerned hypothesis testing using SEP POVMs and reduced it to a global hypothesis testing with a composite null hypothesis. 
This work, however, does not directly analyze the SEP POVMs but the PPT POVMs. We remark that a converse bound for testing using SEP POVMs does not necessarily yield a converse bound for testing using PPT POVMs since $\text{SEP}\subset \text{PPT}$. Moreover, Section~\ref{sec:separation} establishes an infinite separation between $\text{SEP}$ and $\text{PPT}$, which seems novel to our knowledge.
Lastly, the second-order expansion of the optimal type-II error with type-I no larger than a constant studied in \cite{HO17} generally resembles the classical case \cite{TH13, Li14}. For  testing the maximally entangled state against the white noise state, \cite[Theorem 4]{HO17} showed that, for `X' being LOCC or SEP, $\beta_{\varepsilon}^{\mathrm{X}}(\Phi_d^{\otimes n}, (\mathds{1}_d/d)^{\otimes n} ) = (1-\varepsilon) / d$, which is similar to ours given in \eqref{eq:trade-off0} probably because of symmetry of the maximally entangled state.
Although we are not concerned with testing against a white noise state in this paper, our analysis of the PPT-distinguishability (Proposition~\ref{prop:Exp_LB} in Section~\ref{sec:converse_sym}) shows that the above result holds for PPT POVMs as well, which strengthens  the result \cite[Theorem 4]{HO17} from SEP to PPT POVMs.

%\medskip
%\noindent \textbf{Notation.}
\section{Notation and Definitions} \label{sec:notation}
Let $\mathbb{C}^d$ be a $d$-dimensional complex Hilbert space. A quantum state (i.e.,~density operator) 
on $\mathbb{C}^d$ is a positive semi-definite operator with unit trace. 
The operator `$\dagger$' denotes complex conjugate and transpose.
The trace operation is 
denoted by $\Tr[\,\cdot\,]$.
The symbol $\mathds{1}_{d}$ stands for the identity operator on $\mathbb{C}^{d}$, and $\mathbb{O}_{d}$ 
denotes the zero operation on $\mathbb{C}^{d}$. If there is no ambiguity, we will drop the subscript for simplicity. 
For a bounded operator $M$, we denote by $|M|:= \sqrt{M^\dagger M}$ its absolute value.
We use $\|M\|_\infty$ for the usual operator norm of $M$ (i.e.~the largest of the eigenvalue of $|M|$), and $\|M \|_1 := \Tr[|M|]$ for the trace norm (i.e.~the sum of the singular values). 

A positive-operator valued measure (POVM) is a set of positive 
semi-definite operators whose sum equals identity.
For any density operator $\rho$ on $\mathbb{C}^d$, we use $\rho^\perp$ to denote the \emph{orthogonal complement} of $\rho$, which is defined as the density  
operator on $\mathbb{C}^d$ that is maximally mixed on the orthogonal complement of the support of $\rho$. 
For example, if $\rho$ is pure (i.e.~a rank-one projection), then $\rho^\perp = (\mathds{1}_d - \rho)/(d-1)$.
The maximally entangled state on $\mathbb{C}^d \otimes \mathbb{C}^d$ is denoted by 
$\Phi_d := \frac1d \sum_{i,j=0}^{d-1} |ii\rangle\langle jj| $, and $\Phi_d^\perp:= (\mathds{1} - \Phi_d)/(d^2-1)$ is its 
orthogonal complement.
%Let $\sigma_d$ and $\sigma_d^\perp$ denote the completely symmetric and completely 
%anti-symmetric states on $\mathbb{C}^d \otimes \mathbb{C}^d$ \cite{Wer89}.
We use $\tau_d := \mathds{1}_d / d$ to denote the completely mixed state on $\mathbb{C}^d$.
The operations $\otimes $ and $\oplus$ represent the tensor product and direct sum, respectively.
%We use boldface to denote a column vector, e.g.,~$\mathbf{x} = (x_0 \ldots x_n)^T$  with $T$ being the transpose. 
We use $\mathbb{N}$ to denote natural numbers.
For an operator on a bipartite Hilbert space $\mathcal{H}_A\otimes \mathcal{H}_B$, 
we use $\Gamma$ to denote its partial transpose with respect to $\mathcal{H}_B$, i.e.,~for 
some orthonormal basis $\{|i\rangle_A\otimes |j\rangle_B\}_{(i,j)}$, we define the partial transpose as
\begin{align}
	\left( |i\rangle\langle k |_A \otimes |j\rangle\langle \ell |_B \right)^\Gamma := |i\rangle\langle k |_A\otimes |\ell\rangle   \langle j |_B.
\end{align}
We remark that the partial transpose $\Gamma$ is an isometry with respect to the Hilbert--Schmidt inner product, i.e., $\Tr[A B^\Gamma] = \Tr[A^\Gamma B]$ for Hermitian matrices $A$ and $B$.
We will use this fact throughout the paper.

Given a multipartite system $\mathcal{H}_1\otimes \cdots \otimes  \mathcal{H}_m$, a (one-way) 
LOCC POVM \cite{CLM+14} is a decision rule based on all the measurement outcomes performed 
locally on each subsystem. The SEP measurements are defined as
\begin{align}
	\textrm{SEP} :=
	\left\{
	\left( E_k^{(1)} \otimes \cdots \otimes E_k^{(m)} \right)_k:
	E_k^{(j)} \geq 0, \;
	\sum_k E_k^{(1)} \otimes \cdots \otimes E_k^{(m)} = \mathds{1}
	\right\},
\end{align}
and the PPT POVMs are defined as 
%\begin{align}
%	\textrm{PPT} :=
%	\left\{
%	\left( E_k \right)_k \; \text{POVM}: \left(  E_k \right)^\Gamma  \geq 0
%	\text{ for all partial transpose $\Gamma$ on $j$ $\forall 1\leq j\leq m$, $\forall k$.}
%	\right\}.
%\end{align}
\begin{align}
	\textrm{PPT} :=
	\left\{
	\left( E_k \right)_k \; \text{POVM}: \forall 1\leq j\leq m, \forall k, 
	\left(  E_k \right)^\Gamma  \geq 0,
	\text{ where $\Gamma$ is partial transpose on $\mathcal{H}_j$.}
	\right\}.
\end{align}
We remark that for the LOCC measurements, classical post-processing, i.e.,~a decision rule, at the end is allowed.

Consider a binary hypothesis testing problem as follows:
\begin{align}
	\begin{cases}
		\mathsf{H}_0: \rho_0^{\otimes n} \\
		\mathsf{H}_1: \rho_1^{\otimes n}
	\end{cases}, \forall n\in\mathbb{N}.
\end{align}
Given a test $T$ where $(T,\mathds{1}-T)$ forms a two-outcome POVM, we let 
$\alpha_n(T):= \Tr\left[\rho^{\otimes n}(\mathds{1} - T) \right]$ and 
$\beta_n(T) := \Tr\left[ \sigma^{\otimes n} T \right]$.
In the symmetric case with prior $0<p<1$, we define the optimal error probability 
of the binary hypothesis testing using the class X of POVMs as:
\begin{align}
P_\text{e}^\text{X}(\rho_0^{\otimes n}, \rho_1^{\otimes n}; p ) := \inf_{  T, (\mathds{1}-T) \in \text{X}  } \left\{  p \alpha (T_n) + (1-p) \beta (T_n) \right\}. \label{eq:optimal_err_X}
\end{align}
The associated Chernoff exponent is expressed as
\begin{align} \label{eq:Chernoff_X}
	\mathrm{Chernoff}^\text{X}(\rho_0, \rho_1) &:= \lim_{n\to \infty} - \frac1n \log P_\text{e}^\text{X} (\rho_0^{\otimes n}, \rho_1^{\otimes n}; p ).
\end{align}
We note that the definition given in Eq.~\eqref{eq:optimal_err_X} can be naturally extended to multiple hypothesis testing with priors.

Given density matrices $\rho_0$ and $\rho_1$, we use the following definitions:
\begin{align}
	\mathcal{R}^\text{X}(\rho_0, \rho_1)
	 &
	= \left\{(\alpha,\beta) : \exists \, T, \mathds{1}-T \in \text{X}: \alpha = \Tr\left[ (\mathds{1} - T) \rho_0 \right], \; \beta = \Tr\left[ T \rho_1 \right]
	\right\},
\\
	\alpha_\beta^\text{X}(\rho_0, \rho_1)  &:= \inf \left\{ \alpha: (\alpha,\beta) \in \mathcal{R}^\text{X}(\rho_0, \rho_1) \right\}, \label{eq:trade-off_alpha}\\
	\beta_\alpha^\text{X}(\rho_0, \rho_1)  &:= \inf \left\{ \beta: (\alpha,\beta) \in \mathcal{R}^\text{X}(\rho_0, \rho_1) \right\}. \label{eq:trade-off}
\end{align}
Here, $\mathcal{R}^\text{X}(\rho_0, \rho_1)$ is sometimes called the \emph{hypothesis-testing region}, and the trade-off between type-I and type-II errors, $\beta_\alpha^\text{X}(\rho_0, \rho_1)$, is termed as the \emph{Neyman--Pearson function} or the \emph{trade-off function} (see ~\cite[Section 3.2]{Leh86}, \cite{Pol13}, \cite[Definition 2.1]{DRS+19}).

In the asymmetric case, we define the following exponent functions:
\begin{align}
	\mathrm{Stein}^\text{X} (\rho_0, \rho_1, \eps) &:= \lim_{n\to \infty} \sup_{ T, (\mathds{1}-T) \in \text{X}    }  \left\{ -\frac1n \log \beta_n(T) : \alpha_n (T) \leq \eps  \right\}, \quad \forall \eps \in (0,1); \label{eq:Stein}\\
	\mathrm{Hoeffding}^\text{X} (\rho_0, \rho_1, r) &:= \lim_{n\to \infty} \sup_{ T, (\mathds{1}-T) \in \text{X}   }  \left\{ -\frac1n \log \alpha_n(T) : -\frac1n \log \beta_n (T) \geq r  \right\}, \quad \forall r > 0; \label{eq:Hoeffding} \\
	\mathrm{SC}^\text{X} (\rho_0, \rho_1, r) &:= \lim_{n\to \infty} \sup_{ T, (\mathds{1}-T) \in \text{X}    }  \left\{ -\frac1n \log ( 1-\alpha_n(T) ) : -\frac1n \log \beta_n (T) \geq r  \right\}, \quad \forall r > 0. \label{eq:sc}
\end{align}

\section{Testing Arbitrary Entangled Pure States} \label{sec:general}

This section is devoted to proving that the Chernoff bound of testing an arbitrary 
multipartite entangled pure state against its orthogonal complement using LOCC POVMs 
is faithful. Namely, the associated optimal error probability decays exponentially in 
the number of copies (see Theorem~\ref{theo:faithfulness_Chernoff}).

The main technique in establishing this result is to show an exponential lower bound 
of the optimal error probability using PPT POVMs, as stated in
Proposition~\ref{prop:Exp_LB} of Section~\ref{sec:converse_sym}.
Moreover, our technique also extends to provide a strong converse bound to the 
Stein exponent and a lower bound to the strong converse exponent in the asymmetric 
setting (Section~\ref{sec:converse_sym}).

\begin{theo}[Faithfulness of the Chernoff bound] \label{theo:faithfulness_Chernoff}
	Let $\psi $ be an arbitrary multipartite entangled pure state  on $\mathbb{C}^{d_1}\otimes \mathbb{C}^{d_2}\otimes \cdots\otimes \mathbb{C}^{d_m}$
	and $\psi^\perp $ be its orthogonal complement state.
	The optimal error probability of discriminating them using LOCC POVMs decays exponentially in the number of copies, i.e.,~for any $0<p<1$ there exists $0< b<\infty$ such that
	\begin{align}
		\e^{-nb}	\leq 	P_\textnormal{e}^\textnormal{LOCC}\left( \psi^{\otimes n} , (\psi^\perp)^{\otimes n}; p \right) \leq 
			(1-p)\left( 1-\frac{d_{m-1}d_m-\min\{d_{m-1},d_m\}}{d_1d_2\cdots d_m-1} \right)^n, \quad \forall n\in\mathbb{N}.
	\end{align}
	In other words,
	\begin{align}
		0< 1-\frac{d_{m-1}d_m-\min\{d_{m-1},d_m\}}{d_1d_2\cdots d_m-1}  \leq \mathrm{Chernoff}^\textnormal{LOCC} \left( \psi , \psi^\perp\right) \leq b < \infty.
	\end{align}
\end{theo}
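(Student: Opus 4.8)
The plan is to prove the two inequalities by different means: the upper bound $P_\textnormal{e}^\textnormal{LOCC}\le \e^{-na}$ by exhibiting one concrete LOCC strategy, and the lower bound $P_\textnormal{e}^\textnormal{LOCC}\ge \e^{-nb}$ --- the faithfulness assertion --- by comparing LOCC against PPT and invoking Proposition~\ref{prop:Exp_LB}. For the upper bound I would use \emph{local tomography}. On each party $j$ fix an informationally complete POVM $\{M^{(j)}_{k}\}_k$ whose elements span the Hermitian operators on $\mathcal{H}_j$; then the product POVM $\{M^{(1)}_{k_1}\otimes\cdots\otimes M^{(m)}_{k_m}\}$ is implementable by local operations (hence by LOCC), and its elements span all Hermitian operators on the global space. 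Therefore the induced outcome distributions $P$ under $\psi$ and $Q$ under $\psi^\perp$ determine the respective states uniquely, and since $\psi\neq\psi^\perp$ we have $P\neq Q$. Measuring each of the $n$ copies independently reduces the task to distinguishing $P^{\otimes n}$ from $Q^{\otimes n}$, to which classical post-processing (part of LOCC) applies, and the classical Chernoff bound gives
\begin{align}
P_\textnormal{e}^\textnormal{LOCC}\left(\psi^{\otimes n},(\psi^\perp)^{\otimes n};p\right)\le\left(\min_{0\le s\le 1}\sum_{\vec k}P(\vec k)^{1-s}Q(\vec k)^{s}\right)^{n}=\e^{-na},
\end{align}
with $a:=-\min_{0\le s\le1}\log\sum_{\vec k}P(\vec k)^{1-s}Q(\vec k)^{s}>0$ precisely because $P\neq Q$. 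This single construction covers all multipartite entangled pure states uniformly, and in the bipartite case Proposition~\ref{prop:upper} supplies a sharper constant.

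For the lower bound I would coarse-grain the parties into a bipartite cut. Since $\psi$ is entangled it is not a full product state, so there is a bipartition $A|B$ of the parties across which $\psi$ has Schmidt rank at least two; let $\eta<1$ be its largest squared Schmidt coefficient across that cut and let $D$ be the dimension of the full space. Viewing the system as $\mathbb{C}^{d_A}\otimes\mathbb{C}^{d_B}$ with $d_Ad_B=D$, the operators $\psi$ and $\psi^\perp=(\mathds{1}-\psi)/(D-1)$ are unchanged; only the partition is coarsened. Every multipartite LOCC protocol is in particular an LOCC protocol for the cut $A|B$, and LOCC $\subseteq$ PPT, so the infimum defining $P_\textnormal{e}^\textnormal{LOCC}$ is over a subset of the PPT operations for $A|B$, whence $P_\textnormal{e}^\textnormal{LOCC}\ge P_\textnormal{e}^\textnormal{PPT}$. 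Applying Proposition~\ref{prop:Exp_LB} (in its rectangular-bipartite form, with $d^2-1$ replaced by $D-1$) gives $P_\textnormal{e}^\textnormal{PPT}\ge\min\{p,1-p\}\,t^{n}$ with $t=\tfrac{1-\eta}{(D-1)\eta}\in(0,1)$. Setting $b:=\log\tfrac{1}{t\,\min\{p,1-p\}}<\infty$ and using $\min\{p,1-p\}\,t^{n}\ge(t\,\min\{p,1-p\})^{n}=\e^{-nb}$ for every $n\ge1$ yields the stated bound, hence $\xi_\textnormal{C}^\textnormal{LOCC}\le b<\infty$ and faithfulness. Combined with the upper bound this gives $0<a\le\xi_\textnormal{C}^\textnormal{LOCC}\le b<\infty$.

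The genuine technical core is Proposition~\ref{prop:Exp_LB}, whose proof is deferred; granting it, the two remaining delicate points are (i) verifying that passing to a coarser partition only \emph{enlarges} the admissible class of operations, so that $P_\textnormal{e}$ is monotone in the direction used above, and (ii) ensuring the PPT lower bound is available for \emph{rectangular} bipartite dimensions, since a generic cut of a multipartite state need not be square. I expect (ii) to be the main obstacle here: it should require stating Proposition~\ref{prop:Exp_LB} for $\mathbb{C}^{d_A}\otimes\mathbb{C}^{d_B}$ rather than only for $\mathbb{C}^{d}\otimes\mathbb{C}^{d}$, or else an embedding argument that faithfully preserves the orthogonal complement $\psi^\perp$.
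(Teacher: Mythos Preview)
Your proof is correct and follows the same approach as the paper: the upper bound via a product (informationally complete) LOCC measurement reduced to the classical Chernoff bound, and the lower bound via coarse-graining to a bipartite cut and invoking Proposition~\ref{prop:Exp_LB}. Your concern (ii) about rectangular dimensions is well spotted---the paper is equally silent on it---but the proof of Proposition~\ref{prop:Exp_LB} goes through verbatim with $d^2-1$ replaced by $d_Ad_B-1$, since the dimension enters only through the normalization of $\psi^\perp$ while the key inequality $\psi^\Gamma\le\eta\,\mathds{1}$ holds regardless of the shape.
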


\begin{remark}
%	The upper bound to $P_\textnormal{e}^\textnormal{LOCC}\left( \psi^{\otimes n} , (\psi^\perp)^{\otimes n}; p \right)$ will be explained in the proof given below.
	The lower bound to $P_\textnormal{e}^\textnormal{LOCC}\left( \psi^{\otimes n} , (\psi^\perp)^{\otimes n}; p \right)$ relies on the converse bound for the bipartite scenario that will be shortly introduced in Proposition~\ref{prop:Exp_LB} of Section~\ref{sec:converse_sym}.
\end{remark}

\begin{remark}
		As we will prove later in Theorem~\ref{theo:MES} of Section~\ref{sec:MES}, the upper bound 
		$P_\textnormal{e}^\textnormal{LOCC}\left( \psi^{\otimes n} , (\psi^\perp)^{\otimes n}; p \right) \leq \frac{1-p}{(d+1)^n}$
		given in Theorem~\ref{theo:faithfulness_Chernoff} is tight when $\psi$ is a bipartite maximally entangled on $\mathds{C}^d\otimes \mathds{C}^d$.
\end{remark}

\begin{proof}[Proof]
%	The rightmost upper bound of $P_\textnormal{e}^\textnormal{LOCC}\left( \psi^{\otimes n} , (\psi^\perp)^{\otimes n}; p \right)$
%	 is proved as follows.
%	Write $P$ and $Q$ as the probability distributions induced by taking an LOCC POVM on $\psi$ and $\psi^\perp$ in each copy, respectively.
%	Then, the classical Chernoff bound gives $a = - \min_{0\leq s\leq 1 } \log \sum_{i} P(i)^{1-s} Q(i)^s$. 
%	{\color{blue} This value is zero} if and only if $P = Q$. However, it is possible to choose an LOCC POVM such that $P \neq Q$, for example the tensor product of tomographically complete local measurements, which is tomographically complete, and observing that $\psi\neq\psi^\perp$. In that case, $a>0$.

	We first prove the upper bound.
	Observe that the pure state $\ket{\psi}$ can be expanded in the computational basis of the first $m-2$ system
	\begin{align*}
		\ket{\psi}=\ket{0\cdots 0}\ket{\phi}_{m-1,m}+\sum_{j\neq 0\cdots 0} \ket{j}\ket{ \phi_j}_{m-1,m}.
	\end{align*}
	Let $\sum_{i=0}^{d-1} \sqrt{\lambda_i} |ii\rangle$ be the Schmidt decomposition 
	of $ |\phi\rangle_{m-1,m}$ for some Schmidt basis $\{|i\rangle\}_{i=0}^{d-1}$, where $d\leq d_{m-1},d_m$.
	For each copy, we measure the first $m-2$ systems in the computational basis and the last two systems in 
	the Schmidt basis corresponding to $ |\phi\rangle_{m-1,m}$ via an LOCC protocol. The resulting probability distributions of measuring $\psi$ is $P$ and the probability distribution of measuring $\psi^\perp$ is
	\begin{align*}
		Q=\frac{d_1d_2\cdots d_m}{d_1d_2\cdots d_m-1}U- \frac{1}{d_1d_2\cdots d_m-1}P
	\end{align*}
	where $U$ is the uniform distribution over $\{0,1,\ldots, d_1d_2\cdots d_m-1 \}$, and the support of $P$ is at most $d_1d_2\cdots d_{m}+d-d_{m-1}d_m$.
	Using the classical Chernoff bound we obtain, for all $\alpha\in[0,1]$,
	\begin{align}
		P_\textnormal{e}^\textnormal{LOCC}\left( \psi^{\otimes n} , (\psi^\perp)^{\otimes n}; p \right) 
		&\leq 
		p^\alpha (1-p)^{1-\alpha}  \left(\sum_{x} \left(P(x)\right)^\alpha \left(Q (x)\right)^{1-\alpha} \right)^n\\
		&\leq
		(1-p)\left( 1-\frac{d_{m-1}d_m-\min\{d_{m-1},d_m\}}{d_1d_2\cdots d_m-1} \right)^n, \quad n\in\mathbb{N},
	\end{align}
	where we simply chose $\alpha = 0$ in the last line to conclude the upper bound.
	
	For the lower bound, note that any multipartite system can be viewed as a bipartite system by grouping the $m$ parties into two nonempty sets. Since we assume $\psi$ to be multi-party entangled, there exists a bipartition with respect to which $\psi$ is entangled (between the two groups). 
	Hence, LOCC POVMs on the multipartite system are contained in the LOCC POVMs on the associated bipartite system. The minimum error probability of the latter is thus a lower bound on the error probability of the former. Then, it suffices to apply the exponential lower bound to the error probability using LOCC POVMs on a bipartite system to complete the proof, as shown in Proposition~\ref{prop:Exp_LB} later in Section~\ref{sec:converse_sym}.
\end{proof}

\begin{remark}
Let us emphasize that if $\psi$ is not entangled, but rather a product state, then 
the distribution $P$ is a singleton, and hence the Chernoff exponent becomes $\infty$ by choosing $\alpha=1/2$.
\end{remark}

\subsection{Exponential Strong Converse Bound for Bipartite Pure States} \label{sec:converse_sym}

The faithfulness of testing arbitrary multipartite entangled state in Theorem~\ref{theo:faithfulness_Chernoff} relies on an exponential lower bound to $P_\text{e}^\text{LOCC}$.
In this section, we present our main proof technique---an upper bound on the PPT-distinguishability norm---in Lemma~\ref{lemm:PPT} below.
Then we will show how it gives exponential converse bounds to the error probability using PPT POVMs (Proposition~\ref{prop:Exp_LB}), which implies an lower bound for LOCC POVMs.
	
We define the PPT-distinguishability norm  \cite{MWW09} for any Hermitian matrix $H$ as
\begin{align}
	\left\|H\right\|_\text{PPT} &:= \sup_{ -\mathbb{1} \leq M, M^\Gamma \leq \mathbb{1} } \Tr\left[ H M \right]. \label{eq:primal_PPT}
\end{align}	
	
\begin{lemm}[Upper bound on PPT-distinguishability norm] \label{lemm:PPT}
	Let $\rho_0$ and $\rho_1$ be arbitrary bipartite density matrices.
	Provided that
	\begin{align} \label{eq:PPT_constraint0}
%		t \cdot \left|\rho_0^\Gamma\right| \leq \rho_1^\Gamma
	-\rho_1^\Gamma \leq t \cdot \rho_0^\Gamma	\leq \rho_1^\Gamma
%	, \text{ and }
%	\rho_1^\Gamma\geq 0,
	\end{align}
	 for some $t \in [0,1]$, then
	\begin{align}
		\|\rho_0^{\otimes n} - \lambda \rho_1^{\otimes n} \|_{\textnormal{PPT}} \leq |1-\lambda t^n | + \lambda(1-t^n), \quad \forall\, \lambda\geq 0.
	\end{align}
\end{lemm}

\begin{remark}
	The condition \eqref{eq:PPT_constraint0} obviously holds for $t=0$ with any PPT state $\rho_1$, but it would only yield a trivial bound:
	$\|\rho_0^{\otimes n} - \lambda \rho_1^{\otimes n} \|_{\textnormal{PPT}} \leq \|\rho_0^{\otimes n} - \lambda \rho_1^{\otimes n} \|_{1} \leq 1+\lambda$.
\end{remark}

\begin{proof}
Our key technique is to succinctly formulate the PPT-distinguishability norm \cite{MWW09} into its dual representation in terms of the Schatten $1$-norm. Then, we find a feasible solution to the dual problem to prove our claim.	
We first derive a dual program to the PPT-distinguishability norm given in \eqref{eq:primal_PPT}.

Since there are four linear inequality constraints in the above primal optimization, we write the Lagrangian $L$ by introducing the corresponding dual variables $A,B,C,D \geq 0$ as
	\begin{align}
		L &= \Tr\left[H M\right] 
		+ \Tr\left[ A(\mathbb{1}-M) \right] + \Tr\left[ B(\mathbb{1}+M) \right] + \Tr\left[ C(\mathbb{1}-M^\Gamma) \right] + \Tr\left[ D(\mathbb{1}+M^\Gamma) \right] \\
		&= \Tr\left[ A+B+C+D \right] + \Tr\left[ M \left( H-A+B-C^\Gamma+D^\Gamma \right) \right].
	\end{align}
	To derive the dual program of Eq.~\eqref{eq:primal_PPT}, we maximize $L$ over all Hermitian matrices $M$ without constraints. Clearly, this is infinite unless $H = A-B+(C-D)^\Gamma$. For such a case, the maximum is simply $\Tr\left[ A+B+C+D \right]$.
	The dual formulation for the PPT-distinguishability norm is
	\begin{align}
		%		\left\| H \right\|_\text{PPT} &= 
		\inf_{A,B,C,D\geq 0} \left\{ \Tr\left[ A+B+C+D \right] : H = A-B + (C-D)^\Gamma
		\right\} .
	\end{align}
	It is easy to see that a strictly feasible solution exists for the dual program; the strong duality holds for the semidefinite program considered here.
	
	Moreover, we can rewrite the above in terms of $1$-norm by denoting Hermitian matrices $X = A-B$ and $Y = (C-D)^\Gamma$ as follows:
	\begin{align} \label{eq:dual_PPT_norm}
		\left\| H \right\|_\text{PPT} &= \inf_{X = X^\dagger,\, Y = Y^\dagger} \left\{ \left\| X \right\|_1 + \left\| Y^\Gamma \right\|_1 : H = X + Y
		\right\}.
	\end{align}
	This is because the optimal decomposition of the absolute value of a matrix is into its positive and negative parts. That is, 
	\begin{align}
		\left\| X \right\|_1 &= \sup_{-\mathbb{1}\leq M \leq \mathbb{1}} \Tr\left[ M X \right] \\
		&= \inf_{A,B\geq 0} \sup_{ M } \left\{ \Tr[MX] + \Tr\left[A\left( \mathbb{1}-M\right)\right] + \Tr\left[B\left( \mathbb{1}+M \right)\right] \right\} \\
		&= \inf_{A,B\geq 0} \left\{ \Tr\left[A+B\right] : X = A-B\right\}.
	\end{align}

	Next, 
	we find a feasible solution to the dual program, \eqref{eq:dual_PPT_norm}, i.e.~ 
	\begin{align}
		H &= \rho_0^{\otimes n} - \lambda \rho_1^{\otimes n},\\
		X &= (1- \lambda t^n) \rho_0^{\otimes n}, \\
		Y &= \lambda \cdot \left( t^n \rho_0^{\otimes n} - \rho_1^{\otimes n} \right).
	\end{align}
	To obtain an upper bound to the dual program, we calculate:
	\begin{align}
		Y^\Gamma &= \lambda \cdot \left( \left( t \rho_0^\Gamma \right)^{\otimes n} - (\rho_1^\Gamma)^{\otimes n} \right) 
		\leq 0,
%		\\
%		&\leq \lambda \cdot \left( \left|\left( t \rho_0^\Gamma \right)^{\otimes n}\right| - (\rho_1^\Gamma)^{\otimes n} \right) \\
%		&\overset{\textnormal{(a)}}{=} \lambda \cdot \left( \left( t \left|\rho_0^\Gamma\right|  \right)^{\otimes n} - (\rho_1^\Gamma)^{\otimes n} \right) \\
%		&\overset{\textnormal{(b)}}{\leq } 0,
	\end{align}
	where the operator inequality follows from the hypothesis \eqref{eq:PPT_constraint0} and Lemma~\ref{lemm:tensor} given below.
%	where (a) is because the matrix absolute commutes with tensor product (which can be seen from the formula of the matrix absolute, $|M|:=\sqrt{M^\dagger M}$, and the multi-linearity of tensor product);
%	(b) follows from the assumption in \eqref{eq:PPT_constraint0}, i.e.~$t  {\color{red}\left|\rho_0^\Gamma\right|} \leq \rho_1^\Gamma$,
%	and the fact that
%	tensor product is matrix monotone on the cone of positive semidefinite matrices.\footnote{
%		{\color{red}
%	To see this, for any $0\leq A \leq A'$ and $0\leq B\leq B'$, one has 
%	$A\otimes B \leq A' \otimes B \leq A'\otimes B'$, where the first operator inequality is because tensoring a positive semidefinite operator $B$ is a positive-preserving map.
%	}
%	}

	Hence, we obtain an upper bound to the dual program:
	\begin{align}
			\| \rho_0^{\otimes n} - \lambda \rho_1^{\otimes n} \|_{\textnormal{PPT}} 
			\leq \|X\|_1 + \left\|Y^\Gamma \right\|_1
			= \Tr[|X|] - \Tr\left[Y^\Gamma\right] = \Tr[|X|] - \Tr[Y]
			= |1-\lambda t^n | + \lambda(1-t^n),
	\end{align}
	concluding the proof.
\end{proof}

\begin{lemm}\label{lemm:tensor}
	For self-adjoint operators $A$ and $Z$ satisfying
	$-A\leq Z \leq A$, then
	\begin{align} \label{eq:tensor}
		-A^{\otimes n} \leq Z^{\otimes n} \leq A^{\otimes n}, \quad \forall\, n\in \mathds{N}.
	\end{align}
\end{lemm}

\begin{proof}
	First, one can assert that $A$ is positive semi-definite because $A\geq -A$ is equivalent to $2 A\geq 0$.
	Furthermore, both the supports of the positive part of $Z$ and the negative parts of $Z$ must be contained in the support of $A$. Otherwise, the hypothesis $-A\leq Z \leq A$ cannot be true.
	Hence, we have
	\begin{align}
		-\mathds{1} \leq A^{-\frac12} Z A^{-\frac12} \leq \mathds{1},
	\end{align}
	where the inverse of $A$ is taken on the support of $A$.
	Since the eigenvalues of $A^{-\frac12} Z A^{-\frac12}$ is bounded in $[-1,1]$, the eigenvalues of the tensor power are also bounded in $[-1,1]$ via multiplication.
	Namely, we obtain
	\begin{align}
		-\mathds{1}^{\otimes n} &\leq \left(A^{-\frac12} Z A^{-\frac12} \right)^{\otimes n}
		\\
		&= \left( A^{\otimes n} \right)^{-\frac12} Z^{\otimes n} \left( A^{\otimes n} \right)^{-\frac12} 
		\\
		&\leq \mathds{1}^{\otimes n},
	\end{align}
	which, in turn, implies the desired inequality in \eqref{eq:tensor}.
\end{proof}

If \eqref{eq:PPT_constraint0} in Lemma~\ref{lemm:PPT} holds for some positive factor $t>0$, then such a factor will play a crucial role in providing a fundamental limit, for which the error probability under the PPT measurements cannot decrease too fast (i.e., the converse bound).
%For $(\rho_0,\rho_1) = (\Phi_d, \Phi_d^\perp)$, one has $\mu = \frac{1}{d+1}$;
%for $(\rho_0,\rho_1) = (\sigma_{\text{a},d}, \sigma_{\text{s},d})$, one has $\mu =\frac{d-1}{d+1}$.

\begin{prop}[Exponential lower bounds for PPT POVMs] \label{prop:Exp_LB}
Let $\rho_0$ and $\rho_1$ be arbitrary bipartite density matrices.
Provided that \eqref{eq:PPT_constraint0} holds for some $t\in(0,1]$,  the error probability of testing multi-copies of $\rho_0$ and $\rho_1$
using LOCC POVMs is lower bounded by
\begin{align} \label{eq:PPT_lower_bound}
	P_\textnormal{e}^\textnormal{LOCC}\left( \rho_0^{\otimes n} , \rho_1^{\otimes n}; p \right) \geq P_\textnormal{e}^\textnormal{PPT}\left( \rho_0^{\otimes n} , \rho_1^{\otimes n}; p \right)  
	\geq \min\{(1-p) \cdot t^n, p\} 
	\geq \min\{1-p, p \} \cdot t^n, \quad \forall n\, \in\mathbb{N}
\end{align}
Additionally, for any $r> \log \tfrac{1}{t} $, we have
\begin{align}
	\alpha_{\e^{-nr}}^\textnormal{LOCC}\left( \rho_0^{\otimes n} , \rho_1^{\otimes n} \right) \geq
	\alpha_{\e^{-nr}}^\textnormal{PPT}\left( \rho_0^{\otimes n} , \rho_1^{\otimes n} \right) \geq 1 - \e^{-n\left[ r - \log \frac{1}{t} \right]}, \quad \forall n\in\mathbb{N},
\end{align}	
where the optimal type-I error $\alpha_{\mu}^\textnormal{X}$ is defined in Eq.~\eqref{eq:trade-off_alpha} of Section~\ref{sec:notation}.

In particular for the case of $ \rho_0 = \psi $ being a bipartite entangled pure state on $\mathbb{C}^d\otimes \mathbb{C}^d$ and $ \rho_1 = \psi^\perp := \tfrac{\mathbb{1}-\psi}{d^2-1}$ being its orthogonal complement, then
$t = \tfrac{1-\eta}{(d^2-1) \eta } \in (0,\tfrac{1}{d+1}]$ holds in \eqref{eq:PPT_constraint0}, where $\eta$ denotes the largest squared Schmidt coefficient of $\psi$, i.e.,~$\eta := \|\Tr_A[\psi]\|_\infty \in [\tfrac{1}{d}, 1)$.
\end{prop}

\begin{remark}
	Proposition~\ref{prop:Exp_LB} applies to testing general bipartite quantum states, albeit we only focus on testing orthogonal states in this paper.
	Furthermore, as mentioned in Theorem~\ref{theo:faithfulness_Chernoff}, the converse bound for testing bipartite states considered in Proposition~\ref{prop:Exp_LB} implies a converse bound for testing arbitrary multipartite states.
	Hence, Proposition~\ref{prop:Exp_LB} provides a general method to witness the faithfulness of a Chernoff bound. 
	Later in Section~\ref{sec:optimal_LOCC}, we show that Proposition~\ref{prop:Exp_LB} is actually tight for certain scenarios.
\end{remark}

\begin{proof}[Proof]

We first prove the claim for symmetric hypothesis testing.
We apply Lemma~\ref{lemm:PPT} with $\lambda = \frac{1-p}{p} \geq 0$ to lower bound the error probability under PPT POVM:
	\begin{align}
		P_\text{e}^\text{PPT} \left( \rho_0^{\otimes n} , \rho_1^{\otimes n};  p \right) &= 
		\inf_{T, (\mathbb{1}-T) \in \text{PPT} } \;  p \Tr\left[ \psi^{\otimes n} (\mathbb{1}-T)\right] + (1-p) \Tr\left[ (\psi^\perp)^{\otimes n}  T \right] \\
		&= 
		\inf_{T, (\mathbb{1}-T) \in \text{PPT} } \; \frac12 + \Tr\left[\left( p \rho_0^{\otimes n} - (1-p) \rho_1^{\otimes n} \right)\left(2T-\mathbb{1}\right)\right]\\
		&=
		\frac12 \left( 1 - \left\| p \rho_0^{\otimes n} - (1-p) \rho_1^{\otimes n} \right\|_\text{PPT}\right) \\ \label{eq:PPT_error}
		&\geq \frac12 \left( 1 - \left| p - (1-p)\cdot t^n \right| - (1-p) \cdot \left(1 - t^n \right) \right) \\
		&= \frac12 \left( p + (1-p) t^n - \left| p - (1-p) \cdot t^n \right| \right) \\
		&=  \min\left\{ (1-p) \cdot t^n , p 
		\right\},
	\end{align}
yielding the first claim.

For asymmetric hypothesis testing, we assert that, for all $\mu\geq 0$,
\begin{align} \label{eq:alpha_variational0}
	\alpha_{\mu}^\textnormal{{PPT}} \left(\rho, \sigma \right) = \max_{\lambda\geq 0} \left\{ \frac12\left( 1+\lambda - \left\| \rho - \lambda \sigma \right\|_\textnormal{PPT} \right) 
	- \lambda \mu \right\}.
\end{align}
This is proved by the following:
\begin{align}
	\alpha_{\mu}^\textnormal{{PPT}} \left(\rho, \sigma \right) &:= \inf_{ T\in \textnormal{{PPT}} } \left\{ \Tr\left[\rho(\mathds{1}- T)  \right] : \Tr\left[\sigma T \right] \leq \mu \right\} \\
	&= \inf_{ T\in \textnormal{{PPT}} } \max_{\lambda\geq 0 }  \left\{ \Tr\left[\rho(\mathds{1}- T)  \right]  + \lambda \Tr\left[\sigma T \right] - \lambda \mu	\right\} \\
	&\overset{ \textnormal{(a)}}{=} \max_{\lambda\geq 0 } \inf_{ T\in \textnormal{{PPT}} } \left\{ \Tr\left[\rho(\mathds{1}- T)  \right]  + \lambda \Tr\left[\sigma T \right] - \lambda \mu	\right\} \\
	&\overset{}{=} \max_{\lambda\geq 0 } \inf_{ T\in \textnormal{{PPT}} } \left\{ \frac12\left( 1+\lambda - \Tr \left[ (\rho - \lambda \sigma) ( 2T - \mathds{1}  ) \right] \right) - \lambda \mu \right\} \label{eq:alpha_variational1} \\			
	&\overset{\textnormal{(b)}}{=} \max_{\lambda\geq 0} \left\{ \frac12\left( 1+\lambda - \left\| \rho - \lambda \sigma \right\|_\textnormal{{PPT}} \right) 
	- \lambda \mu \right\}, \label{eq:alpha_variational2} 
\end{align}
where in (a) we 
change the order of inf and max by Sion's minimax theorem and noting that the objective function is linear in $\lambda$ and $T$, respectively;
and (b) follows from the PPT-distinguishability norm given in Eq.~\eqref{eq:primal_PPT}.

Application of Lemma~\ref{lemm:PPT} with $\lambda = t^{-n}$ to \eqref{eq:alpha_variational2} leads us to
\begin{align}
	\alpha_{\exp\{-nr\}}^\textnormal{PPT}\left( \psi^{\otimes n} , (\psi^\perp)^{\otimes n} \right) &\geq 
	\frac12\left[ 1+\lambda  - |1 - \lambda t^n |  -  \lambda(1-t^n)  \right] - \lambda \e^{-nr}  \\
	&= 1 - t^{-n} \e^{-nr},
\end{align}
which shows the second claim.

Lastly, to see  $t =  \frac{1-\eta }{(d^2-1) \eta  } \in ( 0, \frac{1}{d+1}]$ holds for \eqref{eq:PPT_constraint0} with $\eta = \|\Tr_A[\psi]\|_\infty \in [\tfrac{1}{d}, 1)$, we verify the following stronger condition:
\begin{align}
	(\psi^\perp)^\Gamma - t \cdot  \left|\psi^\Gamma \right|  \geq 0,
\end{align}
which implies that \eqref{eq:PPT_constraint0} must be fulfilled.
Namely,
\begin{align}
	(\psi^\perp)^\Gamma - t \cdot  \left|\psi^\Gamma \right|  
	&= \frac{1}{d^2-1} \left( \mathds{1} - \psi^\Gamma - \frac{1-\eta}{\eta} \left|\psi^\Gamma \right|  \right) \\
	&\geq \frac{1}{d^2-1} \left( \mathds{1} - \left|\psi^\Gamma\right| - \frac{1-\eta}{\eta} \left|\psi^\Gamma \right|  \right) \\
	&= \frac{1}{d^2-1} \left( \mathds{1} - \frac{1}{\eta} \left|\psi^\Gamma \right|  \right) \\
	&\geq 0.
\end{align}
Here, the last inequality holds because  $\psi^\Gamma$ has eigenvalues $\pm\sqrt{\lambda_i \lambda_j}$ for $i\neq j$ and $\lambda_i$, where $\{\lambda_i\}_{1\leq i\leq d}$ denote the squared Schmidt coefficients of $\psi$ \cite[Lemma 14]{YDY14}; 
then the eigenvalues of $\frac{1}{\eta} \left|\psi^\Gamma \right| $ are $\frac{\sqrt{\lambda_i \lambda_j}}{ \max_{1\leq k\leq d} \lambda_k  } \leq 1$.
This concludes the proof.

%
%\begin{align} \label{eq:pure_state_compare}
%	t \cdot {\color{red} \left|\psi^\Gamma \right|  }
%%	\frac{1-\eta }{(d^2-1) \eta  } 
%	\leq (\psi^\perp)^\Gamma = \frac{1}{d^2-1} (\mathds{1}- \psi^\Gamma)
%\end{align}
%for $t =  \frac{1-\eta }{(d^2-1) \eta  } \in ( 0, \frac{1}{d+1}]$ and $\eta = \|\Tr_A[\psi]\|_\infty \in [\tfrac{1}{d}, 1)$,
%it suffices to compare the eigenvalue values on both sides of \eqref{eq:pure_state_compare} because the operators therein commute with $\psi^\Gamma$.
%First note that $\psi^\Gamma$ has eigenvalues $\pm\sqrt{\lambda_i \lambda_j}$ for $i\neq j$ and $\lambda_i$, where $\{\lambda_i\}_{1\leq i\leq d}$ denote the squared Schmidt coefficients of $\psi$ \cite[Lemma 14]{YDY14}.
%Then, $\mathbb{1} - \psi^\Gamma \geq 0$. 
%Moreover, by the choice of $\eta\geq \tfrac1d \geq \sqrt{\lambda_i \lambda_j}$ for all $1\leq i,j\leq d$, we have $\tfrac{\psi^\Gamma}{\eta} \leq \mathds{1}$, which, in turn, implies that
%\begin{align}
%t \cdot \psi^\Gamma =  \frac{1}{d^2-1}\left(\frac{\psi^\Gamma }{\eta} - \psi^\Gamma\right) \leq  \frac{1}{d^2-1}\left(\mathbb{1}- \psi^\Gamma\right),
%\end{align}
%concluding the proof.
\end{proof}

Proposition~\ref{prop:Exp_LB} yields the faithfulness of the Chernoff exponent for distinguishing arbitrary pure bipartite entangled state against its orthogonal complement.
Moreover, the type-I error converges to $1$ exponentially fast as long 
as the exponential decay rate of the type-II error exceeds $\log \tfrac{1}{t}$. 
This result gives a strong converse bound for the Stein exponent of distinguishing 
$ \psi^{\otimes n}$ against $(\psi^\perp)^{\otimes n}$.

\begin{coro}[Strong converse bound for the Stein exponent] \label{coro:general_sc}
	Consider any pure bipartite entangled state $\psi$ on $\mathbb{C}^d\otimes \mathbb{C}^d$ and its orthogonal complement $\psi^\perp := \tfrac{\mathbb{1}-\psi}{d^2-1}$.	
	Let $t:= \frac{1-\eta }{(d^2-1) \eta  } \in ( 0, \frac{1}{d+1}]$ and $\eta := \|\Tr_A[\psi]\|_\infty \in [\tfrac{1}{d}, 1)$.	
	Then,
		\begin{align}
		&\mathrm{Chernoff}^\textnormal{LOCC} \left( \psi , \psi^\perp\right) 
		\leq\mathrm{Chernoff}^\textnormal{PPT} \left( \psi , \psi^\perp\right)
		\leq \log \tfrac{1}{t}	< \infty;
		\\			
		&\mathrm{SC}^\textnormal{PPT}\left( \psi , \psi^\perp, r \right) \geq r - \log \tfrac{1}{t}, \quad \forall r> \log \tfrac{1}{t};
		\\
		&\mathrm{Stein}^\textnormal{PPT}\left( \psi^{\otimes n} , (\psi^\perp)^{\otimes n}, \varepsilon \right) \leq \log \tfrac{1}{t} \in \left[ \log (d+1), \infty \right), \quad \forall \varepsilon\in(0,1).
	\end{align}
\end{coro}

\section{Optimal LOCC Protocols for Binary Hypothesis Testing} \label{sec:optimal_LOCC}

%\begin{align} \label{eq:HT}
%	\begin{dcases}
%		\mathsf{H}_0 : \rho_0^{\otimes n} \\
%		\mathsf{H}_1 : \rho_1^{\otimes n } \\
%	\end{dcases}, \quad \forall n\in\mathbb{N}.
%\end{align}

In Section~\ref{sec:general}, we established exponential lower bounds to the optimal errors using PPT POVMs. This section will show a sufficient condition under which such lower bounds are achievable by (one-way) LOCC POVMs (Theorem~\ref{theo:sufficient_condition}). Namely, it characterizes a sufficient condition when the three classes, LOCC, SEP, and PPT, collapse for testing orthogonal bipartite states. In Section~\ref{sec:MES} and Section~\ref{sec:Werner}, we will show that the condition holds for testing a maximally entangled state against its orthogonal complement and testing extremal Werner states. Hence, the corresponding optimal LOCC protocols in symmetric and asymmetric hypothesis testing are obtained.

We start with an achievability result.
\begin{prop}[Achievability] \label{prop:achievability}
	Let $\rho_0$ and $\rho_1$ be arbitrary orthogonal states,
	and let
		$\Pi_{\rho_i}$ denotes the projection onto the support of $\rho_i$.
	
	If there exists a $t\in(0,1]$ satisfying
	\begin{align} \label{eq:Achivability_condition}
		\left\{ M = t \cdot \Pi_{\rho_1} + \Pi_{\rho_0}, \mathds{1} - M  = (1-t) \cdot \Pi_{\rho_1}   \right\} \in \textnormal{X},
	\end{align}
	then, for any $p\in(0,1)$ and $n\in\mathds{N}$,
	\begin{align}
		P_\textnormal{e}^\textnormal{X} \left( \rho_0^{\otimes n}, \rho_1^{\otimes n}; p \right) \leq \min\left\{ (1-p)\cdot t^n, p \right\},
	\end{align}
	and, for any $\alpha \in [0,1]$,
	\begin{align}
		\beta_\alpha^\textnormal{X} \left( \rho_0^{\otimes n}, \rho_1^{\otimes n} \right) \leq (1-\alpha) \cdot t^n.
	\end{align}
	Here, 
%	$\Pi_{\rho_i}$ denotes the projection onto the support of $\rho_i$, and 
	`\textnormal{X}' means a restricted class of measurements.
\end{prop}
\begin{proof}
	For symmetric hypothesis testing, we choose $\{M^{\otimes n}, \mathds{1} - {M}^{\otimes n}\}$ for testing by the assumption in \eqref{eq:Achivability_condition}.
	Note that from \eqref{eq:Achivability_condition},
	\begin{align}
		M^{\otimes n} = \sum_{k=0}^n t^{n-k} \Lambda_k,
	\end{align}
	where $\Lambda_k$ denotes the sum of all elements of $\{ \Pi_{\rho_0}, \Pi_{\rho_1}  \}^{\otimes n}$ that have $k$ copies of $\Pi_{\rho_0}$.
	Then,
	\begin{align}
		P_\textnormal{e}^\textnormal{X} \left( \rho_0^{\otimes n}, \rho_1^{\otimes n}; p \right) 
		&\leq p \Tr\left[ (\mathds{1}- M^{\otimes n}) \rho_0^{\otimes n} \right] + (1-p) \Tr\left[ M^{\otimes n} \rho_1^{\otimes n} \right] \\
		&= p \Tr\left[ (\mathds{1}- \Lambda_n) \rho_0^{\otimes n} \right] + (1-p) \Tr\left[ \Lambda_0 \rho_1^{\otimes n} \right] \cdot t^n \\
		&= (1-p) \cdot t^n.
	\end{align}
	
	On the other hand, if the above upper bound is strictly greater than $p$, we simply choose measurement $\{ \mathds{O}, \mathds{1}\}$, i.e., the test always opts for $\mathsf{H}_1$.
	This shows the first claim.
	
	Next, we move on to asymmetric hypothesis testing.
	For every $\alpha \in [0,1]$, we choose the test
	\begin{align}
		T_n := (1-\alpha) \cdot M^{\otimes n} = (1-\alpha) \cdot \sum_{k=0}^n t^{n-k} \Lambda_k. \label{eq:optimal_LOCC}
	\end{align}
%	By direct calculation, we obtain

	Observing that $\Tr[M \rho_0] = 1$ and $\Tr[M \rho_1 ]=t$, we obtain
	\begin{align}
		\alpha_n(T_n) &= \Tr\left[ \left( \mathds{1}- T_n \right) \rho_0^{\otimes n} \right] 
		= 1 - (1-\alpha)\Tr[\rho_0^{\otimes n} M^{\otimes n }]
%		&= (1-\alpha)\cdot \Tr\left[ \left( \mathds{1} - M^{\otimes n} \right) \rho_0^{\otimes n}	\right] + \alpha \cdot \Tr\left[ \mathds{1} \cdot \rho_0^{\otimes n} \right] \\
%		&= (1-\alpha)\cdot \Tr\left[ \left( \mathds{1} - \Pi_{\rho_0}^{\otimes n} \right) \rho_0^{\otimes n}	\right] + \alpha  \\
		= \alpha; \label{eq:alpha_eps}\\
		\beta_n(T_n) &= \Tr\left[ T_n \rho_1^{\otimes n} \right]
		= (1-\alpha) \cdot \Tr\left[ M^{\otimes n} \rho_1^{\otimes n} \right]
%		&= (1-\alpha) \cdot \Tr\left[ M^{\otimes n} \rho_1^{\otimes n} \right] \\
%		&= (1-\alpha) \cdot \Tr\left[ \Pi_{\rho_1}^{\otimes n} \rho_1^{\otimes n} \right] \cdot t^n\\
		= (1-\alpha) \cdot t^n. \label{eq:beta_eps}
	\end{align}

	Equations \eqref{eq:alpha_eps} and \eqref{eq:beta_eps} imply that $(\alpha, \beta_n(T_n)) \in \mathcal{R}(\rho_0^{\otimes n}, \rho_1^{\otimes n})$.
	Hence, we have $\beta_\alpha^\text{X}(\rho_0^{\otimes n}, \rho_1^{\otimes n}) \leq \beta_n({T_n})= (1-\alpha) \cdot t^n$ as desired. This concludes the proof.
\end{proof}

Combining Proposition~\ref{prop:achievability} and
Proposition~\ref{prop:Exp_LB} in Section~\ref{sec:converse_sym} and noting $\text{LOCC}\subset\text{SEP}\subset\text{PPT}$, we obtain the main result of this section, namely, a sufficient condition of achieving the optimal average error probability using LOCC protocols in symmetric hypothesis testing and the optimal trade-off in asymmetric hypothesis testing. 

\begin{theo}[A sufficient condition for optimal LOCC protocols] \label{theo:sufficient_condition}
	Let $\rho_0$ and $\rho_1$ be arbitrary orthogonal bipartite states.
	If there exists a $t\in(0,1]$ satisfying
	\begin{subequations}
		\begin{align}
%		\begin{dcases}
			&\left\{ M = t \cdot \Pi_{\rho_1} + \Pi_{\rho_0}, \mathds{1} - M  = (1-t) \cdot \Pi_{\rho_1}   \right\} \in \textnormal{LOCC}, \label{eq:condition_LOCC} 
			\\
			&
%			t \cdot {\color{red} \left| \rho_0^\Gamma \right| } \leq \rho_1^\Gamma,
	-\rho_1^\Gamma \leq t \cdot \rho_0^\Gamma	\leq \rho_1^\Gamma,
%	 \text{ and }
%\rho_1^\Gamma\geq 0,
			\label{eq:condition_PPT} 
%			\\
%		\end{dcases},
		\end{align}
	\end{subequations}
	then, for any $n\in\mathds{N}$,
	\begin{align}
		P_\textnormal{e}^\textnormal{PPT} (\rho_0^{\otimes n}, \rho_1^{\otimes n}; p ) = P_\textnormal{e}^\textnormal{SEP} (\rho_0^{\otimes n}, \rho_1^{\otimes n}; p ) = P_\textnormal{e}^\textnormal{LOCC} (\rho_0^{\otimes n}, \rho_1^{\otimes n}; p ) = \min\left\{ (1-p) \cdot t^n, p 
		\right\},
	\end{align}
	and, for any $\alpha \in [0,1]$,
	\begin{align}
		\beta_\alpha^\textnormal{PPT}(\rho_0^{\otimes n}, \rho_1^{\otimes n})  
		= \beta_\alpha^\textnormal{SEP}(\rho_0^{\otimes n}, \rho_1^{\otimes n}) 
		= \beta_\alpha^\textnormal{LOCC}(\rho_0^{\otimes n}, \rho_1^{\otimes n}) 
		= (1-\alpha)\cdot t^n.
	\end{align}
\end{theo}

In Section~\ref{sec:MES}, we will show that $t = \frac{1}{d+1}$ for testing maximally entangled state against its orthogonal complement, and show that $t = \frac{d-1}{d+1}$ for testing extremal Werner states in Section~\ref{sec:Werner}.

\subsection{Testing Maximally Entangled States} \label{sec:MES}

This section aims to test the following hypotheses of maximally entangled state against its orthogonal complement,
\begin{align} \label{eq:MES}
\begin{dcases}
\mathsf{H}_0 : \rho_0^{\otimes n} = \Phi_d^{\otimes n} & \\
\mathsf{H}_1 : \rho_1^{\otimes n } = \left( \Phi_d^\perp \right)^{\otimes n} := \left( \frac{\mathds{1}_{d^2}-\Phi_d}{d^2-1} \right)^{\otimes n} & \\
\end{dcases}, \quad \forall n\in\mathbb{N}.
\end{align}
We establish the optimal average error probability and  the optimal trade-off between the two types 
of errors, type-I and type-II in Theorem~\ref{theo:MES} below.
The Chernoff exponent (Corollary~\ref{coro:MES}) in the 
symmetric setting, the Stein, Hoeffding, and strong converse 
exponents (Corollaries~\ref{coro:MES_Stein}, \ref{coro:MES_Hoeffding}, and \ref{coro:MES_sc}) follow immediately.

\begin{theo} \label{theo:MES}
	Consider the binary hypotheses given in Eq.~\eqref{eq:MES}.
	The following hold for any $n\in\mathds{N}$:
	\begin{align}
		P_\textnormal{e}^\textnormal{PPT} (\rho_0^{\otimes n}, \rho_1^{\otimes n}; p ) = P_\textnormal{e}^\textnormal{SEP} (\rho_0^{\otimes n}, \rho_1^{\otimes n}; p ) = P_\textnormal{e}^\textnormal{LOCC} (\rho_0^{\otimes n}, \rho_1^{\otimes n}; p ) = \min\left\{ (1-p) \left( \frac{1}{d+1} \right)^n, p 
		\right\},
	\end{align}
	and, for any $\alpha\in[0,1]$,
	\begin{align}
	\beta_\alpha^\textnormal{PPT}(\rho_0^{\otimes n}, \rho_1^{\otimes n})  
	= \beta_\alpha^\textnormal{SEP}(\rho_0^{\otimes n}, \rho_1^{\otimes n}) 
	= \beta_\alpha^\textnormal{LOCC}(\rho_0^{\otimes n}, \rho_1^{\otimes n}) 
	= \frac{1-\alpha}{(d+1)^n}.
	\end{align}
\end{theo}

\begin{proof}
		First, we verify condition \eqref{eq:condition_LOCC} in Theorem~\ref{theo:sufficient_condition}.
		For each copy, Alice and Bob measure on the computational basis and compare their measurement outcomes. If they agree, we claim that $\Phi_d^{\otimes n}$ is the true hypothesis.
		We choose such a measurement strategy is because the maximally entangled state is likely to give concordant measurement outcomes due to symmetry.
		Mathematically, this strategy is described by the two-outcome measurement:
	%	we perform the two-outcome measurement by
	\begin{align}
			\left\{  \sum_{i=0}^{d-1} |ii\rangle\langle ii|, \, \sum_{i\neq j}^{d-1} |ij\rangle\langle ij|
			\right\}.
	\end{align}
	Since the states in the hypotheses are both $U\otimes U^*$-invariant (where `$*$' means the complex conjugate), after the twirling operation, we have the measurement
	\begin{align} \label{eq:Md}
		\left\{M_d =  \Phi_d + \frac{1}{d+1} \left( \mathds{1} - \Phi_d \right), \,
		\mathds{1} - M_d = \frac{d}{d+1}\left( \mathds{1} - \Phi_d \right)
		\right\}.
	\end{align}
	Equivalently, the choice of $M_d$ is implementable by a (one-way) LOCC protocol, and hence \eqref{eq:condition_LOCC} is satisfied with $t = \frac{1}{d+1}$.
	
	Next, 
	we verify that 
	\begin{align}
			\frac{1}{d+1} \left|\Phi_d^\Gamma\right| \leq (\Phi_d^\perp)^\Gamma,
	\end{align}
	and hence 
	the condition \eqref{eq:condition_PPT} in Theorem~\ref{theo:sufficient_condition} with $t= \frac{1}{d+1}$ is fulfilled for the sake of completeness, i.e.~
	\begin{align}
		\frac{1}{d+1} \left|\Phi_d^\Gamma\right| - (\Phi_d^\perp)^\Gamma &= \frac{1}{d^2-1} \left( (d-1) \left|\Phi_d^\Gamma\right| - (\mathds{1}-\Phi_d^\Gamma) \right) \\
		&\leq \frac{1}{d^2-1} \left( (d-1) \left|\Phi_d^\Gamma\right| - \mathds{1} + \left|\Phi_d^\Gamma\right| \right) \\
		&= \frac{1}{d^2-1} \left( d \left|\Phi_d^\Gamma\right| - \mathds{1}  \right) \\
		&=  0,
	\end{align}
	which concludes the proof.
%	\vspace{-\baselineskip}
\end{proof}

\begin{remark}
	Theorem~\ref{theo:MES} shows that our result of exponential lower bound for testing arbitrary multipartite entangled pure state against its orthogonal complement given in Proposition~\ref{prop:Exp_LB} of Section~\ref{sec:converse_sym} is tight for maximally entangled state $\Phi_d$.
\end{remark}

From the definition in Eq.~\eqref{eq:Chernoff_X}, we obtain our main result for the Chernoff exponents:
\begin{coro}[Chernoff exponent] \label{coro:MES}
	Consider the binary hypothesis given in Eq.~\eqref{eq:MES}. For every $0<p<1$, we have 
	\begin{align}
		\mathrm{Chernoff}^\textnormal{PPT} (\rho_0, \rho_1) = \mathrm{Chernoff}^\textnormal{SEP}  (\rho_0, \rho_1) = \mathrm{Chernoff}^\textnormal{LOCC} (\rho_0, \rho_1) = \log (d+1).
	\end{align}
\end{coro}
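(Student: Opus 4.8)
The plan is to substitute the exact error-probability formula supplied by Theorem~\ref{theo:MES} directly into the definition~\eqref{eq:Chernoff_X} of the Chernoff exponent. Because that theorem asserts that the PPT, SEP, and LOCC optimal average error probabilities coincide and all equal $\min\{(1-p)(1/(d+1))^n,\,p\}$, the three exponents are identical, and it suffices to evaluate the single limit
\begin{align}
\lim_{n\to\infty} -\frac1n \log \min\left\{ (1-p)\left(\frac{1}{d+1}\right)^n,\; p \right\}.
\end{align}

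First I would dispose of the outer minimum by an asymptotic argument. For any fixed prior $p\in(0,1)$ and fixed $d$, the term $(1-p)(1/(d+1))^n$ decays to $0$ as $n\to\infty$, whereas $p$ is a fixed positive constant; hence there is some $N$ (depending only on $p$ and $d$) such that for all $n\ge N$ the minimum is attained by the first, exponentially decaying branch, so that $\min\{(1-p)(1/(d+1))^n,p\} = (1-p)(1/(d+1))^n$. Since the Chernoff exponent is an asymptotic quantity, only this large-$n$ regime is relevant. Then for $n\ge N$ I would compute
\begin{align}
-\frac1n \log\left[ (1-p)\left(\frac{1}{d+1}\right)^n \right] = -\frac1n \log(1-p) + \log(d+1),
\end{align}
and observe that the prefactor contribution $-\tfrac1n\log(1-p)$ vanishes in the limit, leaving $\log(d+1)$.

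I do not anticipate any real obstacle: the entire analytic content is carried by Theorem~\ref{theo:MES}, and what remains is an elementary limit. The only point deserving a line of care is the switch of the minimum from the constant branch $p$ (for small $n$) to the decaying branch (for large $n$), together with the remark that the $n$-independent constant $(1-p)$ is asymptotically negligible and therefore does not affect the exponent. The equality of the three exponents is then an immediate consequence of the equality of the three error probabilities, completing the proof.
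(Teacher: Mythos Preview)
Your proposal is correct and follows exactly the approach indicated in the paper: the paper simply states that the corollary follows from Theorem~\ref{theo:MES} via the definition~\eqref{eq:Chernoff_X}, and your argument fills in precisely the elementary limit computation that this entails.
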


The optimal trade-off between type-I and type-II errors (Theorem~\ref{theo:MES}) gives the results of the Stein exponent (Corollary~\ref{coro:MES_Stein}), the Hoeffding exponent (Corollary~\ref{coro:MES_Hoeffding}),  and strong converse exponent (Corollary~\ref{coro:MES_sc}) as outlined below.
%We note that generally (e.g.~for the case of all POVMs) the Stein lemma does not imply the Hoeffding exponent and the strong converse exponent since the former lies in the small deviation region, whereas the latter are in the large deviation region which would require additional techniques \cite{CH17}, \cite[Section 5.6]{ANS+08}.

\begin{coro}[Stein exponent] \label{coro:MES_Stein}
	%	Consider the binary hypothesis given in Eq.~\eqref{eq:MES}.
	The following Stein exponents hold.
	\begin{enumerate}[(a)]
		\item\label{MES_Stein:a} Consider the binary hypothesis testing: $\rho_0 = \Phi_d$ and $\rho_1 = \Phi_d^{\perp}$. Then,
		\begin{align}
			\mathrm{Stein}^\textnormal{PPT}(\rho_0, \rho_1, \eps) = \mathrm{Stein}^\textnormal{SEP}(\rho_0, \rho_1, \eps) = \mathrm{Stein}^\textnormal{LOCC}(\rho_0, \rho_1,\eps) = \log (d+1), \quad \forall \eps \in [0,1).
		\end{align}
		
		\item\label{MES_Stein:b} Consider the binary hypothesis testing: $\rho_0 = \Phi_d^{\perp}$ and $\rho_1 = \Phi_d$. Then,
		\begin{align}
			\mathrm{Stein}^\textnormal{PPT}(\rho_0, \rho_1, \eps) = \mathrm{Stein}^\textnormal{SEP}(\rho_0, \rho_1, \eps) = \mathrm{Stein}^\textnormal{LOCC}(\rho_0, \rho_1,\eps) = \infty, \quad \forall \eps \in [0,1).
		\end{align}
	\end{enumerate}	
\end{coro}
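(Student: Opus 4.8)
The plan is to read both Stein exponents directly off the optimal trade-off function $\beta_\alpha^\textnormal{X}$ computed in Theorem~\ref{theo:trade-off}, handling the two directions separately, since the problem is not symmetric under swapping $\Phi_d$ and $\Phi_d^\perp$.

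For the first part ($\rho_0=\Phi_d$, $\rho_1=\Phi_d^\perp$) I would start from the closed form $\beta_\alpha^\textnormal{X}(\Phi_d^{\otimes n},(\Phi_d^\perp)^{\otimes n})=\frac{1-\alpha}{(d+1)^n}$, valid simultaneously for $\textnormal{X}\in\{\textnormal{PPT},\textnormal{SEP},\textnormal{LOCC}\}$. Because this function decreases in $\alpha$, the smallest type-II error compatible with the constraint $\alpha_n\leq\eps$ is attained at $\alpha_n=\eps$, giving
\begin{align}
\inf_{T,\,\mathds{1}-T\in\textnormal{X}}\left\{\beta_n(T):\alpha_n(T)\leq\eps\right\}=\frac{1-\eps}{(d+1)^n}.
\end{align}
Substituting into the definition \eqref{eq:Stein} yields $-\frac1n\log\frac{1-\eps}{(d+1)^n}=\log(d+1)-\frac1n\log\frac{1}{1-\eps}$, and letting $n\to\infty$ annihilates the $O(1/n)$ term (also at $\eps=0$), leaving $\log(d+1)$ for all three classes at once. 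The only thing to note is that the additive constant $\log\frac{1}{1-\eps}$ is independent of $n$, which is exactly why the Stein exponent is $\eps$-independent.

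For the second part ($\rho_0=\Phi_d^\perp$, $\rho_1=\Phi_d$) the cleanest route is to exhibit a single LOCC test whose type-II error vanishes identically. I would reuse the local measurement $\{\mathds{1}-G_d,G_d\}$ with $G_d=\sum_{i\neq j}|ij\rangle\langle ij|$ from the proof of Proposition~\ref{prop:LOCC}, whose outcome $\mathds{1}-G_d=\sum_i|ii\rangle\langle ii|$ I call a coincidence, and accept $\mathsf{H}_1=\Phi_d$ exactly when all $n$ copies are coincidences, i.e.~take $\mathds{1}-T_n=(\mathds{1}-G_d)^{\otimes n}$. From the single-copy identities $\Tr[\Phi_d(\mathds{1}-G_d)]=1$ and $\Tr[\Phi_d^\perp(\mathds{1}-G_d)]=\frac{1}{d+1}$ one gets $\beta_n(T_n)=1-\Tr[\Phi_d^{\otimes n}(\mathds{1}-G_d)^{\otimes n}]=0$ and $\alpha_n(T_n)=\Tr[(\Phi_d^\perp)^{\otimes n}(\mathds{1}-G_d)^{\otimes n}]=(d+1)^{-n}$. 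Since $\alpha_n\to0$, the constraint $\alpha_n\leq\eps$ holds for every $\eps\in(0,1)$ once $n$ is large, while $\beta_n=0$ forces $-\frac1n\log\beta_n=+\infty$; hence $\xi_\textnormal{S}^\textnormal{LOCC}=\infty$. The inclusions $\textnormal{LOCC}\subset\textnormal{SEP}\subset\textnormal{PPT}$ only enlarge the feasible set in the supremum \eqref{eq:Stein}, so the SEP and PPT exponents are at least as large and therefore also infinite.

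I do not expect a real obstacle here, as Theorem~\ref{theo:trade-off} already carries the analytic weight; the main point requiring care is the asymmetry between the two directions. In particular, the reversed test attains $\beta_n=0$ \emph{exactly} rather than merely superexponentially, which is precisely what makes the second Stein exponent genuinely infinite, so I would double-check the two single-copy traces $\Tr[\Phi_d(\mathds{1}-G_d)]=1$ and $\Tr[\Phi_d^\perp(\mathds{1}-G_d)]=\frac{1}{d+1}$ that drive the computation.
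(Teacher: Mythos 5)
Your proof is correct and essentially identical to the paper's: part (a) reads the exponent directly off Theorem~\ref{theo:trade-off} exactly as the paper does (the monotonicity-in-$\alpha$ remark is a nice touch the paper leaves implicit), and part (b) exhibits an explicit LOCC test with $\beta_n = 0$ and $\alpha_n = (d+1)^{-n}$, then extends to SEP and PPT via the class inclusions, which is precisely the paper's argument. The only difference is that you use the untwirled coincidence test $(\mathds{1}-G_d)^{\otimes n}$ where the paper uses $T_n = \mathds{1} - M_d^{\otimes n}$ with $M_d$ the twirl of $\mathds{1}-G_d$; since both hypotheses are $U\otimes U^*$-invariant, the two tests have identical error statistics, so this is cosmetic.
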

\begin{proof}%[Proof of Theorem~\ref{theo:MES_Stein}]
	%	\begin{enumerate}
	%	\item[\ref{MES_Stein:a}]
	\ref{MES_Stein:a}
	By Theorem~\ref{theo:MES}, we have, for every $\eps\in[0,1)$,
	\begin{align}
		\mathrm{Stein}^\text{LOCC} (\Phi_d^{\otimes n}, (\Phi_d^{\perp})^{\otimes n}, \eps)
		= \lim_{n\to \infty} \beta_\eps^\text{X}(n) = \lim_{n\to \infty} \log (d+1) - \frac1n \log (1-\eps) = \log (d+1)
	\end{align}
	as desired.
	
	%	\item[\ref{MES_Stein:b}]
	\ref{MES_Stein:b}
	For every $\eps \in (0,1)$,
	we choose the test $T_n := \mathds{1} -  M_d^{\otimes n}$.
	From Eqs.~\eqref{eq:alpha_eps} and \eqref{eq:beta_eps}, we have
	\begin{align}
		\alpha_n(T_n) = \frac{1}{(d+1)^n}; \quad \beta_n(T_n)  = 0.
	\end{align}
	Hence, it follows that $ \lim_{n\to 0} \alpha_n(T_n) \leq \eps$, and
	\begin{align}
		\mathrm{Stein}^\text{X} (\rho_0, \rho_1, \eps) \geq \lim_{n\to \infty} -\frac1n \log \beta_n(T_n) = \infty.
	\end{align}
	This completes the proof.
	%\end{enumerate}	
\end{proof}

%\begin{remark}
%	In \cite{BHL+14}, Brand{\~a}o \emph{et al.} proved that the Stein exponent is given by the regularized relative entropy between the measurement outcome when $\eps = 0$, i.e.,~for every class of measurements `$\text{X}$',
%	\begin{align}
%	\lim_{\eps \to 0} \mathrm{Stein}^\text{X} (\rho_0, \rho_1,\eps) = \lim_{n\to\infty} \sup_{ \mathcal{M} \in \text{X} } \frac{ D(\mathcal{M}(\rho_0^{\otimes n})\| \mathcal{M}(\rho_1^{\otimes n}) )}{n},
%	\end{align}
%	where $D$ is the quantum relative entropy \cite{Ume62}, and $\mathcal{M}$ is any POVM in the class `$\text{X}$'.
%	First, calculating such a regularized quantity is computationally intractable. Second, the author did not prove whether the strong converse property holds. Namely, if $\mathrm{Stein}^\text{X} (\rho_0, \rho_1,\eps)$ is dependent on $\eps \in (0,1)$.
%	Hence, in Corollary~\ref{coro:MES_Stein}, we establish a single-letter formula for testing $\Phi_d$ against $\Phi_d^\perp$, and further prove the strong converse property.
%\end{remark}

\begin{coro}[Hoeffding exponent] \label{coro:MES_Hoeffding}
	%	Consider the binary hypothesis given in Eq.~\eqref{eq:MES}.
	The following Hoeffding exponents hold.
	\begin{enumerate}[(a)]
		\item\label{MES_Hoeffding:a} Consider the binary hypothesis testing: $\rho_0 = \Phi_d$ and $\rho_1 = \Phi_d^{\perp}$. Then,
		\begin{align}
			\mathrm{Hoeffding}^\textnormal{PPT}(\rho_0, \rho_1, r) = \mathrm{Hoeffding}^\textnormal{SEP}(\rho_0, \rho_1, r) = \mathrm{Hoeffding}^\textnormal{LOCC}(\rho_0, \rho_1, r) = \infty, \quad \forall r \leq \log (d+1).
		\end{align}
		
		\item\label{MES_Hoeffding:b} Consider the binary hypothesis testing: $\rho_0 = \Phi_d^{\perp}$ and $\rho_1 = \Phi_d$. Then,
		\begin{align}
			\mathrm{Hoeffding}^\textnormal{PPT}(\rho_0, \rho_1, r) = \mathrm{Hoeffding}^\textnormal{SEP}(\rho_0, \rho_1, r) = \mathrm{Hoeffding}^\textnormal{LOCC}(\rho_0, \rho_1, r) = \log (d+1), \quad \forall r> 0.
		\end{align}
	\end{enumerate}
\end{coro}
\begin{proof}
	%	\begin{enumerate}
	%		\item[\ref{MES_Hoeffding:a}]
	\ref{MES_Hoeffding:a}		
	From Theorem~\ref{theo:MES}, we know that if the type-II error is allowed to decay exponentially at the rate of $\log (d+1)$, then the type-I error is zero for all $n$. From the definition of the Hoeffding exponent given in Eq.~\eqref{eq:Hoeffding}, the type-I error is always zero if the type-II error decays slower, which shows that the Hoeffding exponent is infinite.
	
	%		\item[\ref{MES_Hoeffding:b}]	 
	\ref{MES_Hoeffding:b}
	Fix an arbitrary $n$.
	From Theorem~\ref{theo:MES}, Eqs.~\eqref{eq:alpha_eps} and \eqref{eq:beta_eps}, choosing the test as $\mathds{1} - (1-\eps) M_d^{\otimes n}$,
	it follows that 
	\begin{align}
		\alpha_n(T_n) = \frac{1-\eps}{(d+1)^n}, \quad \beta_n(T_n) = \eps.
	\end{align}
	Note that the test is optimal for every $\eps \in [0,1)$.
	Now,  letting $\eps = \exp\{-nr\}$ for any $r> 0$, we have $\alpha_n(T_n) = (1-\exp\{-nr\})$. Since this holds for every $n$, then
	\begin{align}
	 \mathrm{Hoeffding}^\text{LOCC}  (\rho_0, \rho_1, r) = \lim_{n\to\infty} -\frac1n \log \alpha_n(T_n) 
		= \log (d+1) - \lim_{n\to \infty} \frac1n \log (1-\exp\{-nr\})
		= \log(d+1),
	\end{align}
	thereby completing the proof.
	%	\end{enumerate}
\end{proof}

\begin{coro}[Strong converse exponent] \label{coro:MES_sc}
	%	Consider the binary hypothesis given in Eq.~\eqref{eq:MES}.
	The following strong converse exponents hold.
	\begin{enumerate}[(a)]
		\item\label{MES_sc:a} Consider the binary hypothesis testing: $\rho_0 = \Phi_d$ and $\rho_1 = \Phi_d^{\perp}$. Then,
		\begin{align}
			\mathrm{SC}^\textnormal{PPT}(\rho_0, \rho_1, r) = \mathrm{SC}^\textnormal{SEP}(\rho_0, \rho_1, r) = \mathrm{SC}^\textnormal{LOCC}(\rho_0, \rho_1, r) = r - \log(d+1), \quad \forall r > \log (d+1).
		\end{align}
		
		\item\label{MES_sc:b} Consider the binary hypothesis testing: $\rho_0 = \Phi_d^{\perp}$ and $\rho_1 = \Phi_d$. Then,
		\begin{align}
		\mathrm{SC}^\textnormal{PPT}(\rho_0, \rho_1, r) = \mathrm{SC}^\textnormal{SEP}(\rho_0, \rho_1, r) = \mathrm{SC}^\textnormal{LOCC}(\rho_0, \rho_1, r) = 0, \quad \forall r\geq 0.
		\end{align}
	\end{enumerate}
\end{coro}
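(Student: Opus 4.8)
The plan is to read off both strong converse exponents directly from the \emph{exact} trade-off established in Theorem~\ref{theo:trade-off}, which holds verbatim for $\textnormal{PPT}$, $\textnormal{SEP}$, and $\textnormal{LOCC}$; a single argument therefore settles all three classes simultaneously. Per \eqref{eq:sc}, $\xi_\textnormal{SC}^\textnormal{X}(\rho_0,\rho_1,r)$ is the exponential decay rate of the optimal type-I success probability $1-\alpha_n$ over tests whose type-II error satisfies $\beta_n\leq\e^{-nr}$. Because Theorem~\ref{theo:trade-off} pins down the \emph{value} of $\beta_\alpha$ rather than a one-sided bound, inverting this relation yields matching achievability and converse bounds on $1-\alpha_n$, and the exponents follow upon applying $-\tfrac1n\log(\cdot)$.

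For part~\ref{MES_sc:a}, with $\rho_0=\Phi_d$ and $\rho_1=\Phi_d^\perp$, Theorem~\ref{theo:trade-off} gives $\beta_\alpha^\textnormal{X}(\Phi_d^{\otimes n},(\Phi_d^\perp)^{\otimes n})=\tfrac{1-\alpha}{(d+1)^n}$. Thus any test with type-I error $\alpha_n$ obeys $\beta_n\geq\tfrac{1-\alpha_n}{(d+1)^n}$, so the requirement $\beta_n\leq\e^{-nr}$ forces $1-\alpha_n\leq(d+1)^n\,\e^{-nr}$ (converse), while the test attaining the trade-off with $\beta_n=\e^{-nr}$ meets this with equality (achievability). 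Hence the largest attainable type-I success is exactly $(d+1)^n\e^{-nr}$, which lies below $1$ precisely when $r>\log(d+1)$, and
\begin{align}
\xi_\textnormal{SC}^\textnormal{X}(\Phi_d,\Phi_d^\perp,r)=\lim_{n\to\infty}-\tfrac1n\log\!\left((d+1)^n\e^{-nr}\right)=r-\log(d+1).
\end{align}

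For part~\ref{MES_sc:b}, I would exploit that swapping the hypotheses interchanges the two errors under $T\mapsto\mathds{1}-T$, so that $\mathcal{R}^\textnormal{X}((\Phi_d^\perp)^{\otimes n},\Phi_d^{\otimes n})$ is the reflection across the diagonal of the region in part~\ref{MES_sc:a}; concretely, it is cleanest to reuse the explicit test $T_n=\mathds{1}-M_d^{\otimes n}$ from the proof of Corollary~\ref{coro:MES_Stein}\ref{MES_Stein:b}, with $M_d$ as in \eqref{eq:Md}. By \eqref{eq:LOCC}, \eqref{eq:alpha_eps}, and \eqref{eq:beta_eps} this test yields $\alpha_n=(d+1)^{-n}$ and $\beta_n=0$, so it satisfies $\beta_n\leq\e^{-nr}$ for \emph{every} $r\geq0$ while its type-I success is $1-\alpha_n=1-(d+1)^{-n}\to1$. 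Therefore $-\tfrac1n\log(1-\alpha_n)\to0$, and together with the trivial bound $\xi_\textnormal{SC}^\textnormal{X}\geq0$ this gives $\xi_\textnormal{SC}^\textnormal{X}(\Phi_d^\perp,\Phi_d,r)=0$ for all $r\geq0$.

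These derivations are short precisely because Theorem~\ref{theo:trade-off} already supplies the exact Neyman--Pearson curve; the only genuine care required is bookkeeping---tracking which state is the null, and remembering that maximizing the type-I success corresponds to \emph{minimizing} $\alpha_n$ subject to the type-II constraint. The one conceptually delicate point is part~\ref{MES_sc:b}: there a single local test drives the type-II error to zero identically (reflecting that the Stein exponent is infinite in this direction, Corollary~\ref{coro:MES_Stein}\ref{MES_Stein:b}), so the strong converse regime is empty and the exponent collapses to $0$ rather than growing linearly in $r$.
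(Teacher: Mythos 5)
Your proposal is correct and follows essentially the same route as the paper: part~(a) is obtained exactly as in the paper's proof, by inverting the exact trade-off $\beta_\alpha = (1-\alpha)/(d+1)^n$ of Theorem~\ref{theo:trade-off} at $\beta_n = \e^{-nr}$, and part~(b) by exhibiting an explicit LOCC test based on $M_d$ whose type-I success tends to $1$ while the type-II constraint holds for every $r$. The only (immaterial) difference is that in part~(b) you reuse the test $T_n = \mathds{1}-M_d^{\otimes n}$ from Corollary~\ref{coro:MES_Stein}\ref{MES_Stein:b} (with $\beta_n=0$), whereas the paper invokes Corollary~\ref{coro:MES_Hoeffding}\ref{MES_Hoeffding:b} (whose underlying test has $\beta_n=\e^{-nr}$); both rest on the same measurement and the same computations.
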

\begin{proof}
	%	\begin{enumerate}f
	%		\item[\ref{MES_sc:a}]
	\ref{MES_sc:a}
	By denoting $\alpha_\beta^\text{X} := \inf \left\{ \alpha: (\alpha,\beta) \in \mathcal{R}^\text{X}(\rho_0^{\otimes n}, \rho_1^{\otimes n}) \right\}$, Theorem~\ref{theo:MES} implies that
	\begin{align}
		\beta = \frac{1-\alpha_\beta^\text{X} }{(d+1)^n}.
	\end{align}
	Letting $\beta = \exp\{-nr\}$, we have the desired strong converse exponent.
	
	%		\item[\ref{MES_sc:b}]
	\ref{MES_sc:b}	 
	By Corollary~\ref{coro:MES_Hoeffding}-\ref{MES_Hoeffding:b}, the type-I error will exponentially decay when the type-II error exponentially decays at any finite rate $r$. Therefore, the strong converse exponent is zero for all finite rate $r$ since $\lim_{n\to \infty} \frac1n \log [1- \frac{1}{(d+1)^n}  ] = 0$.
	%	\end{enumerate}	
\end{proof}

We remark that analysis of this section directly applies to the scenario of testing pure states with uniform non-zero Schmidt coefficients. We defer the results to Appendix~\ref{sec:MES_high} (see also Table~\ref{table:summary}).

\subsection{Testing Completely (Anti-)Symmetric Werner States} \label{sec:Werner}

In this section, we consider testing 
completely anti-symmetric Werner state $\sigma_{\text{a},d}$ and
completely symmetric Werner state $\sigma_{\text{s},d}$ \cite{DLT02, Wer89}:
\begin{align} \label{eq:Werner}
	\begin{dcases}
		\mathsf{H}_0 : \rho_0^{\otimes n} = \sigma_{\text{a},d}^{\otimes n}
		:= \left( \frac{2}{d(d-1)} \Pi_{\text{a},d} \right)^{\otimes n}\\
		\mathsf{H}_1 : \rho_1^{\otimes n} = \sigma_{\text{s},d}^{\otimes n} := \left(\frac{2}{d(d+1)} \Pi_{\text{s},d}\right)^{\otimes n}\\
	\end{dcases}, \quad \forall n\in\mathbb{N}.
\end{align}
Here, $\Pi_{\text{a},d} := \frac12 (\mathds{1}_{d^2} - F_d )$ and $\Pi_{\text{s},d} := \frac12 (\mathds{1}_{d^2} + F_d )$ and  are projections onto the anti-symmetric and symmetric subspaces, respectively, and $F_d := \sum_{i,j=0}^{d-1} |i\rangle \langle j| \otimes |j\rangle \langle i|$ is the swap operator between the two $d$-dimensional subsystems
%\begin{align}
%	\sigma_d &:= \frac{2}{d(d+1)} \Pi_{\text{s},d}, \quad \Pi_{\text{s},d} := \frac12 (\mathds{1}_{d^2} + F_d ),\\
%	\sigma_d^\perp &:= \frac{2}{d(d-1)} \Pi_{\text{a},d}, \quad \Pi_{\text{a},d} := \frac12 (\mathds{1}_{d^2} - F_d ),\\
%%	\Pi_{\text{s},d} &:= \frac12 (\mathds{1}_{d^2} + F_d ), \\
%%	\Pi_{\text{a},d} &:= \frac12 (\mathds{1}_{d^2} - F_d ), \\	
%	F_d &:= \sum_{i,j=0}^{d-1} |i\rangle \langle j| \otimes |j\rangle \langle i|.
%\end{align}
%Following the analysis given in Proposition~\ref{prop:LOCC} and Theorem~\ref{theo:trade-off}, we arrive at the following optimal trade-off.
We obtain the following optimal LOCC protocols.

\begin{theo} \label{theo:Werner}
	Consider the binary hypotheses given in Eq.~\eqref{eq:Werner}.
	The following hold for any $n\in\mathds{N}$:
	\begin{align}
		P_\textnormal{e}^\textnormal{PPT} (\rho_0^{\otimes n}, \rho_1^{\otimes n}; p ) = P_\textnormal{e}^\textnormal{SEP} (\rho_0^{\otimes n}, \rho_1^{\otimes n}; p ) = P_\textnormal{e}^\textnormal{LOCC} (\rho_0^{\otimes n}, \rho_1^{\otimes n}; p ) = \min\left\{ (1-p) \left( \frac{d-1}{d+1} \right)^n, p 
		\right\},
	\end{align}
	and, for any $\alpha\in[0,1]$,
	\begin{align}
		\beta_\alpha^\textnormal{PPT}(\rho_0^{\otimes n}, \rho_1^{\otimes n})  
		= \beta_\alpha^\textnormal{SEP}(\rho_0^{\otimes n}, \rho_1^{\otimes n}) 
		= \beta_\alpha^\textnormal{LOCC}(\rho_0^{\otimes n}, \rho_1^{\otimes n}) 
		= (1-\alpha)\cdot \left( \frac{d-1}{d+1} \right)^n.
	\end{align}
\end{theo}
\begin{proof}
	For verifying the first condition \eqref{eq:condition_LOCC} in Theorem~\ref{theo:sufficient_condition}, we consider the following (one-way) LOCC protocol for each copy:
	\begin{align}
	\left\{  \sum_{i\neq j}^{d-1} |ij\rangle\langle ij|, \, \sum_{i=0}^{d-1} |ii\rangle\langle ii|
	\right\}.
	\end{align}
	Noting that the states $\sigma_{\text{a},d}$ and $\sigma_{\text{s},d}$ are both $U\otimes U$-invariant, after twirling, we obtain:
	\begin{align}
		\left\{ M_d :=  \frac{d-1}{d+1} \Pi_{\text{s},d} + \Pi_{\text{a},d}, \, \mathds{1} - M_d = \frac{2}{d+1} \Pi_{\text{s},d}
		\right\}.
	\end{align}
	This shows that \eqref{eq:condition_LOCC} holds for $t = \frac{d-1}{d+1}$.
	
	To verify the second condition \eqref{eq:condition_PPT} with $t = \frac{d-1}{d+1}$, we calculate
	\begin{align}
	\frac{d-1}{d+1} \left|\sigma_{\text{a},d}^\Gamma\right| - \sigma_{\text{s},d}^\Gamma 
	&= \frac{1}{d(d+1)} \left( \left|\mathds{1}_{d^2} - F_d^\Gamma\right| - \mathds{1}_{d^2} - F_d^\Gamma \right) \\
	&= \frac{1}{d(d+1)} \left( \left|\mathds{1}_{d^2} - d \Phi_d \right| - \mathds{1}_{d^2} - d \Phi_d \right) \\
	&= - \frac{2}{d(d+1)} \Phi_d \\
	&\leq 0.
	\end{align}
This guarantees that \eqref{eq:condition_PPT} must be fulfilled.
%\vspace{-\baselineskip}
\end{proof}

Then, the associated Stein, Hoeffding, and the strong converse exponents can be established similarly as in Corollaries~\ref{coro:MES_Stein}, \ref{coro:MES_Hoeffding}, and \ref{coro:MES_sc}. We lists the results below without duplicating the proof (see also Table~\ref{table:summary} for a summary).

\begin{coro}[Stein exponent] \label{coro:Werner_Stein}
	%	Consider the binary hypothesis given in Eq.~\eqref{eq:MES}.
	The following Stein exponents hold.
	\begin{enumerate}[(a)]
		\item Consider the binary hypothesis testing: $\rho_0 = \sigma_{\textnormal{a},d}$ and $\rho_1 = \sigma_{\textnormal{s},d}$. Then,
		\begin{align}
			\mathrm{Stein}^\textnormal{PPT}(\rho_0, \rho_1, \eps) = \mathrm{Stein}^\textnormal{SEP}(\rho_0, \rho_1, \eps) = \mathrm{Stein}^\textnormal{LOCC}(\rho_0, \rho_1,\eps) = \log \frac{d+1}{d-1}, \quad \forall \eps \in [0,1).
		\end{align}
		
		\item Consider the binary hypothesis testing: $\rho_0 = \sigma_{\textnormal{s},d}$ and $\rho_1 = \sigma_{\textnormal{a},d}$. Then,
		\begin{align}
			\mathrm{Stein}^\textnormal{PPT}(\rho_0, \rho_1, \eps) = \mathrm{Stein}^\textnormal{SEP}(\rho_0, \rho_1, \eps) = \mathrm{Stein}^\textnormal{LOCC}(\rho_0, \rho_1,\eps) = \infty, \quad \forall \eps \in [0,1).
		\end{align}
	\end{enumerate}	
\end{coro}

\begin{coro}[Hoeffding exponent] \label{coro:Werner_Hoeffding}
	%	Consider the binary hypothesis given in Eq.~\eqref{eq:MES}.
	The following Hoeffding exponents hold.
	\begin{enumerate}[(a)]
		\item\label Consider the binary hypothesis testing: $\rho_0 = \sigma_{\textnormal{a},d}$ and $\rho_1 = \sigma_{\textnormal{s},d}$. Then,
		\begin{align}
			\mathrm{Hoeffding}^\textnormal{PPT}(\rho_0, \rho_1, r) = \mathrm{Hoeffding}^\textnormal{SEP}(\rho_0, \rho_1, r) = \mathrm{Hoeffding}^\textnormal{LOCC}(\rho_0, \rho_1, r) = \infty, \quad \forall r \leq \log \frac{d+1}{d-1}.
		\end{align}
		
		\item\label Consider the binary hypothesis testing: $\rho_0 = \sigma_{\textnormal{s},d}$ and $\rho_1 = \sigma_{\textnormal{a},d}$. Then,
		\begin{align}
			\mathrm{Hoeffding}^\textnormal{PPT}(\rho_0, \rho_1, r) = \mathrm{Hoeffding}^\textnormal{SEP}(\rho_0, \rho_1, r) = \mathrm{Hoeffding}^\textnormal{LOCC}(\rho_0, \rho_1,r) = \log \frac{d+1}{d-1}, \quad \forall r> 0.
		\end{align}
	\end{enumerate}
\end{coro}

\begin{coro}[Strong converse exponent] \label{coro:Werner_sc}
	%	Consider the binary hypothesis given in Eq.~\eqref{eq:MES}.
	The following strong converse exponents hold.
	\begin{enumerate}[(a)]
		\item Consider the binary hypothesis testing: $\rho_0 = \sigma_{\textnormal{a},d}$ and $\rho_1 = \sigma_{\textnormal{s},d}$. Then,
		\begin{align}
			\mathrm{SC}^\textnormal{PPT}(\rho_0, \rho_1, r) = \mathrm{SC}^\textnormal{SEP}(\rho_0, \rho_1, r) = \mathrm{SC}^\textnormal{LOCC}(\rho_0, \rho_1, r) = r - \log\frac{d+1}{d-1}, \quad \forall r > \log \frac{d+1}{d-1}.
		\end{align}
		
		\item  Consider the binary hypothesis testing: $\rho_0 = \sigma_{\textnormal{s},d}$ and $\rho_1 = \sigma_{\textnormal{a},d}$. Then,
		\begin{align}
			\mathrm{SC}^\textnormal{PPT}(\rho_0, \rho_1, r) = \mathrm{SC}^\textnormal{SEP}(\rho_0, \rho_1, r) = \mathrm{SC}^\textnormal{LOCC}(\rho_0, \rho_1, r) = 0, \quad \forall r\geq 0.
		\end{align}
	\end{enumerate}
\end{coro}

\section{Infinite asymptotic separation between SEP POVMs and PPT POVMs}
\label{sec:separation}
In this section, we prove that there is an \emph{infinite {asymptotic} separation} between the optimal 
average error probabilities using SEP POVMs and PPT POVMs.

\begin{defn}[Unextendible product basis \cite{BDM+99, DMS+03}] \label{defn:UPB}
	Consider two Hilbert spaces, $\mathcal{H}_A$ and $\mathcal{H}_B$. The set $S = \{ |\alpha_i\rangle \otimes |\beta_i\rangle : 1\leq i \leq N\} \subseteq \mathcal{H}_A\otimes \mathcal{H}_B$ is called an \emph{unextendible product basis}\footnote{In this paper we only considered unextendible product bases on a bipartite Hilbert spaces. However, the Definition~\ref{defn:UPB} naturally extends to the multipartite scenario.} (UPB) if it satisfies the following:
	\begin{align}
		\begin{dcases}
			&|\alpha_i\rangle \otimes |\beta_i\rangle \perp |\alpha_j\rangle \otimes |\beta_j\rangle, \quad \forall i\neq j \\
			&\left\{
			|\phi\rangle \otimes |\psi\rangle \in \mathcal{H}_A\otimes \mathcal{H}_B: \forall i \in\{1,\ldots, N\},	|\phi\rangle \otimes |\psi\rangle \perp |\alpha_i\rangle \otimes |\beta_i\rangle
			\right\} = {\{0\}} \\
		\end{dcases}.
	\end{align}
\end{defn}
An example of UBP was given in \cite[(18)]{BDM+99}, which consists of five states on $\mathds{C}^3\otimes \mathds{C}^3$:
\begin{align*}
	&|\psi_0\rangle = \frac{1}{\sqrt{2}}|0\rangle \otimes (|0\rangle - |1\rangle ), \quad 
	|\psi_2\rangle = \frac{1}{\sqrt{2}}|2\rangle \otimes (|1\rangle - |2\rangle ), \\
	&|\psi_1\rangle = \frac{1}{\sqrt{2}}  (|0\rangle - |1\rangle )\otimes|2\rangle, \quad 
	|\psi_3\rangle = \frac{1}{\sqrt{2}} (|1\rangle - |2\rangle )\otimes  |0\rangle, \\
	&\quad |\psi_4\rangle = \frac{1}{3}(|0\rangle +|1\rangle + |2\rangle ) \otimes  (|0\rangle +|1\rangle + |2\rangle ).
\end{align*}

For a UPB
%{\footnote{For an example of an unextendible product basis, we refer the reader to \cite[Section 4]{DMS+03}.}} 
$S = \{ |\alpha_i\rangle \otimes |\beta_i\rangle : 1\leq i \leq N\} \subseteq \mathcal{H}_A\otimes \mathcal{H}_B$, we consider the following binary hypotheses:
\begin{align} \label{eq:example}
	\begin{dcases}
		\mathsf{H}_0 : \rho_0^{\otimes n} := \left[ \frac1N \left( \sum_{i=1}^N|\alpha_i \beta_i\rangle \langle \alpha_i \beta_i | \right) \right]^{\otimes n} \\
		\mathsf{H}_1:\rho_1^{\otimes n}
	\end{dcases}, n  \in\mathbb{N}.
\end{align}
Here,  ${\rho_1}$ is a state in $\mathcal{H}_A\otimes \mathcal{H}_B$ that is orthogonal to ${\rho_0}$.
We show below that the optimal exponential decay rate of using SEP POVMs is strictly worse than that of using PPT POVMs.

\begin{theo}[Separation] \label{theo:separation}
	There is an infinite {asymptotic} separation between SEP POVMs and PPT POVMs. That is, for testing the states given in Eq.~\eqref{eq:example}, there exists a $\mu>0$ such that  {for every $0<p<1$,}
	\begin{align}
		\begin{cases}
			{P_\textnormal{e}^\textnormal{PPT}(\rho_0^{\otimes n}, \rho_1^{\otimes n}; p)} = 0 \\
			{P_\textnormal{e}^\textnormal{SEP}(\rho_0^{\otimes n}, \rho_1^{\otimes n}; p)} \geq \frac{\mu^n}{2}	\\	
		\end{cases}, \quad \forall n\in \mathbb{N}.
	\end{align}
Consequently, we have that $\mathrm{Chernoff}^\textnormal{PPT}(\rho_0, \rho_1) = \infty$, while $\mathrm{Chernoff}^\textnormal{SEP}(\rho_0, \rho_1) < \infty$.

\end{theo}

The key ingredient to establish Theorem~\ref{theo:separation} is to introduce a novel 
quantity that characterizes the ``richness'' of a product basis:

\begin{defn}[Unextendibility] \label{defn:delta}
	Given a product basis $S = \left\{|\alpha_i \rangle \otimes |\beta_i\rangle : 1\leq i \leq N \right\} \subseteq \mathcal{H}_A \otimes \mathcal{H}_B$, we define
	the unextendibility of $S$ as
\begin{align} \label{eq:delta}
	\delta_S := \min_{\rho_A, \rho_B} \max_{1\leq i\leq N} \langle \alpha_i |\rho_A | \alpha_i\rangle \langle \beta_i | \rho_B | \beta_i \rangle ,
\end{align}
where $\rho_A$  and $\rho_B$ range over all density matrices on $\mathcal{H}_A$ and $\mathcal{H}_B$, respectively.
\end{defn}

Note that the minimum in Eq.~\eqref{eq:delta} is always attained for some $\rho_A$ and $\rho_B$,
since the objective function on the right-hand side is continuous in $\rho_A\otimes \rho_B$ 
and the set of product states $\rho_A\otimes \rho_B$ is compact.

For an {unextendible product basis} $S$ \cite{DMS+03}, one immediately has
$\delta_S > 0$.

\begin{prop}[Multiplicativity] \label{prop:multiplicative}
	The quantity $\delta_S$ is multiplicative.
	That is, 
	\begin{align}
		\delta_{S_1\otimes S_2} = \delta_{S_1} \delta_{S_2}
	\end{align}
	for any two product bases $S_1$ and $S_2$.
\end{prop}

\begin{remark}
	Note that a product basis $S$ is unextendible if and only if $\delta_S > 0$.
	Hence, the quantity $\delta_S$ indicates how far a UPB is from an extendible product basis.
	Then Proposition~\ref{prop:multiplicative} implies that the tensor product of any 
	two UPBs $S_1$ and $S_2$ enjoys the property that $\delta_{S_1\otimes S_2} = \delta_{S1} \delta_{S_2} > 0$. Therefore, this gives a quantitative characterization of the well-known fact that the tensor 
	product of UPBs is also a UPB \cite{DMS+03}.
\end{remark}

\begin{proof}%[Proof of Proposition~\ref{prop:multiplicative}]
%	Hereafter, we only prove Proposition~\ref{prop:multiplicative} for bipartite systems. Our approach straightforwardly extends to the multipartite case.
Let two product bases be given as
	\begin{align}
		S_1 &= \left\{|\alpha_i \rangle \otimes |\beta_i\rangle : 1\leq i \leq N_1 \right\} \subseteq \mathcal{H}_{A_1} \otimes \mathcal{H}_{B_1}, \\	
		S_2 &= \left\{|\psi_j \rangle \otimes |\phi_j\rangle : 1\leq j \leq N_2 \right\} \subseteq \mathcal{H}_{A_2} \otimes \mathcal{H}_{B_2}.
	\end{align}
	From the definition given in Eq.~\eqref{eq:delta}, one can directly verify that
	\begin{align}
		\delta_{S_1\otimes S_2} \leq \delta_{S_1} \delta_{S_2}.
	\end{align}
	It remains to prove the direction $\geq$.
	
	Let $\rho_{A_1 A_2}$ on $\mathcal{H}_{A_1} \otimes \mathcal{H}_{A_2}$ and $\rho_{B_1 B_2} $ on $\mathcal{H}_{B_1} \otimes \mathcal{H}_{B_2}$ be the states that attain the minimization in  $\delta_{S_1\otimes S_2}$, i.e.,
	\begin{align} \label{eq:S1S2}
		\delta_{S_1\otimes S_2} = \max_{ \substack{ 1\leq i \leq N_1 \\  1\leq j \leq N_2}  }
		\langle \alpha_{i} \psi_j | \rho_{A_1 A_2} | \alpha_{i} \psi_j \rangle 
		\langle \beta_{i} \phi_j | \rho_{B_1 B_2} | \beta_{i} \phi_j \rangle.
	\end{align}
	Further, for any $1\leq i \leq N_1$, we let
	\begin{align}
		\langle \alpha_i | \rho_{A_1 A_2} | \alpha_i \rangle &= \lambda_i \sigma_{i,A_2}, \\
		\langle \beta_i  | \rho_{B_1 B_2} | \beta_i \rangle &= \mu_i \sigma_{i,B_2},
	\end{align}
	where
	\begin{align}
		\lambda_i &:= \Tr\left[ \langle \alpha_i | \rho_{A_1 A_2} | \alpha_i \rangle  \right] = \langle \alpha_i | \rho_{A_1} | \alpha_i \rangle  \geq 0, \\
		\mu_i &:= \Tr\left[ \langle \beta_i | \rho_{B_1 B_2} | \beta_i \rangle  \right] = \langle \beta_i | \rho_{B_1} | \beta_i \rangle  \geq 0.
	\end{align}
	By the definition of $\delta_1$, there always exists some $i_0$ such that
	\begin{align} \label{eq:S1}
		\lambda_{i_0} \mu_{i_0} \geq \delta_{S_1}.
	\end{align}
	
	%Next, we consider the second system $\mathcal{H}_{B_1} \otimes \mathcal{H}_{B_2}$:
	
	By the choice of $i_0$ for any $1\leq j \leq N_2$, we have
	\begin{align}
		\langle \alpha_{i_0} \psi_j | \rho_{A_1 A_2} | \alpha_{i_0} \psi_j \rangle 
		\langle \beta_{i_0} \phi_j | \rho_{B_1 B_2} | \beta_{i_0} \phi_j \rangle
		= \lambda_{i_0} \mu_{i_0} \langle \psi_j |\sigma_{i_0, A_2} | \psi_j \rangle
		\langle \phi_j | \sigma_{i_0, B_2} | \phi_j \rangle.
	\end{align}
	By the definition of $\delta_{S_2}$, there exists some $j_0$ such that
	\begin{align} \label{eq:S2}
		\langle \psi_{j_0} | \sigma_{i_0, A_2} | \psi_{j_0} \rangle
		\langle \phi_{j_0} | \sigma_{i_0, B_2} | \phi_{j_0} \rangle \geq \delta_{S_2}.
	\end{align}
	Combining Eqs.~\eqref{eq:S1S2}, \eqref{eq:S1}, and \eqref{eq:S2} gives
	\begin{align}
		\delta_{S_1\otimes S_2} \geq \delta_{S_1}\delta_{S_2},
	\end{align}
	which completes the proof.
\end{proof}
Now, we are ready to prove the main result, expressed as Theorem~\ref{theo:separation} in this section.
\begin{proof}[Proof of Theorem~\ref{theo:separation}]
%	For any UPB $S = \{ |\alpha_i\rangle \otimes |\beta_i\rangle : 1\leq i \leq N\} \subseteq \mathcal{H}_A\otimes \mathcal{H}_B$, let
	Consider the binary hypotheses given in Eq.~\eqref{eq:example} and let
	\begin{align}
        P = \sum_{i=1}^N |\alpha_i \beta_i\rangle \langle \alpha_i \beta_i |, \quad
		\rho = \frac{1}{N} P.
	\end{align}
	By construction, both $P$ and $\mathds{1}-P$ are PPT operators \cite{DMS+03}.
%	Further, let $\sigma $ to be any state in $\mathcal{H}_A\otimes \mathcal{H}_B$ that is orthogonal to $\rho$.
	For this reason, the pair of states $(\rho,\sigma)$ can be perfectly distinguished by PPT POVMs. Hence, it remains to show an exponential lower bound to the error probability for distinguishing $(\rho_0^{\otimes n}, \rho_1^{\otimes n})_{n \in \mathbb{N}}$ using SEP POVMs.
	
	Consider uniform prior. Then, we have
	\begin{align}
		P_\text{e}^\text{SEP} \left( \rho_0^{\otimes n} , \rho_1^{\otimes n}; \tfrac12 \right)
		&= 1 - \sup_{T, (\mathbb{1}-T) \in \text{SEP} } \left\{ \frac12 \Tr\left[\rho_0^{\otimes n} (\mathbb{1}-T)\right] + \frac12\Tr\left[ \rho_1^{\otimes n} T  \right] \right\}.
	\end{align}
The above equation can be written as a primal problem of a \emph{semi-definite program} as follows:
	\begin{align}
		\text{maximize}: \quad&\frac12 \Tr\left[\rho_0^{\otimes n} \Pi_0 \right] + \frac12\Tr\left[ \rho_1^{\otimes n}  \Pi_1 \right];  \\
		\text{subject to}: \quad &\Pi_0 + \Pi_1 = \mathbb{1}, \\
		&\Pi_i \in \text{SEP} \quad \forall i\in \{0,1\}.
	\end{align}	
	To derive its dual problem, we introduce the \emph{dual cone} to SEP as follows:
	\begin{align}
		\text{SEP}^*(A^n: B^n) := \left\{ H: H^\dagger = H, \; \Tr\left[ \Pi \cdot H \right] \geq 0, \quad \forall \, \Pi \in \text{SEP}
		\right\}.
	\end{align}
	Such a dual cone is also known as the set of \emph{block-positive operators} \cite[Section 2]{BCJ+15}, i.e.,
	\begin{align}
%		\text{SEP}^*(A^n: B^n) = \left\{ H: H^\dagger = H, \; (\mathbb{1}_{A^n} \otimes \langle y| ) H (\mathbb{1}_{A^n} \otimes |y\rangle )
%		 \geq 0, \quad \forall |y\rangle  \in \mathcal{H}_{B}^{\otimes n}
%		\right\}.
		\text{SEP}^*(A^n: B^n) = \left\{ H: H^\dagger = H, \;  \Tr_{B^n}\left[  H \cdot (\mathbb{1}_{A^n} \otimes |y\rangle\langle y|) \right] 
	\geq 0, \quad \forall |y\rangle  \in \mathcal{H}_{B}^{\otimes n}
	\right\}.
	\end{align}
	Then the associated dual problem is:
	\begin{align}
		\text{minimize}: \quad&\Tr[H]; \\
		\text{subject to}: \quad &H - \frac12 \rho_0^{\otimes n} \in \text{SEP}^*(A^n: B^n), \\
		&H - \frac12 \rho_1^{\otimes n} \in \text{SEP}^*(A^n: B^n),\\
		&H = H^\dagger.
	\end{align}

	By the weak duality, we have
	\begin{align}
		P_\text{e}^\text{SEP} \left( \rho_0^{\otimes n} , \rho_1^{\otimes n}; \tfrac12 \right)&\geq 1 - \inf_{H^\dagger = H} \left\{ \Tr[H]: H - \frac12 \rho_0^{\otimes n}, H - \frac12 \rho_1^{\otimes n} \in \text{SEP}^*(A^n: B^n)
		\right\}. \label{eq:feasible_SEP}
	\end{align}

	Now, we choose a Hermitian operator $H$ as
	\begin{align}
		H &= \frac12 \rho_0^{\otimes n} + \left( \frac12 - \frac{\mu^n}{2} \right) \rho_1^{\otimes n}, \\
		\mu &:= \frac{\delta_S}{N} \in (0,1),
	\end{align}
	where $\delta_S$ was introduced in Definition~\ref{defn:delta}, and it is clear that $\delta_S \leq 1$.
	We aim to show that the operator $H$ is a feasible solution to Eq.~\eqref{eq:feasible_SEP} to complete the proof.
	First, it is not hard to see that
	\begin{align}
		H - \frac12 \rho_0^{\otimes n}=  \left( \frac12 - \frac{\mu^n}{2} \right) \rho_1^{\otimes n} \in \text{SEP}^*(A^n: B^n),
	\end{align}
	since positivity implies block-positivity.
	Second, we have
	\begin{align}
		H - \frac12 \rho_1^{\otimes n} &=  \frac12 \left( \rho_0^{\otimes n} -\mu^n\rho_1^{\otimes n } \right) \\
		&\geq \frac12 \left( \rho_0^{\otimes n} -\mu^n \mathbb{1}^{\otimes n } \right) \\
		&= \frac{1}{2N^n} \left( P^{\otimes n} - \delta_S^n \mathbb{1}^{\otimes n } \right).
	\end{align}

	To show that the quantity $P^{\otimes n} - \delta_S^n \mathbb{1}^{\otimes n }$ is block-positive, we will invoke the definition of $\delta_S$ given in Definition~\ref{defn:delta} and the multiplicativity of $\delta_S$ established in Proposition~\ref{prop:multiplicative}.
	Note that for all density matrices $\omega_A$ on $\mathcal{H}_A$ and $\omega_B$ on $\mathcal{H}_B$, we have
	\begin{align}
		\Tr\left[ P \cdot \omega_A\otimes \omega_B  \right] \geq \delta_S > 0.
	\end{align}
	Hence, for any $\omega_{A_1 A_2 \ldots A_n} \otimes \omega_{B_1	 B_2 \ldots B_n}$, 
	\begin{align}
		\Tr\left[ P^{\otimes n} \cdot  \omega_{A_1 A_2 \ldots A_n} \otimes \omega_{B_1	 B_2 \ldots B_n} 	\right] \geq \delta_{S^{\otimes n}} = \delta_S^n >0
	\end{align}
	In other words, $P^{\otimes n} - \delta_S^n \mathbb{1}^{\otimes n}$ is block-positive, which proves our claim.
\end{proof}

\section{Discussions and Conclusions} \label{sec:conclusions}
We studied the hypothesis testing between an entangled pure state against its orthogonal complement in the many-copy scenario. Our two principal motivations for this work are as follows: First, quantum entanglement is a valuable resource in quantum computation and other important quantum information-theoretic protocols. How to distinguish a state which possesses entangled bits is of fundamental significance. Second, whether the Chernoff exponent for pairs of orthogonal states with entanglement under LOCC is faithful (i.e., finite) is a long-term open problem. This problem is challenging because there is no known simple mathematical structure of LOCC. Moreover, a general state could be entangled in a multipartite quantum system. Without any simple mathematical expression, hypothesis testing using LOCC is more challenging.

In this paper, we show that the optimal average error probability for testing an arbitrary multipartite entangled pure state against its orthogonal complement decays exponentially in the number of copies, which in turn implies that the associated Chernoff exponent is faithful. In the special case of a maximally entangled state, we explicitly derive its optimal average error probability in the symmetric setting and show that a single-copy measurement performs as well as the best LOCC protocol. This finding directly leads to an explicit expression of the Chernoff exponent.

In the asymmetric setting, we obtain an optimal trade-off between the type-I error $\alpha_n$ and the type-II error $\beta_n$. This allows us to fully characterize their asymptotic properties.
The associated Stein's exponent, namely the optimal exponential rate of $\beta_n$ when $\alpha_n$ is at most a constant, is proved.
When $\beta_n$ decays at a rate below or above the Stein exponent, we also establish the optimal exponential rate of $\alpha_n$ and $1-\alpha_n$, respectively.
Our results show that the negative logarithm of the considered error does not diverge, which then guarantees that all the definitions of the four exponents are faithful.
In other words, the asymptotic behavior of the errors satisfies the so-called \emph{large deviation principle} \cite{DZ98}. It is worth mentioning that there is a distinctive difference between the second-order asymptotics under LOCC and those under global measurements. Indeed, this signifies that LOCC has an exceptional structure. To be more specific, we recall the second-order expansion of the optimal exponential rate of type-II error with type-I error no larger than a constant $\eps$ under global measurements \cite{TH13, Li14}, i.e.,~
	\begin{align} \label{eq:second-order}
		D(\rho_0\| \rho_1) + \sqrt{\frac{V(\rho_0\|\rho_1)}{n} } Q^{-1}(\eps) + O\left(\frac{\log {n}}{n}\right),
	\end{align}
	where $D$ and $V$ are the quantum relative entropy and the relative entropy variance, respectively, and $Q$ is the cumulative distribution function of the standard normal distribution.
	As described in \cite[Section 5]{Li14}, $V(\rho_0\|\rho_1) = 0 $ implies $D(\rho_0\|\rho_1) = 0$ for any pair of quantum states $\rho_0,\rho_1$. In other words, the first-order term in Eq.~\eqref{eq:second-order} disappears whenever the second-order term vanishes.
	On the other hand, the obtained result for LOCC in Eq.~\eqref{eq:trade-off0}, shows that the optimal exponential rate of type-II error is given by
	\begin{align} \label{eq:ours0}
		-\frac1n \log \beta_n = \log (d+1) - \frac{\log (1-\eps)}{n}.
	\end{align}
	While the second-order term is missing, the first-order term $\log(d+1)$ is strictly positive. Indeed, this result implies that such asymptotic expansion has a different second-order term from that in Eq.~\eqref{eq:second-order}.
In previous work by Hayashi and Owari \cite{OH14, OH15, HO17},we note that when distinguishing a bipartite pure state and a completely mixed state under LOCC, the asymptotic bounds admit a similar second-order expansion as in Eq.~\eqref{eq:second-order} unless the pure state is maximally entangled.

Finally, we establish an infinite {asymptotic} separation between the SEP and PPT operations in the many-copy scenario. Our result shows that indeed there is a gap between the SEP and PPT operations no matter how many copies of states are provided.
Our technique is a multiplicativity property---a quantitative characterization of the tensor product of unextendible product bases. On the other hand, whether there is an infinite separation between the SEP and LOCC operations is a compelling open problem for future work. We believe that our analysis and results might have applications in data hiding or the studies of other important sets of orthogonal states.

\section*{Acknowledgements}
We sincerely thank anonymous reviewers for their insightful suggestions to improve this paper. In particular, proofs of Theorem~\ref{theo:MES} and Theorem~\ref{theo:Werner} are greatly simplified.
Also, the hypothesis in Lemma~\ref{lemm:PPT} is more  precise.

HC is supported by the Young Scholar Fellowship (Einstein Program) of the Ministry of Science and Technology, Taiwan (R.O.C.) under Grants NSTC 111-2636-E-002-026, NSTC 112-2636-E-002-009, NSTC 112-2119-M-007-006, NSTC 112-2119-M-001-006, NSTC 112-2124-M-002-003,
by the Yushan Young Scholar Program of the Ministry of Education, Taiwan (R.O.C.) under Grants NTU-111V1904-3 and NTU-112V1904-4,
and by the research project ``Pioneering Research in Forefront Quantum Computing, Learning and Engineering'' of National Taiwan University under Grant NTC-CC-112L893405.''
A.~Winter acknowledges financial support by the Spanish MINECO (projects
FIS2016-86681-P and PID2019-107609GB-I00) with the support of FEDER
funds, and the Generalitat de Catalunya (project CIRIT 2017-SGR-1127).
N.~Yu is supported by ARC Discovery Early Career Researcher Award DE180100156 and ARC Discovery Project DP210102449.

\appendix

\section{Testing Pure States with Uniform Non-Zero Schmidt Coefficients} \label{sec:MES_high}

In the following, we test pure entangled states with equal positive Schmidt coefficients, i.e.,~
\begin{align} 
	\rho_0  := \frac1m \sum_{i,j=0}^{m-1} |ii\rangle\langle jj|
	\quad \text{for $m\leq d$}
\end{align}
against its orthogonal complement ${(\mathds{1}_{d^2}-\rho_0)}/{(d^2- 1)}$.
Equivalently, such state can be written as a maximally entangled state $\Phi_m$ on $\mathbb{C}^m \otimes \mathbb{C}^m$ embedded in a higher dimension Hilbert space $\mathbb{C}^d \otimes \mathbb{C}^d$.
In other words, we consider the binary hypotheses
\begin{align} \label{eq:MES_high}
	\begin{dcases}
		\mathsf{H}_0 : \rho_0^{\otimes n} = \left(\Phi_m \oplus \mathbb{O}_{d^2-m^2}\right)^{\otimes n} & \\
		\mathsf{H}_1 : \rho_1^{\otimes n} = \left( \frac{\mathds{1}_{d^2}-\Phi_m \oplus\mathbb{O}_{d^2-m^2} }{d^2-1}  \right)^{\otimes n} & \\
	\end{dcases}, \quad \forall n.
\end{align}

%In this case, the analysis of Proposition~\ref{prop:PPT} in the previous section does not directly apply. To circumvent this problem, 
We first prove a useful converse bound below, which 
will be used for our analysis later.

\begin{lemm} \label{lemm:converse_high}
	Fix dimensions $d$ and $d'$.
	Let $\rho$ and $\sigma$ be density matrices on $\mathbb{C}^d \otimes \mathbb{C}^d$.
	Then, for any class of two-outcome POVM, $\textnormal{X}\in \{\textnormal{LOCC}, \textnormal{SEP}, \textnormal{PPT}, \textnormal{ALL} \}$, any $p, \lambda\in[0,1]$ and any natural number $n$, it follows that
	\begin{align}
		P_\textnormal{e}^\textnormal{X} \left( (\rho\oplus \mathbb{O}_{d'^2} )^{\otimes n}, \left( \lambda\sigma \oplus (1-\lambda) \tau_{d'^2} \right)^{\otimes n} ; p \right) \geq
		P_\textnormal{e}^\textnormal{X} \left( \rho^{\otimes n},   (\lambda \sigma)^{\otimes n} ; p \right).
	\end{align}
\end{lemm}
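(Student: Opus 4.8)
The plan is to reduce the embedded testing problem to the unembedded one by compressing every admissible test onto the local subspace that supports $\rho$. Identify the support of $\rho$ with a local product subspace $V_A\otimes V_B$ of the ambient bipartite space, aligned with the cut, so that the projection $\Pi$ onto it factorizes as $\Pi=\Pi_A\otimes\Pi_B$ with $\Pi_A,\Pi_B$ coordinate projections; the orthogonal complement (of dimension $d'^2$) carries the $\mathbb{O}_{d'^2}$ and $\tau_{d'^2}$ blocks. Write $\tilde\rho:=\rho\oplus\mathbb{O}_{d'^2}$ and $\tilde\sigma:=\lambda\sigma\oplus(1-\lambda)\tau_{d'^2}$. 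First I would record two elementary operator relations for the $n$-copy states. On one hand $\tilde\rho^{\otimes n}$ is exactly the image of $\rho^{\otimes n}$ under the embedding, namely $\tilde\rho^{\otimes n}=\Pi^{\otimes n}\rho^{\otimes n}\Pi^{\otimes n}$. On the other hand, expanding $\tilde\sigma^{\otimes n}$ as a sum over $\{0,1\}^n$ of tensor-product terms built from the nonnegative blocks $\lambda\sigma$ and $(1-\lambda)\tau_{d'^2}$ and discarding all but the all-$\lambda\sigma$ term yields the operator inequality $\tilde\sigma^{\otimes n}\geq\Pi^{\otimes n}(\lambda\sigma)^{\otimes n}\Pi^{\otimes n}$.

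Next, for any test $T$ with $\{T,\mathds{1}-T\}\in\textnormal{X}$ on the large space, I would lower-bound the Bayes error. Since $T\geq 0$, the operator inequality gives $\Tr[\tilde\sigma^{\otimes n}T]\geq\Tr[\Pi^{\otimes n}(\lambda\sigma)^{\otimes n}\Pi^{\otimes n}T]$, while the first error term is unchanged because $\tilde\rho^{\otimes n}$ is supported in the range of $\Pi^{\otimes n}$. Both surviving traces see only the compressed test $T'':=\Pi^{\otimes n}T\Pi^{\otimes n}$, so that
\begin{align}
p\,\Tr[\tilde\rho^{\otimes n}(\mathds{1}-T)] + (1-p)\,\Tr[\tilde\sigma^{\otimes n}T] \geq p\,\Tr[\rho^{\otimes n}(\mathds{1}-T'')] + (1-p)\,\Tr[(\lambda\sigma)^{\otimes n}T''],
\end{align}
where on the right the operators and $T''$ are read as acting on $(\mathbb{C}^d\otimes\mathbb{C}^d)^{\otimes n}$. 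Taking the infimum over $T$ on the left and recognizing the right-hand side as bounded below by $P_\textnormal{e}^\textnormal{X}(\rho^{\otimes n},(\lambda\sigma)^{\otimes n};p)$ would then close the argument, \emph{provided} $\{T'',\mathds{1}-T''\}$ again belongs to class $\textnormal{X}$ on the smaller space.

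The crux is thus to verify this membership. For $\textnormal{X}=\textnormal{ALL}$ it is automatic. For SEP, compressing $T=\sum_k E_k^{(A)}\otimes E_k^{(B)}$ term by term gives $T''=\sum_k(\Pi_A E_k^{(A)}\Pi_A)\otimes(\Pi_B E_k^{(B)}\Pi_B)$, still a sum of product positive operators, and since $\Pi^{\otimes n}\mathds{1}\Pi^{\otimes n}$ is the identity on the subspace, the complement $\mathds{1}-T''=\Pi^{\otimes n}(\mathds{1}-T)\Pi^{\otimes n}$ is separable by the same reasoning. For LOCC I would argue that running the original protocol on inputs supported in $V_A\otimes V_B$ is itself an LOCC protocol on that subspace, as the local operations merely act on the local coordinate subspaces $V_A,V_B$; the resulting effective two-outcome measurement is precisely $\{T'',\mathds{1}-T''\}$.

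The main obstacle I anticipate is the PPT case, which hinges on the partial transpose being compatible with the local compression. Here I would use the identity $(U M V)^\Gamma=(U_A\otimes V_B^{T})\,M^\Gamma\,(V_A\otimes U_B^{T})$ for product operators $U=U_A\otimes U_B$ and $V=V_A\otimes V_B$; taking $U=V=\Pi^{\otimes n}$ and using that coordinate projections are real-symmetric (so $\Pi_B^{T}=\Pi_B$) gives $(T'')^\Gamma=\Pi^{\otimes n}T^\Gamma\Pi^{\otimes n}$, whence $T^\Gamma\geq 0$ forces $(T'')^\Gamma\geq 0$, and likewise for the complement. The care needed is to ensure that the partial transpose relevant to the $n$-copy problem, taken with respect to the bipartition grouping all $A$-systems against all $B$-systems, is compatible with the factorization $\Pi^{\otimes n}=\Pi_A^{\otimes n}\otimes\Pi_B^{\otimes n}$ across that cut; this holds precisely because $\Pi$ is a product of local projections aligned with the cut.
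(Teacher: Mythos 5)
Your proposal is correct and takes essentially the same route as the paper's proof: both discard, by positivity, every block of $\left(\lambda\sigma\oplus(1-\lambda)\tau_{d'^2}\right)^{\otimes n}$ except the all-$(\lambda\sigma)$ one (the paper phrases this as super-additivity of the infimum over the two resulting pieces), and then reduce to the testing problem on the support of $\rho^{\otimes n}$. If anything you are more thorough: the paper passes from the infimum over large-space tests evaluated on the embedded states directly to $P_\textnormal{e}^\textnormal{X}$ on the small space without comment, whereas you explicitly verify that the compression $\Pi^{\otimes n}T\Pi^{\otimes n}$ remains a valid test in class X for each of LOCC, SEP and PPT.
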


\begin{proof}
	Recalling the definition of optimal average error probability for any operators, (possibly sub-normalized), i.e.,~
	\begin{align}
		P_\text{e}(\rho_0, \rho_1; p ) := \inf_{  T   } \left\{  p \alpha (T) + (1-p) \beta (T) \right\},
	\end{align}
	it follows that
	\begin{align}
		&P_\textnormal{e}^\textnormal{X} \left( (\rho\oplus \mathbb{O}_{d'^2} )^{\otimes n}, \left( \lambda\rho^\perp \oplus (1-\lambda) \tau_{d'^2} \right)^{\otimes n} ; p \right) \notag \\&=
		\inf_{T_n \in \text{X}} \left\{ p \Tr\left[ (\mathds{1}-T_n) (\rho\oplus \mathbb{O}_{d'^2} )^{\otimes n} \right] + (1-p) \Tr\left[ T_n \left( \lambda\sigma \oplus (1-\lambda) \tau_{d'^2} \right)^{\otimes n} \right] \right\} \\
		&\geq \inf_{T_n \in \text{X}} \left\{ p \Tr\left[ (\mathds{1}-T_n) \rho^{\otimes n} \right] + (1-p) \Tr\left[ T_n \left( \lambda\sigma \right)^{\otimes n} \right] \right\} \notag \\
		&\quad +  \inf_{G_n \in \text{X} } \left\{ p \Tr\left[ (\mathds{1}-G_n) \mathbb{O}_{d'^2} \right] + (1-p) \Tr\left[ G_n \sum_{k=0}^{n-1} \lambda^k (1-\lambda)^{1-k} B_k \right] \right\} , \label{eq:converse_high1}
	\end{align}
	where we have used super-additivity of infimum, and $B_k$ denotes the sum of all elements of $\{ \sigma, \tau_{d'^2} \}^{\otimes n}$ which have $k$ copies of $\sigma$.
	Note that the second term in Eq.~\eqref{eq:converse_high1} is zero since one allows to choose the projection $\mathds{1}_{d'^2}$ onto the copy of $\tau_{d'^2}$ as the POVM. Moreover, this measurement is implementable by all four classes of POVMs. This then completes the proof.
\end{proof}

In the following proposition, we consider a more general case compared to 
Eq.~\eqref{eq:MES_high} since there are non-unique orthogonal complements in high-dimensional systems.

\begin{prop} \label{theo:MES_high}
	Consider the hypotheses
	\begin{align} \label{eq:MES_high_0}
		\begin{dcases}
			\mathsf{H}_0 : \rho_0^{\otimes n} = \left( \Phi_m  \oplus \mathbb{O}_{d^2-m^2}\right)^{\otimes n} & \\
			\mathsf{H}_1 : \rho_1^{\otimes n} = \left( \lambda \Phi_m^\perp \oplus   (1-\lambda) \tau_{d^2 - m^2}  \right)^{\otimes n} & \\
		\end{dcases}, \quad \forall n.
	\end{align}
	Then, it holds that
	\begin{align}
		P_\textnormal{e}^\textnormal{PPT} (\rho_0^{\otimes n}, \rho_1^{\otimes n}; p ) = P_\textnormal{e}^\textnormal{SEP} (\rho_0^{\otimes n}, \rho_1^{\otimes n}; p ) = P_\textnormal{e}^\textnormal{LOCC} (\rho_0^{\otimes n}, \rho_1^{\otimes n}; p ) = \min\left\{ (1-p) \left( \frac{\lambda}{m+1} \right)^n, p 
		\right\}.
	\end{align}
\end{prop}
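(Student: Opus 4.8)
The plan is to prove the three equalities by sandwiching: the inclusion ordering \eqref{eq:ordering} gives $P_\textnormal{e}^\textnormal{LOCC}\geq P_\textnormal{e}^\textnormal{SEP}\geq P_\textnormal{e}^\textnormal{PPT}$, so it suffices to establish a matching LOCC upper bound (achievability) and PPT lower bound (converse), both equal to $\min\{(1-p)(\lambda/(m+1))^n, p\}$. Everything else is then forced.

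For the achievability I would exploit that $\rho_0$ is supported entirely on the embedded block $\mathbb{C}^m\otimes\mathbb{C}^m$ and recycle the twirled measurement $M_m=\Phi_m+\tfrac{1}{m+1}(\mathds{1}_{m^2}-\Phi_m)$ from \eqref{eq:Md}. Concretely, take the single-copy test $T=M_m\oplus\mathbb{O}_{d^2-m^2}$ and the product test $T_n=T^{\otimes n}$; since each party can first locally check whether its system lies in the first $m$ basis vectors and, if so, run the $M_m$ comparison, the POVM $\{T,\mathds{1}-T\}$ remains (one-way) LOCC despite the extra dimensions. Because $\Tr[\Phi_m M_m]=1$ one gets $\alpha_n(T_n)=0$, and because the $(1-\lambda)\tau_{d^2-m^2}$ summand is annihilated by the $\mathbb{O}_{d^2-m^2}$ block while $\Tr[\Phi_m^\perp M_m]=\tfrac{1}{m+1}$, one gets $\beta_n(T_n)=(\lambda/(m+1))^n$, exactly paralleling Proposition~\ref{prop:LOCC}. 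Using instead the trivial "always decide $\mathsf{H}_1$" test ($T=0$) yields average error $p$, so $P_\textnormal{e}^\textnormal{LOCC}\leq\min\{(1-p)(\lambda/(m+1))^n, p\}$.

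For the converse I would invoke Lemma~\ref{lemm:converse_high} with $\rho=\Phi_m$, $\sigma=\Phi_m^\perp$ and complement dimension $d^2-m^2$ to strip off the $\mathbb{O}$ and $\tau$ summands, reducing the task to a lower bound on $P_\textnormal{e}^\textnormal{PPT}(\Phi_m^{\otimes n}, (\lambda\Phi_m^\perp)^{\otimes n}; p)$ on $\mathbb{C}^m\otimes\mathbb{C}^m$. The only genuinely new point is the sub-normalization $(\lambda\Phi_m^\perp)^{\otimes n}=\lambda^n(\Phi_m^\perp)^{\otimes n}$, which I would absorb into an effective prior $q:=p/\!\left(p+(1-p)\lambda^n\right)$ via the scaling identity
\begin{align}
P_\textnormal{e}^\textnormal{PPT}\!\left(\Phi_m^{\otimes n}, (\lambda\Phi_m^\perp)^{\otimes n}; p\right) = \left(p+(1-p)\lambda^n\right) P_\textnormal{e}^\textnormal{PPT}\!\left(\Phi_m^{\otimes n}, (\Phi_m^\perp)^{\otimes n}; q\right).
\end{align}
Evaluating the right-hand side with Theorem~\ref{theo:MES} (the maximally entangled case) and simplifying $(1-q)\!\left(p+(1-p)\lambda^n\right)=(1-p)\lambda^n$ together with $q\!\left(p+(1-p)\lambda^n\right)=p$ recovers exactly $\min\{(1-p)(\lambda/(m+1))^n, p\}$. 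Combined with the LOCC upper bound and the ordering, this closes the sandwich.

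The main obstacle I anticipate is not a hard estimate but the structural bookkeeping that makes the reduction legitimate: verifying that the embedded test $\{M_m\oplus\mathbb{O}_{d^2-m^2},\,\mathds{1}-T\}$ is genuinely LOCC in the enlarged $\mathbb{C}^d\otimes\mathbb{C}^d$ system, so that the extra $d^2-m^2$ dimensions cannot spoil achievability, and checking that Lemma~\ref{lemm:converse_high} applies with the correct identification of the bipartite cut and complement dimension. Once these points are secured, the scaling identity is routine algebra and the quantitative content is entirely inherited from Theorem~\ref{theo:MES}.
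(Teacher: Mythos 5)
Your proof is correct, and its overall architecture — LOCC achievability via the embedded test $\left(M_m \oplus \mathbb{O}_{d^2-m^2}\right)^{\otimes n}$ together with the trivial ``always $\mathsf{H}_1$'' test, the reduction of the converse via Lemma~\ref{lemm:converse_high}, and the sandwich through the inclusion ordering — coincides with the paper's. The one place you genuinely depart from the paper is the last step of the converse: after stripping off the $\mathbb{O}$ and $\tau$ blocks, the paper handles the subnormalized pair $\left(\Phi_m^{\otimes n}, (\lambda\Phi_m^\perp)^{\otimes n}\right)$ by saying ``following a similar argument as in Proposition~\ref{prop:PPT}'', i.e.\ it implicitly re-runs the symmetrization and linear-programming machinery with the factor $\lambda^n$ carried through the objective. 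You instead absorb $\lambda^n$ into an effective prior $q = p/\left(p+(1-p)\lambda^n\right)$ via the scaling identity and then invoke Theorem~\ref{theo:MES} (with $d$ replaced by $m$) as a black box. This is a cleaner, more modular finish: it needs no new computation, and the $n$-dependence of $q$ is harmless because Theorem~\ref{theo:MES} holds for each fixed $n$ and every prior in $(0,1)$; the paper's route leaves the LP bookkeeping to the reader, while yours makes the reduction fully explicit. Two small remarks. First, your achievability computation $\beta_n(T_n) = \left(\lambda/(m+1)\right)^n$ is the one that matches the stated result; the paper's intermediate line contains a spurious factor $\left((m^2-1)/(d^2-1)\right)^n$ inconsistent with its own final formula, so your version is the correct bookkeeping, and your local-projection argument (each party first checks membership in the first $m$ basis vectors, then runs $M_m$) is exactly the justification needed for LOCC implementability in the enlarged space. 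Second, the degenerate case $\lambda = 0$ makes $q=1$ fall outside the prior range of Theorem~\ref{theo:MES}, but it is trivial on its own: the two hypotheses then have orthogonal, locally distinguishable supports, and both sides of the claimed identity equal $0$.
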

\begin{proof} %[Proof of Proposition~\ref{theo:MES_high}]
	We first prove the achievability, i.e.,~the ``$\geq$" direction.
	%	Write
	%	\begin{align}
	%	\frac{\mathds{1}_{d^2}-\Phi_m\oplus\mathbb{O}_{d^2-m^2}}{d^2-1} 
	%	&= \frac{m^2-1}{d^2-1} \frac{\mathds{1}_{m^2}\oplus\mathbb{O}_{d^2-m^2} - \Phi_m\oplus\mathbb{O}_{d^2-m^2} }{ m^2-1} +  \frac{d^2-m^2}{d^2-1} \frac{\mathds{1}_{d^2} - \mathds{1}_{m^2}\oplus\mathbb{O}_{d^2-m^2} }{d^2 - m^2} \\
	%	&= \frac{m^2-1}{d^2-1} \frac{\mathds{1}_{m^2} - \Phi_m }{ m^2-1} \oplus   \frac{d^2-m^2}{d^2-1} \frac{\mathds{1}_{d^2 - m^2}  }{d^2 - m^2}.
	%	\end{align}
	%	Then, 
	%	\begin{align}
	%	\left( \frac{\mathds{1}_{d^2}-\Phi_m\oplus\mathbb{O}_{d^2-m^2}}{d^2-1} \right)^{\otimes n} = 
	%	\left(\frac{m^2-1}{d^2-1}\right)^n \left(\frac{\mathds{1}_{m^2} - \Phi_m }{ m^2-1}\right)^{\otimes n} \oplus   \left( \frac{d^2-m^2}{d^2-1} \right)^n \left(\frac{\mathds{1}_{d^2 - m^2}  }{d^2 - m^2}\right)^{\otimes n}. \label{eq:MES_high_LOCC3}
	%	\end{align}
	We choose the following POVM:
	\begin{align}
		&\quad\left\{
		\left( M_m \oplus \mathbb{O}_{d^2-m^2} \right)^{\otimes n},
		\mathbb{1}_{d^{2n}} - \left( M_m \oplus \mathbb{O}_{d^2-m^2} \right)^{\otimes n }
		\right\} \notag \\
		& = \left\{
		M_m^{\otimes n} \oplus \mathbb{O}_{d^{2n}-m^{2n}},
		\left( \mathds{1}_{m^{2n}} - M_m^{\otimes n} \right) \oplus \mathbb{1}_{d^{2n}-m^{2n}}
		\right\},
	\end{align}
	where $M_m = \Phi_m + \frac{1}{m+1} (\mathds{1}_{m^2} - \Phi_m) $ as in Eq.~\eqref{eq:Md}.
	
	Then, the average error probability for the chosen LOCC protocol is
	\begin{align}
		P_\text{e}^{\text{LOCC} } 
		&= p \Tr\left[ \left( \mathds{1}_{m^{2n}} - M_m^{\otimes n} \right) \oplus \mathbb{1}_{d^{2n}-m^{2n}} \cdot \Phi_m^{\otimes n}  \oplus \mathbb{O}_{d^{2n}-m^{2n}} \right]  \notag \\
		&\quad+ (1-p) \Tr\left[  M_m^{\otimes n} \oplus \mathbb{O}_{d^{2n}-m^{2n}} \cdot\left( \lambda \Phi_m^\perp \oplus   (1-\lambda) \tau_{d^2 - m^2}  \right)^{\otimes n} \right] \\
		&= (1-p) \Tr\left[  M_m^{\otimes n} 	\left( \lambda \Phi_m^\perp\right)^n \right] \\
		&= (1-p) 	\left(\frac{m^2-1}{d^2-1}\right)^n \left( \frac{\lambda}{m+1} \right)^n.
	\end{align}
	By applying a similar technique to the proof of Proposition~\ref{prop:achievability}, we choose $\rho_1^{\otimes n} = \left( \frac{\mathds{1}_{d^2} - \Phi_m \oplus \mathbb{O}_{d^2-m^2} }{d^2-1} \right)^{\otimes n}$ whenever $P_\text{e}^{\text{LOCC} }  >p$.
	Hence, we complete the proof of achievability.
	
	Next, we move on to prove the converse, i.e.,~the ``$\geq$" direction.
	Invoking Lemma~\ref{lemm:converse_high}, we have
	\begin{align}
		P_\textnormal{e}^\textnormal{PPT} (\rho_0^{\otimes n}, \rho_1^{\otimes n}; p ) \geq P_\textnormal{e}^\textnormal{PPT} (\Phi^{\otimes n}, (\lambda\Phi_m^\perp)^{\otimes n}; p ).
	\end{align}
	Following a similar argument in Proposition~\ref{prop:achievability}, we obtain the desired result.
\end{proof}

\begin{remark}
	By choosing $\lambda =  \frac{m^2-1}{d^2-1} $ in Eq.~\eqref{eq:MES_high_0}, it can be verified that the single copy in the alternative hypothesis $\mathsf{H}_1$ coincides the canonical orthogonal complement in the alternative hypothesis of Eq.~\eqref{eq:MES_high}, namely~$\frac{\mathds{1}_{d^2}-\Phi_m\oplus\mathbb{O}_{d^2-m^2}}{d^2-1}$.
	This answers the optimal average error probability of binary hypotheses Eq.~\eqref{eq:MES_high} considered at the beginning of this section, i.e.,~$\min\left\{ (1-p) \left( \frac{m-1}{d^2-1} \right)^n, p 
	\right\}$.
\end{remark}

The binary hypotheses considered in Eq.~\eqref{eq:MES_high_0} have a simple variant 
(by interchanging $\Phi_m$ and $\Phi_m^\perp$), for which 
we can immediately calculate its optimal average error.

\begin{prop} \label{theo:MES_high_1}
	Consider the binary hypotheses
	\begin{align} \label{eq:MES_high_1}
		\begin{dcases}
			\mathsf{H}_0 : \rho_0^{\otimes n} = \left( \Phi_m^\perp  \oplus \mathbb{O}_{d^2-m^2}\right)^{\otimes n}, & \\
			\mathsf{H}_1 : \rho_1^{\otimes n} = \left( \lambda \Phi_m \oplus   (1-\lambda) \tau_{d^2 - m^2}  \right)^{\otimes n} & \\
			%	\mathsf{H}_1 : \rho_1^{\otimes n} = \left( \frac{m^2-1}{d^2-1} \Phi_m \oplus   \frac{d^2-m^2}{d^2-1} \frac{\mathds{1}_{d^2 - m^2}  }{d^2 - m^2}  \right)^{\otimes n} & \\
		\end{dcases}, \quad \forall n,
	\end{align}
	where $\lambda \in [0,1]$.
	Then, it holds that
	\begin{align}
		P_\textnormal{e}^\textnormal{PPT} (\rho_0^{\otimes n}, \rho_1^{\otimes n}; p ) = P_\textnormal{e}^\textnormal{SEP} (\rho_0^{\otimes n}, \rho_1^{\otimes n}; p ) = P_\textnormal{e}^\textnormal{LOCC} (\rho_0^{\otimes n}, \rho_1^{\otimes n}; p ) =  \min\left\{ p\left( \frac{1}{m+1} \right)^n, (1-p)\lambda^n \right\}.
	\end{align}
\end{prop}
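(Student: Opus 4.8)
The plan is to prove matching achievability (``$\le$'') and converse (``$\ge$'') bounds for $P_\textnormal{e}^\textnormal{PPT}$ and then collapse all three classes using $P_\textnormal{e}^\textnormal{PPT}\le P_\textnormal{e}^\textnormal{SEP}\le P_\textnormal{e}^\textnormal{LOCC}$ from \eqref{eq:ordering}. Since the minimum has two branches, for achievability I would exhibit one LOCC test attaining each. For the branch $p\,(m+1)^{-n}$ I would take the per-copy ``detect $\mathsf{H}_1$'' operator $N := M_m \oplus \mathds{1}_{d^2-m^2}$, with $M_m := \Phi_m + \tfrac{1}{m+1}(\mathds{1}_{m^2}-\Phi_m)$ as in \eqref{eq:Md}, and use $T_n := \mathds{1}_{d^{2n}} - N^{\otimes n}$. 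Because $\Tr[N\rho_1]=\lambda\Tr[M_m\Phi_m]+(1-\lambda)\Tr[\tau_{d^2-m^2}]=1$, the operator $N$ fixes the support of $\rho_1=\lambda\Phi_m\oplus(1-\lambda)\tau_{d^2-m^2}$ and hence $\beta_n(T_n)=0$; since $\rho_0=\Phi_m^\perp\oplus\mathbb{O}$ lives in the main block where $N$ restricts to $M_m$, one gets $\alpha_n(T_n)=(\Tr[\Phi_m^\perp M_m])^n=(m+1)^{-n}$, so $P_\textnormal{e}^\textnormal{LOCC}\le p(m+1)^{-n}$. As in Proposition~\ref{prop:LOCC}, $N$ is LOCC-implementable: each party measures in the computational basis, and twirling over $\{(U\oplus\mathds{1}_{d-m})\otimes(U\oplus\mathds{1}_{d-m})^*:U\in U(m)\}$ (which fixes both hypotheses) sends the honest ``agree-inside-the-$m$-block or land-in-the-complement'' measurement to $N$.

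For the second branch $(1-p)\lambda^n$ I would instead use the local projector $T_n := P_{\mathrm{main}}^{\otimes n}$ with $P_{\mathrm{main}} := \mathds{1}_{m^2}\oplus\mathbb{O}_{d^2-m^2}$. As $\rho_0$ is supported in the main block, $T_n$ acts as the identity on its support, giving $\alpha_n(T_n)=0$, whereas $\beta_n(T_n)=(\Tr[\rho_1 P_{\mathrm{main}}])^n=\lambda^n$, hence $P_\textnormal{e}^\textnormal{LOCC}\le(1-p)\lambda^n$. Together the two tests yield $P_\textnormal{e}^\textnormal{LOCC}\le\min\{p(m+1)^{-n},(1-p)\lambda^n\}$.

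For the converse I would first invoke Lemma~\ref{lemm:converse_high} with $\rho=\Phi_m^\perp$, $\sigma=\Phi_m$, $d'^2=d^2-m^2$ to strip off the embedding and the $\tau$-component,
\[
P_\textnormal{e}^\textnormal{PPT}(\rho_0^{\otimes n},\rho_1^{\otimes n};p)\ \ge\ P_\textnormal{e}^\textnormal{PPT}\!\left((\Phi_m^\perp)^{\otimes n},(\lambda\Phi_m)^{\otimes n};p\right),
\]
and then evaluate the right-hand side, where $(\lambda\Phi_m)^{\otimes n}=\lambda^n\Phi_m^{\otimes n}$ is sub-normalized. I would absorb the weight by factoring $N:=p+(1-p)\lambda^n$ out of the Bayesian objective, reducing the problem to the normalized instance $P_\textnormal{e}^\textnormal{PPT}((\Phi_m^\perp)^{\otimes n},\Phi_m^{\otimes n};p')$ with $p'=p/N$. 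Because every admissible POVM class is closed under complementation, swapping the hypotheses is free, $P_\textnormal{e}^\textnormal{X}(\rho_1,\rho_0;1-q)=P_\textnormal{e}^\textnormal{X}(\rho_0,\rho_1;q)$, so this instance equals $P_\textnormal{e}^\textnormal{PPT}(\Phi_m^{\otimes n},(\Phi_m^\perp)^{\otimes n};1-p')$, which Theorem~\ref{theo:MES} (with $d\to m$) gives as $\min\{p'(m+1)^{-n},1-p'\}$. Multiplying back by $N$ and using $Np'=p$ and $N(1-p')=(1-p)\lambda^n$ returns exactly $\min\{p(m+1)^{-n},(1-p)\lambda^n\}$, establishing the converse.

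The main obstacle is this converse evaluation. A self-contained route mirroring Proposition~\ref{prop:PPT} is available — impose $U\otimes U^*$- and permutation-symmetry on the PPT test, recast the error via the matrix $Q$ of \eqref{eq:Q} as a linear program, and exhibit a feasible dual point — but it is computationally heavy, and the extra weight $\lambda$ together with the swapped roles of $\Phi_m,\Phi_m^\perp$ would force re-deriving the feasible solution \eqref{eq:feasible}. The rescaling-plus-swap reduction above sidesteps all of this by quoting Theorem~\ref{theo:MES}; the only points needing care are the bookkeeping of the sub-normalization (legitimate, since the optimal-error functional is defined for sub-normalized operators, exactly as used inside Lemma~\ref{lemm:converse_high}) and the degenerate cases $\lambda\in\{0,1\}$. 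Finally, combining achievability and converse with \eqref{eq:ordering} collapses the PPT, SEP, and LOCC values to the common expression $\min\{p(m+1)^{-n},(1-p)\lambda^n\}$.
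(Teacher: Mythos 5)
Your proposal is correct, and its achievability half is essentially the paper's own: the paper likewise exhibits one LOCC test per branch of the minimum (its first test uses the block operator $M_m^{\otimes n}\oplus\mathds{1}_{d^{2n}-m^{2n}}$ rather than your per-copy $\left(M_m\oplus\mathds{1}_{d^2-m^2}\right)^{\otimes n}$, but both yield $\alpha_n=(m+1)^{-n}$, $\beta_n=0$; its second test is exactly your block projector, yielding $\alpha_n=0$, $\beta_n=\lambda^n$). The converse also begins identically, invoking Lemma~\ref{lemm:converse_high} to reduce to $P_\textnormal{e}^\textnormal{PPT}\left((\Phi_m^\perp)^{\otimes n},(\lambda\Phi_m)^{\otimes n};p\right)$. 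Where you genuinely diverge is in evaluating that quantity: the paper merely says ``following a similar argument as in Proposition~\ref{prop:PPT}'', i.e.\ it asks the reader to re-run the symmetrization and linear-programming machinery with $\Phi_m$ and $\Phi_m^\perp$ swapped and the weight $\lambda$ inserted, which would indeed require re-deriving a dual feasible point analogous to \eqref{eq:feasible}. Your rescaling-plus-swap reduction replaces that sketch with a complete argument: absorb the subnormalization into the prior via $p'=p/(p+(1-p)\lambda^n)$, use closure of PPT/SEP/LOCC under $T\mapsto\mathds{1}-T$ to get $P_\textnormal{e}^\textnormal{X}(\rho_0,\rho_1;q)=P_\textnormal{e}^\textnormal{X}(\rho_1,\rho_0;1-q)$, and quote Theorem~\ref{theo:MES} with $d\to m$ as a black box; the identities $Np'=p$ and $N(1-p')=(1-p)\lambda^n$ then return exactly $\min\{p(m+1)^{-n},(1-p)\lambda^n\}$. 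This buys rigor and economy (no new LP, no new feasible solution), while the logical dependence is unchanged since Theorem~\ref{theo:MES} itself rests on Proposition~\ref{prop:PPT}; the paper's route, if carried out, would be self-contained but heavier. Two cosmetic points: you overload the symbol $N$ (per-copy operator vs.\ normalizing constant), and the degenerate case $\lambda=0$, where $p'=1$ falls outside the prior range $(0,1)$ assumed in Theorem~\ref{theo:MES}, should be dispatched separately --- there the claimed value is $\min\{p(m+1)^{-n},0\}=0$ and is attained by your block-projection test.
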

\begin{proof} %[Proof of Proposition~\ref{theo:MES_high_1}]
	We choose the POVM as
	\begin{align}
		%&\quad\left\{  	\mathbb{1}_{d^{2n}} - \left( {M}_m \oplus \mathbb{O}_{d^2-m^2} \right)^{\otimes n },  \left( M_m \oplus \mathbb{O}_{d^2-m^2} \right)^{\otimes n} \right\} \notag \\
		%& = 
		\left\{	\left( \mathds{1}_{m^{2n}} - M_m^{\otimes n} \right) \oplus \mathbb{O}_{d^{2n}-m^{2n}}, M_m^{\otimes n} \oplus \mathbb{1}_{d^{2n}-m^{2n}} \label{eq:MES_high_13}
		\right\},
	\end{align}
	where ${M}_m = \Phi_m + \frac{1}{m+1} \left( \mathds{1}_{m^2} - \Phi_m \right)$, as defined in Eq.~\eqref{eq:Md}.
	
	The average error probability for the chosen LOCC protocol is
	\begin{align}
		P_\text{e}^{\text{LOCC} } 
		&= p \Tr\left[ M_m^{\otimes n} \oplus \mathbb{1}_{d^{2n}-m^{2n}}  \cdot (\Phi_m^\perp)^{\otimes n}  \oplus \mathbb{O}_{d^{2n}-m^{2n}} \right]  \notag \\
		&\quad+ (1-p) \Tr\left[ \left( \mathds{1}_{m^{2n}} - M_m^{\otimes n} \right) \oplus \mathbb{O}_{d^{2n}-m^{2n}} \cdot\left( \lambda \Phi_m \oplus  (1-\lambda) \tau_{d^2 - m^2}  \right)^{\otimes n} \right] \\
		&= p\Tr\left[ {M}_m^{\otimes n} 	 (\Phi_m^\perp)^{\otimes n} \right] + (1-p)\Tr\left[ \left( \mathds{1}_{m^{2n}} - \Phi_m^{\otimes n} \right)  \cdot\left( \lambda \Phi_m  \right)^{\otimes n} \right]\\
		&= p\Tr\left[ {M}_m^{\otimes n} 	 (\Phi_m^\perp)^{\otimes n} \right] + (1-p)\lambda^n \\
		&= p \left( \frac{1}{m+1} \right)^n. \label{eq:MES_high_11}
	\end{align}

	On the other hand, we can also choose the POVM as
	\begin{align}
		\left\{	 \mathds{1}_{m^{2n}} \oplus \mathbb{O}_{d^{2n}-m^{2n}}, \mathbb{O}_{m^{2n}} \oplus \mathbb{1}_{d^{2n}-m^{2n}}
		\right\}. \label{eq:MES_high_14}
	\end{align}
	Then, the corresponding average error probability is 
	\begin{align}
		P_\text{e}^{\text{LOCC} } 
		&= p \Tr\left[ \mathbb{O}_{m^{2n}}  \oplus \mathbb{1}_{d^{2n}-m^{2n}}  \cdot (\Phi_m^\perp)^{\otimes n}  \oplus \mathbb{O}_{d^{2n}-m^{2n}} \right]  \notag \\
		&\quad+ (1-p) \Tr\left[ \mathds{1}_{m^{2n}}  \oplus \mathbb{O}_{d^{2n}-m^{2n}} \cdot\left( \lambda \Phi_m \oplus  (1-\lambda) \tau_{d^2 - m^2}  \right)^{\otimes n} \right] \\
		&= (1-p) \Tr\left[ \mathds{1}_{m^{2n}}  \cdot\left( \lambda \Phi_m   \right)^{\otimes n} \right] \\
		&= (1-p)\lambda^n. \label{eq:MES_high_12}
	\end{align}
	Since both POVMs chosen in Eqs.~\eqref{eq:MES_high_13} and \eqref{eq:MES_high_14} are implementable by LOCC protocols, we minimize the average error probabilities given in Eqs.~\eqref{eq:MES_high_11} and \eqref{eq:MES_high_12} to arrive at the ``$\leq $'' direction of our result.
	
	For the other direction, i.e.,~``$\geq$'', we use Lemma~\ref{lemm:converse_high}, to obtain
	\begin{align}
		P_\textnormal{e}^\textnormal{PPT} (\rho_0^{\otimes n}, \rho_1^{\otimes n}; p ) \geq P_\textnormal{e}^\textnormal{PPT} ((\Phi_m^\perp)^{\otimes n}, (\lambda \Phi_m)^{\otimes n}; p ).
	\end{align}
	Following a similar arguments to Proposition~\ref{prop:achievability} obtains the desired result.
\end{proof}

In the following Propositions~\ref{theo:sym_high} and \ref{theo:sym_high_1}, we apply similar techniques as before to calculate the binary hypothesis of a symmetric state embedded in a high-dimensional system and its orthogonal complement.

\begin{prop} \label{theo:sym_high}
	Consider the binary hypothesis
	\begin{align} \label{eq:sym_high}
		\begin{dcases}
			\mathsf{H}_0 : \rho_0^{\otimes n} = \left(\sigma_m \oplus \mathbb{O}_{d^2-m^2}\right)^{\otimes n}, & \\
			\mathsf{H}_1 : \rho_1^{\otimes n} = \left( \lambda \sigma_m^\perp \oplus   (1-\lambda) \tau_{d^2 - m^2}  \right)^{\otimes n} & \\	
			%	\mathsf{H}_1 : \rho_1^{\otimes n} = \left(  \frac{m(m-1)}{d(d-1)} \sigma_m^\perp \oplus   \frac{2d^2-2m^2}{d(d-1)} \frac{\mathds{1}_{d^2 - m^2}  }{d^2 - m^2}  \right)^{\otimes n} & \\
		\end{dcases}, \quad \forall n,
	\end{align}
	where $\lambda \in [0,1]$ is arbitrary.
	Then, it holds that
	\begin{align}
		P_\textnormal{e}^\textnormal{LOCC} (\rho_0^{\otimes n}, \rho_1^{\otimes n}; p )  = \min\left\{ p\left( \frac{m-1}{m+1} \right)^n, (1-p) \lambda^n
		\right\}.
	\end{align}
\end{prop}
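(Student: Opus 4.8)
The plan is to prove the matching bounds ``$\leq$'' by an explicit LOCC protocol and ``$\geq$'' by a PPT converse (which lower-bounds LOCC since $P_\textnormal{e}^\textnormal{LOCC} \geq P_\textnormal{e}^\textnormal{PPT}$), following verbatim the template of Propositions~\ref{theo:MES_high} and \ref{theo:MES_high_1}. The only structural change is that the role played there by the maximally entangled measurement $M_m$ is now played by the Werner measurement $\bar{M}_m = \tfrac{m-1}{m+1}\Pi_s + \Pi_a$ of Matthews--Winter \cite{WM08}, for which $\Tr[\sigma_m^\perp \bar{M}_m] = 1$ and $\Tr[\sigma_m \bar{M}_m] = \tfrac{m-1}{m+1}$.

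For the achievability I would exhibit two one-way LOCC tests and keep the smaller error. The first is $T = (\mathds{1}_{m^{2n}} - \bar{M}_m^{\otimes n}) \oplus \mathbb{O}$; since $\rho_1^{\otimes n}$ overlaps the block $\mathbb{C}^{m^{2n}}$ only through its $(\lambda \sigma_m^\perp)^{\otimes n}$ term, the two trace identities above give $\alpha_n = (\tfrac{m-1}{m+1})^n$ and $\beta_n = 0$, hence error $p(\tfrac{m-1}{m+1})^n$. The second is the local projection $T = \mathds{1}_{m^{2n}} \oplus \mathbb{O}$ onto the embedded subsystem, giving $\alpha_n = 0$, $\beta_n = \lambda^n$, hence error $(1-p)\lambda^n$. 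Both tests factor as a local projection onto $\mathbb{C}^m$ at each site followed by the LOCC-implementable $\bar{M}_m$, so the minimum of the two values yields the upper bound $\min\{p(\tfrac{m-1}{m+1})^n, (1-p)\lambda^n\}$.

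For the converse I would apply Lemma~\ref{lemm:converse_high} with $\rho = \sigma_m$ and $\sigma = \sigma_m^\perp$ to strip the $\mathbb{O}$- and $\tau$-blocks, reducing to $P_\textnormal{e}^\textnormal{PPT}(\sigma_m^{\otimes n}, (\lambda\sigma_m^\perp)^{\otimes n}; p)$. Invoking the permutation- and $U\otimes U$-invariance of the states, I would restrict to symmetric two-outcome POVMs assembled from $\{\Pi_s, \Pi_a\}^{\otimes n}$ and cast the optimal error as a linear program exactly as in Proposition~\ref{prop:PPT}, now encoding the PPT constraints through $\Pi_s^\Gamma = \tfrac{\mathds{1} + m\Phi_m}{2}$ and $\Pi_a^\Gamma = \tfrac{\mathds{1} - m\Phi_m}{2}$. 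Equivalently, the base problem is governed by the trade-off $\beta_\alpha^\textnormal{PPT}(\sigma_m^{\otimes n}, (\sigma_m^\perp)^{\otimes n}) = \max\{0,\, 1 - (\tfrac{m+1}{m-1})^n \alpha\}$, which is the Matthews--Winter result \cite{WM08} read off with $\sigma_m$ as the null (equivalently, the complement-reflection of their line $\beta_\alpha(\sigma_m^\perp,\sigma_m) = (1-\alpha)(\tfrac{m-1}{m+1})^n$). Minimizing the affine objective $p\alpha + (1-p)\lambda^n\beta$ along this lower boundary puts the optimum at one of the endpoints $\alpha \in \{0, (\tfrac{m-1}{m+1})^n\}$, returning exactly $\min\{p(\tfrac{m-1}{m+1})^n, (1-p)\lambda^n\}$ and closing the gap.

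The main obstacle is this PPT converse for the base problem, i.e.~showing that no PPT test for $n$ copies can beat the line $\beta \geq 1 - (\tfrac{m+1}{m-1})^n\alpha$. The single-copy bound is immediate from the spectra of $\Pi_s^\Gamma$ and $\Pi_a^\Gamma$, but the $n$-copy statement is not multiplicative on its face and requires either the symmetry reduction together with an explicit dual-feasible certificate for the linear program (as constructed in Proposition~\ref{prop:PPT} and Theorem~\ref{theo:trade-off}, and in \cite{WM08}) or a careful reflection of the Matthews--Winter analysis; I would take care to use the trade-off with $\sigma_m$ in the null slot, as the opposite orientation yields a different and incorrect value. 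Everything else---the two trace identities, the block bookkeeping for the embedding, and the final affine minimization---is routine.
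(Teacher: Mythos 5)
Your proposal is correct and follows essentially the same route as the paper: the identical two LOCC tests (the $\bar{M}_m^{\otimes n}$-based test and the block projection onto the embedded subsystem, keeping the smaller error) for achievability, and Lemma~\ref{lemm:converse_high} combined with the Matthews--Winter PPT analysis of the symmetric/antisymmetric pair for the converse. The only difference is cosmetic: the paper cites \cite[Proposition 3]{WM08} outright for the base-case PPT converse, whereas you re-derive it from the (correctly oriented) trade-off curve and an affine minimization over its boundary.
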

\begin{proof}
	The proof follows similar reasoning as in Theorem~\ref{theo:MES_high}.
	
	%	Write
	%	\begin{align}
	%	\frac{\mathds{1}_{d^2}-\sigma_m\oplus\mathbb{O}_{d^2-m^2}}{d^2-1} 
	%%	&= \frac{m^2-1}{d^2-1} \frac{\mathds{1}_{m^2}\oplus\mathbb{O}_{d^2-m^2} - \sigma_m\oplus\mathbb{O}_{d^2-m^2} }{ m^2-1} +  \frac{d^2-m^2}{d^2-1} \frac{\mathds{1}_{d^2} - \mathds{1}_{m^2}\oplus\mathbb{O}_{d^2-m^2} }{d^2 - m^2} \\
	%	&= \frac{m^2-1}{d^2-1} \sigma_m^{\perp} \oplus   \frac{d^2-m^2}{d^2-1} \frac{\mathds{1}_{d^2 - m^2}  }{d^2 - m^2}.
	%	\end{align}
	%	Then, 
	%	\begin{align}
	%	\left( \frac{\mathds{1}_{d^2}-\sigma_m\oplus\mathbb{O}_{d^2-m^2}}{d^2-1} \right)^{\otimes n} = 
	%	\left(\frac{m^2-1}{d^2-1}\right)^n \left(\frac{\mathds{1}_{m^2} - \sigma_m }{ m^2-1}\right)^{\otimes n} \oplus   \left( \frac{d^2-m^2}{d^2-1} \right)^n \left(\frac{\mathds{1}_{d^2 - m^2}  }{d^2 - m^2}\right)^{\otimes n}. \label{eq:sym_high_LOCC3}
	%	\end{align}
	We choose the POVM as
	\begin{align}
		%	&\quad\left\{ 	\left( 	\mathbb{1}_{d^{2n}} - \left( \bar{M}_m \oplus \mathbb{O}_{d^2-m^2} \right)^{\otimes n }, M_m \oplus \mathbb{O}_{d^2-m^2} \right)^{\otimes n}	\right\} \notag \\
		\left\{	\left( \mathds{1}_{m^{2n}} - \bar{M}_m^{\otimes n} \right) \oplus \mathbb{O}_{d^{2n}-m^{2n}}, \bar{M}_m^{\otimes n} \oplus \mathbb{1}_{d^{2n}-m^{2n}}
		\right\},
	\end{align}
	where $\bar{M}_m := \frac{m-1}{m+1} \Pi_s + \Pi_a$ as in Eq. (13) of \cite{WM08}, and $\Pi_s$ and $\Pi_a$ are the projections onto the support of the symmetric and anti-symmetric subspaces, respectively.
	
	Then, the average error probability for the chosen LOCC protocol is
	\begin{align}
		P_\text{e}^{\text{LOCC} } 
		&= p \Tr\left[ \bar{M}_m^{\otimes n} \oplus \mathbb{O}_{d^{2n}-m^{2n}}  \cdot \sigma_m^{\otimes n}  \oplus \mathbb{1}_{d^{2n}-m^{2n}} \right]  \notag \\
		&\quad+ (1-p) \Tr\left[ \left( \mathds{1}_{m^{2n}} - M_m^{\otimes n} \right) \oplus \mathbb{O}_{d^{2n}-m^{2n}} \cdot	\left( \lambda \sigma_m^\perp \oplus   (1-\lambda) \tau_{d^2 - m^2}  \right)^{\otimes n}\right] \\
		&= p \Tr\left[  \bar{M}_m^{\otimes n} \sigma_m^{\otimes n} \right] \\
		&= p \left(\frac{m-1}{m+1} \right)^n
	\end{align}
	where we invoke Eq. (17) of \cite{WM08} in the last line. 
	On the other hand, we can also choose the POVM as shown in Eq.~\eqref{eq:MES_high_14} to obtain $P_\text{e}^{\text{LOCC} } = (1-p)\lambda^n$. We then choose the minimum of the two to complete the achievability.
	%	We will choose $\rho_1^{\otimes n} $ whenever $P_\text{e}^{\text{LOCC} }  > (1-p)$, this completes the proof of the achievability, i.e.~the ``$\leq$" direction.
	
	The converse follows from Lemma~\ref{lemm:converse_high} and \cite[Proposition 3]{WM08}.
\end{proof}

\begin{prop} \label{theo:sym_high_1}
	Consider the binary hypothesis
	\begin{align} \label{eq:sym_high_1}
		\begin{dcases}
			\mathsf{H}_0 : \rho_0^{\otimes n} = \left(\sigma_m^\perp \oplus \mathbb{O}_{d^2-m^2}\right)^{\otimes n}, & \\
			\mathsf{H}_1 : \rho_1^{\otimes n} = \left(  \lambda \sigma_m \oplus  (1-\lambda) \tau_{d^2 - m^2}  \right)^{\otimes n} & \\
			%\mathsf{H}_1 : \rho_1^{\otimes n} = \left(  \frac{m(m+1)}{d(d+1)} \sigma_m \oplus   \frac{2d^2-2m^2}{d(d+1)} \frac{\mathds{1}_{d^2 - m^2}  }{d^2 - m^2}  \right)^{\otimes n} & \\
		\end{dcases}, \quad \forall n,
	\end{align}
	for some $\lambda \in [0,1]$.
	Then, it holds that
	\begin{align}
		P_\textnormal{e}^\textnormal{LOCC} (\rho_0^{\otimes n}, \rho_1^{\otimes n}; p )  = \min\left\{ (1-p)\left( \lambda\frac{m-1}{m+1} \right)^n, p 
		\right\}.
	\end{align}
\end{prop}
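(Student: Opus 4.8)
The plan is to mirror the two-part strategy (achievability followed by converse) used for the maximally-entangled analogue in Theorem~\ref{theo:MES_high}, with the completely anti-symmetric state now in the null hypothesis and the symmetric state in the alternative. Interchanging $\sigma_m$ and $\sigma_m^\perp$ here plays exactly the role that interchanging $\Phi_m$ and $\Phi_m^\perp$ played in Theorem~\ref{theo:MES_high}, so I expect the argument to be structurally identical, with the Matthews--Winter operator $\bar{M}_m = \frac{m-1}{m+1}\Pi_s + \Pi_a$ replacing $M_m$.

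For the achievability (the ``$\leq$'' direction), I would take the two-outcome POVM
\begin{align}
\left\{ \bar{M}_m^{\otimes n}\oplus\mathbb{O}_{d^{2n}-m^{2n}}, \; \left(\mathds{1}_{m^{2n}}-\bar{M}_m^{\otimes n}\right)\oplus\mathbb{1}_{d^{2n}-m^{2n}} \right\},
\end{align}
in which the first operator decides for $\mathsf{H}_0$. Since $\bar{M}_m$ acts as the identity on the anti-symmetric subspace, one has $\bar{M}_m^{\otimes n}(\sigma_m^\perp)^{\otimes n} = (\sigma_m^\perp)^{\otimes n}$, so the type-I error vanishes; on the alternative only the $(\lambda\sigma_m)^{\otimes n}$ component meets $\bar{M}_m^{\otimes n}\oplus\mathbb{O}$, and invoking equation (17) of \cite{WM08} to get $\Tr[\bar{M}_m^{\otimes n}\sigma_m^{\otimes n}] = (\tfrac{m-1}{m+1})^n$ yields $P_\textnormal{e}^\textnormal{LOCC} = (1-p)(\lambda\tfrac{m-1}{m+1})^n$. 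This POVM is LOCC for the same reason as in Proposition~\ref{theo:sym_high}. Comparing with the trivial test that always decides $\mathsf{H}_1$ (costing $p$) and taking the smaller of the two gives the upper bound $\min\{(1-p)(\lambda\tfrac{m-1}{m+1})^n, p\}$.

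For the converse (the ``$\geq$'' direction), I would first remove the embedding via Lemma~\ref{lemm:converse_high}, reducing to the bipartite $\mathbb{C}^m\otimes\mathbb{C}^m$ task
\begin{align}
P_\textnormal{e}^\textnormal{PPT}(\rho_0^{\otimes n}, \rho_1^{\otimes n}; p) \geq P_\textnormal{e}^\textnormal{PPT}\left((\sigma_m^\perp)^{\otimes n}, (\lambda\sigma_m)^{\otimes n}; p\right),
\end{align}
and then lower-bound the right-hand side by $\min\{(1-p)(\lambda\tfrac{m-1}{m+1})^n, p\}$. I would do this exactly as in Proposition~\ref{prop:PPT} and Proposition~\ref{theo:sym_high}: impose the permutation and $U\otimes U^*$ symmetry on the PPT test, expand the partial transposes in the $\{\Pi_s,\Pi_a\}^{\otimes n}$ basis, and cast the problem as a linear program. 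The PPT lower bound for the symmetric/anti-symmetric family is precisely \cite[Proposition~3]{WM08}, whose dual feasible point (now with the type-II term weighted by $\lambda$) supplies the claimed value.

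The achievability computation is a direct transcription of the symmetric case and is routine. The main obstacle is the converse: one must check that the Matthews--Winter PPT linear-programming bound still produces the correct value after the symmetric and anti-symmetric roles are swapped and the subnormalizing factor $\lambda$ is inserted in the alternative. Concretely, $\lambda^n$ must factor cleanly out of the type-II objective so that the dual optimum scales as $(\lambda\tfrac{m-1}{m+1})^n$, and the ``$p$'' branch of the minimum must be verified to be attained exactly when this quantity exceeds $p$. Both points follow the template already established in Propositions~\ref{prop:PPT} and~\ref{theo:sym_high}, so no genuinely new estimate is required.
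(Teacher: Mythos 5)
Your proposal is correct and follows essentially the same route as the paper: the achievability uses the identical Matthews--Winter test $\bar{M}_m^{\otimes n}\oplus\mathbb{O}$ (plus the trivial always-decide-$\mathsf{H}_1$ test for the $p$ branch), and the converse combines Lemma~\ref{lemm:converse_high} with the PPT bound of \cite[Proposition 3]{WM08}, exactly as the paper does. If anything, you are slightly more explicit than the paper about verifying that the factor $\lambda^n$ threads cleanly through the linear-programming converse, a step the paper leaves implicit in its citation.
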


\begin{proof}
	We choose the POVM as
	\begin{align}
		\left\{
		\bar{M}_m^{\otimes n} \oplus \mathbb{O}_{d^{2n}-m^{2n}},
		\left( \mathds{1}_{m^{2n}} - \bar{M}_m^{\otimes n} \right) \oplus \mathbb{1}_{d^{2n}-m^{2n}}
		\right\},
	\end{align}
	where $\bar{M}_m := \frac{m-1}{m+1} \Pi_s + \Pi_a$ as in equation (13) of \cite{WM08} and the proof of Theorem~\ref{theo:sym_high}.
	
	Then, the average error probability for the chosen LOCC protocol is
	\begin{align}
		P_\text{e}^{\text{LOCC} } 
		&= p \Tr\left[ \left( \mathds{1}_{m^{2n}} - \bar{M}_m^{\otimes n} \right) \oplus \mathbb{1}_{d^{2n}-m^{2n}} \cdot \sigma_m^{\otimes n}  \oplus \mathbb{O}_{d^{2n}-m^{2n}} \right]  \notag \\
		&\quad+ (1-p) \Tr\left[  \bar{M}_m^{\otimes n} \oplus \mathbb{O}_{d^{2n}-m^{2n}} \cdot	\left(  \lambda \sigma_m \oplus (1-\lambda) \tau{d^2 - m^2}  \right)^{\otimes n} \right] \\
		&= (1-p) \Tr\left[  \bar{M}_m^{\otimes n} 	\lambda^n \sigma_m^{\otimes n} \right] \\
		&= (1-p) \left( \lambda\frac{m-1}{m+1} \right)^n.
	\end{align}
	Finally, we choose $\rho_1^{\otimes n}$ whenever $P_\text{e}^{\text{LOCC} }  > p$.
	
	The converse follows Lemma~\ref{lemm:converse_high} and \cite[Proposition 3]{WM08}.
\end{proof}

\begin{remark}
	The Stein, Hoeffding, and strong converse exponents can be obtained by following the same arguments in Section~\ref{sec:MES_high}. Table~\ref{table:summary}  provides a summary of the results.
\end{remark}


\begin{thebibliography}{10}

	\normalsize
	
	\bibitem{Leh86}
	E.~L. Lehmann, \emph{Testing Statistical Hypotheses}.\hskip 1em plus 0.5em
	minus 0.4em\relax Springer New York, 1986.
	
	\bibitem{Bla74}
	R.~E. Blahut, ``Hypothesis testing and information theory,''
	\href{http://dx.doi.org/10.1109/tit.1974.1055254}{\emph{{IEEE} Transaction on
			Information Theory}},
	\href{http://dx.doi.org/10.1109/tit.1974.1055254}{vol.~20},
	\href{http://dx.doi.org/10.1109/tit.1974.1055254}{no.~4},
	\href{http://dx.doi.org/10.1109/tit.1974.1055254}{pp.~405--417},
	\href{http://dx.doi.org/10.1109/tit.1974.1055254}{Jul 1974}.
	
	\bibitem{HHH07}
	E.~A. Haroutunian, M.~E. Haroutunian, and A.~N. Harutyunyan, ``Reliability
	criteria in information theory and in statistical hypothesis testing,''
	\href{http://dx.doi.org/10.1561/0100000008}{\emph{Foundations and
			Trends{\textregistered} in Communications and Information Theory}},
	\href{http://dx.doi.org/10.1561/0100000008}{vol.~4},
	\href{http://dx.doi.org/10.1561/0100000008}{no. 2--3},
	\href{http://dx.doi.org/10.1561/0100000008}{pp.~97--263},
	\href{http://dx.doi.org/10.1561/0100000008}{2007}.
	
	\bibitem{CK11}
	I.~Csisz{\'a}r and J.~K{\"o}rner, \emph{Information Theory: Coding Theorems for
		Discrete Memoryless Systems (2nd ed.)}.\hskip 1em plus 0.5em minus
	0.4em\relax Cambridge University Press ({CUP}), 2011.
	
	\bibitem{PPV10}
	Y.~Polyanskiy, H.~V. Poor, and S.~Verdu, ``Channel coding rate in the finite
	blocklength regime,''
	\href{http://dx.doi.org/10.1109/tit.2010.2043769}{\emph{{IEEE} Trans. Inform.
			Theory}}, \href{http://dx.doi.org/10.1109/tit.2010.2043769}{vol.~56},
	\href{http://dx.doi.org/10.1109/tit.2010.2043769}{no.~5},
	\href{http://dx.doi.org/10.1109/tit.2010.2043769}{pp.~2307--2359},
	\href{http://dx.doi.org/10.1109/tit.2010.2043769}{May 2010}.
	
	\bibitem{CH17}
	H.-C. Cheng and M.-H. Hsieh, ``Moderate deviation analysis for
	classical-quantum channels and quantum hypothesis testing,''
	\href{http://dx.doi.org/10.1109/TIT.2017.2781254}{\emph{{IEEE} Transactions
			on Information Theory}},
	\href{http://dx.doi.org/10.1109/TIT.2017.2781254}{vol.~64},
	\href{http://dx.doi.org/10.1109/TIT.2017.2781254}{no.~2},
	\href{http://dx.doi.org/10.1109/TIT.2017.2781254}{pp.~1385--1403},
	\href{http://dx.doi.org/10.1109/TIT.2017.2781254}{Feb 2018}.
	
	\bibitem{CHT19}
	H.-C. Cheng, M.-H. Hsieh, and M.~Tomamichel, ``Quantum sphere-packing bounds
	with polynomial prefactors,''
	\href{http://dx.doi.org/10.1109/tit.2019.2891347}{\emph{{IEEE} Transactions
			on Information Theory}},
	\href{http://dx.doi.org/10.1109/tit.2019.2891347}{vol.~65},
	\href{http://dx.doi.org/10.1109/tit.2019.2891347}{no.~5},
	\href{http://dx.doi.org/10.1109/tit.2019.2891347}{pp.~2872--2898},
	\href{http://dx.doi.org/10.1109/tit.2019.2891347}{May 2019}.
	
	\bibitem{Hao-Chung}
	H.-C. Cheng, ``Error exponent analysis in quantum information theory,''
	\emph{PhD Thesis (University of Technology Sydney)}, 2018.
	
	\bibitem{GGR98}
	O.~Goldreich, S.~Goldwasser, and D.~Ron, ``Property testing and its connection
	to learning and approximation,''
	\href{http://dx.doi.org/10.1145/285055.285060}{\emph{Journal of the {ACM}}},
	\href{http://dx.doi.org/10.1145/285055.285060}{vol.~45},
	\href{http://dx.doi.org/10.1145/285055.285060}{no.~4},
	\href{http://dx.doi.org/10.1145/285055.285060}{pp.~653--750},
	\href{http://dx.doi.org/10.1145/285055.285060}{Jul 1998}.
	
	\bibitem{KR98}
	M.~Kearns and D.~Ron, ``Testing problems with sub-learning sample complexity,''
	in \emph{Proceedings of the eleventh annual conference on Computational
		learning theory - {COLT}' 98}.\hskip 1em plus 0.5em minus 0.4em\relax {ACM}
	Press, 1998.
	
	\bibitem{Ron07}
	D.~Ron, ``Property testing: A learning theory perspective,''
	\href{http://dx.doi.org/10.1561/2200000004}{\emph{Foundations and
			Trends{\textregistered} in Machine Learning}},
	\href{http://dx.doi.org/10.1561/2200000004}{vol.~1},
	\href{http://dx.doi.org/10.1561/2200000004}{no.~3},
	\href{http://dx.doi.org/10.1561/2200000004}{pp.~307--402},
	\href{http://dx.doi.org/10.1561/2200000004}{2007}.
	
	\bibitem{Ron09}
	------, ``Algorithmic and analysis techniques in property testing,''
	\href{http://dx.doi.org/10.1561/0400000029}{\emph{Foundations and
			Trends{\textregistered} in Theoretical Computer Science}},
	\href{http://dx.doi.org/10.1561/0400000029}{vol.~5},
	\href{http://dx.doi.org/10.1561/0400000029}{no.~2},
	\href{http://dx.doi.org/10.1561/0400000029}{pp.~73--205},
	\href{http://dx.doi.org/10.1561/0400000029}{2009}.
	
	\bibitem{Fis04}
	E.~Fischer, ``The art of uninformed decisions: A primer to property testing,''
	in \emph{Current Trends in Theoretical Computer Science}.\hskip 1em plus
	0.5em minus 0.4em\relax Wolrd Scientific, Apr 2004,
	\href{http://dx.doi.org/10.1142/9789812562494_0014}{pp.~229--263}.
	
	\bibitem{ADK15}
	J.~Acharya, C.~Daskalakis, and G.~Kamath, ``Optimal testing for properties of
	distributions,'' in \emph{Advances in Neural Information Processing Systems},
	2015, pp.~3591--3599.
	
	\bibitem{Can15}
	C.~L. Canonne, ``A survey on distribution testing: Your data is big. but is it
	blue?'' in \emph{Electronic Colloquium on Computational Complexity (ECCC)},
	vol.~22, no.~63, 2015, pp.~1--1.
	
	\bibitem{Yu19}
	N.~Yu, ``Efficient independence testing for quantum states.'' \href{https://arxiv.org/pdf/1904.03218}{arXiv:1904.03218 [quant-ph]}.
	
	\bibitem{WZ10}
	L.~Wasserman and S.~Zhou, ``A statistical framework for differential privacy,''
	\href{http://dx.doi.org/10.1198/jasa.2009.tm08651}{\emph{Journal of the
			American Statistical Association}},
	\href{http://dx.doi.org/10.1198/jasa.2009.tm08651}{vol. 105},
	\href{http://dx.doi.org/10.1198/jasa.2009.tm08651}{no. 489},
	\href{http://dx.doi.org/10.1198/jasa.2009.tm08651}{pp.~375--389},
	\href{http://dx.doi.org/10.1198/jasa.2009.tm08651}{2010}.
	
	\bibitem{DJW13}
	J.~C. Duchi, M.~I. Jordan, and M.~J. Wainwright, ``Local privacy and
	statistical minimax rates,'' in \emph{2013 {IEEE} 54th Annual Symposium on
		Foundations of Computer Science}.\hskip 1em plus 0.5em minus 0.4em\relax
	{IEEE}, Oct 2013.
	
	\bibitem{KOV17}
	P.~Kairouz, S.~Oh, and P.~Viswanath, ``The composition theorem for differential
	privacy,'' \href{http://dx.doi.org/10.1109/tit.2017.2685505}{\emph{{IEEE}
			Transactions on Information Theory}},
	\href{http://dx.doi.org/10.1109/tit.2017.2685505}{vol.~63},
	\href{http://dx.doi.org/10.1109/tit.2017.2685505}{no.~6},
	\href{http://dx.doi.org/10.1109/tit.2017.2685505}{pp.~4037--4049},
	\href{http://dx.doi.org/10.1109/tit.2017.2685505}{Jun 2017}.
	
	\bibitem{BBG+19}
	B.~Balle, G.~Barthe, M.~Gaboardi, J.~Hsu, and T.~Sato, ``Hypothesis testing
	interpretations and {R{\'e}nyi} differential privacy.''
	\href{https://arxiv.org/pdf/1905.09982}{arXiv:1905.09982 [cs.LG]}.
	
	
	\bibitem{DRS+19}
	J.~Dong, A.~Roth, and W.~J. Su, ``Gaussian differential privacy.''
	\href{https://arxiv.org/pdf/1905.02383}{arXiv:1905.02383 [cs.LG]}.
	
	\bibitem{Hel76}
	C.~W. Helstrom, \emph{Quantum Detection and Estimation Theory}. Elsevier, 1976.
	
	\bibitem{NP33}
	J.~Neyman and E.~S. Pearson, ``On the problem of the most efficient tests of
	statistical hypotheses,''
	\href{http://dx.doi.org/10.1098/rsta.1933.0009}{\emph{Philosophical
			Transactions of the Royal Society A: Mathematical, Physical and Engineering
			Sciences}}, \href{http://dx.doi.org/10.1098/rsta.1933.0009}{vol. 231},
	\href{http://dx.doi.org/10.1098/rsta.1933.0009}{no. 694-706},
	\href{http://dx.doi.org/10.1098/rsta.1933.0009}{pp.~289--337},
	\href{http://dx.doi.org/10.1098/rsta.1933.0009}{Jan 1933}.
	
	\bibitem{Che52}
	H.~Chernoff, ``A measure of asymptotic efficiency for tests of a hypothesis
	based on the sum of observations,''
	\href{http://dx.doi.org/10.1214/aoms/1177729330}{\emph{The Annals of
			Mathematical Statistics}},
	\href{http://dx.doi.org/10.1214/aoms/1177729330}{vol.~23},
	\href{http://dx.doi.org/10.1214/aoms/1177729330}{no.~4},
	\href{http://dx.doi.org/10.1214/aoms/1177729330}{pp.~493--507},
	\href{http://dx.doi.org/10.1214/aoms/1177729330}{Dec 1952}.
	
	\bibitem{Che56}
	------, ``Large-sample theory: Parametric case,''
	\href{http://dx.doi.org/10.1214/aoms/1177728347}{\emph{The Annals of
			Mathematical Statistics}},
	\href{http://dx.doi.org/10.1214/aoms/1177728347}{vol.~27},
	\href{http://dx.doi.org/10.1214/aoms/1177728347}{no.~1},
	\href{http://dx.doi.org/10.1214/aoms/1177728347}{pp.~1--22},
	\href{http://dx.doi.org/10.1214/aoms/1177728347}{Mar 1956}.
	
	\bibitem{Hoe65}
	W.~Hoeffding, ``Asymptotically optimal tests for multinomial distributions,''
	\href{http://dx.doi.org/10.1214/aoms/1177700150}{\emph{The Annals of
			Mathematical Statistics}},
	\href{http://dx.doi.org/10.1214/aoms/1177700150}{vol.~36},
	\href{http://dx.doi.org/10.1214/aoms/1177700150}{no.~2},
	\href{http://dx.doi.org/10.1214/aoms/1177700150}{pp.~369--401},
	\href{http://dx.doi.org/10.1214/aoms/1177700150}{Apr 1965}.
	
	\bibitem{ACM+07}
	K.~M.~R. Audenaert, J.~Calsamiglia, R.~Mu{\~{n}}oz-Tapia, E.~Bagan, L.~Masanes,
	A.~Acin, and F.~Verstraete, ``Discriminating states: The quantum {Chernoff}
	bound,''
	\href{http://dx.doi.org/10.1103/physrevlett.98.160501}{\emph{Physical Review
			Letters}}, \href{http://dx.doi.org/10.1103/physrevlett.98.160501}{vol.~98},
	\href{http://dx.doi.org/10.1103/physrevlett.98.160501}{p. 160501},
	\href{http://dx.doi.org/10.1103/physrevlett.98.160501}{Apr 2007}.
	
	\bibitem{ANS+08}
	K.~M.~R. Audenaert, M.~Nussbaum, A.~Szko{\l}a, and F.~Verstraete, ``Asymptotic
	error rates in quantum hypothesis testing,''
	\href{http://dx.doi.org/10.1007/s00220-008-0417-5}{\emph{Communications in
			Mathematical Physics}},
	\href{http://dx.doi.org/10.1007/s00220-008-0417-5}{vol. 279},
	\href{http://dx.doi.org/10.1007/s00220-008-0417-5}{no.~1},
	\href{http://dx.doi.org/10.1007/s00220-008-0417-5}{pp.~251--283},
	\href{http://dx.doi.org/10.1007/s00220-008-0417-5}{Feb 2008}.
	
	\bibitem{NS09}
	M.~Nussbaum and A.~Szko{\l}a, ``The {Chernoff} lower bound for symmetric
	quantum hypothesis testing,''
	\href{http://dx.doi.org/10.1214/08-aos593}{\emph{Annals of Statistics}},
	\href{http://dx.doi.org/10.1214/08-aos593}{vol.~37},
	\href{http://dx.doi.org/10.1214/08-aos593}{no.~2},
	\href{http://dx.doi.org/10.1214/08-aos593}{pp.~1040--1057},
	\href{http://dx.doi.org/10.1214/08-aos593}{Apr 2009}.
	
	\bibitem{HP91}
	F.~Hiai and D.~Petz, ``The proper formula for relative entropy and its
	asymptotics in quantum probability,''
	\href{http://dx.doi.org/10.1007/bf02100287}{\emph{Communications in
			Mathematical Physics}}, \href{http://dx.doi.org/10.1007/bf02100287}{vol.
		143}, \href{http://dx.doi.org/10.1007/bf02100287}{no.~1},
	\href{http://dx.doi.org/10.1007/bf02100287}{pp.~99--114},
	\href{http://dx.doi.org/10.1007/bf02100287}{Dec 1991}.
	
	\bibitem{ON00}
	T.~Ogawa and H.~Nagaoka, ``Strong converse and {Stein's} lemma in quantum
	hypothesis testing,'' \href{http://dx.doi.org/10.1109/18.887855}{\emph{{IEEE}
			Transaction on Information Theory}},
	\href{http://dx.doi.org/10.1109/18.887855}{vol.~46},
	\href{http://dx.doi.org/10.1109/18.887855}{no.~7},
	\href{http://dx.doi.org/10.1109/18.887855}{pp.~2428--2433},
	\href{http://dx.doi.org/10.1109/18.887855}{2000}.
	
	\bibitem{Ume62}
	H.~Umegaki, ``Conditional expectation in an operator algebra. {IV}. entropy and
	information,'' \href{http://dx.doi.org/10.2996/kmj/1138844604}{\emph{Kodai
			Mathematical Seminar Reports}},
	\href{http://dx.doi.org/10.2996/kmj/1138844604}{vol.~14},
	\href{http://dx.doi.org/10.2996/kmj/1138844604}{no.~2},
	\href{http://dx.doi.org/10.2996/kmj/1138844604}{pp.~59--85},
	\href{http://dx.doi.org/10.2996/kmj/1138844604}{1962}.
	
	\bibitem{Pet86}
	D.~Petz, ``Quasi-entropies for finite quantum systems,''
	\href{http://dx.doi.org/10.1016/0034-4877(86)90067-4}{\emph{Reports on
			Mathematical Physics}},
	\href{http://dx.doi.org/10.1016/0034-4877(86)90067-4}{vol.~23},
	\href{http://dx.doi.org/10.1016/0034-4877(86)90067-4}{no.~1},
	\href{http://dx.doi.org/10.1016/0034-4877(86)90067-4}{pp.~57--65},
	\href{http://dx.doi.org/10.1016/0034-4877(86)90067-4}{Feb 1986}.
	
	\bibitem{Nag06}
	H.~Nagaoka, ``The converse part of the theorem for quantum {Hoeffding} bound,''
	2006.
	\href{https://arxiv.org/pdf/quant-ph/0611289}{arXiv:quant-ph/0611289 [quant-ph]}.
	
	\bibitem{Nag05}
	------, ``Strong converse theorems in quantum information theory,'' in
	\emph{Asymptotic Theory of Quantum Statistical Inference}.\hskip 1em plus
	0.5em minus 0.4em\relax World Scientific, Feb 2005,
	\href{http://dx.doi.org/10.1142/9789812563071_0005}{pp.~64--65}.
	
	\bibitem{MO14}
	M.~Mosonyi and T.~Ogawa, ``Quantum hypothesis testing and the operational
	interpretation of the quantum {R{\'{e}}nyi} relative entropies,''
	\href{http://dx.doi.org/10.1007/s00220-014-2248-x}{\emph{Communications in
			Mathematical Physics}},
	\href{http://dx.doi.org/10.1007/s00220-014-2248-x}{vol. 334},
	\href{http://dx.doi.org/10.1007/s00220-014-2248-x}{no.~3},
	\href{http://dx.doi.org/10.1007/s00220-014-2248-x}{pp.~1617--1648},
	\href{http://dx.doi.org/10.1007/s00220-014-2248-x}{Dec 2014}.
	
	\bibitem{MDS+13}
	M.~M{\"u}ller-Lennert, F.~Dupuis, O.~Szehr, S.~Fehr, and M.~Tomamichel, ``On
	quantum {R{\'e}nyi} entropies: A new generalization and some properties,''
	\href{http://dx.doi.org/10.1063/1.4838856}{\emph{Journal of Mathematical
			Physics}}, \href{http://dx.doi.org/10.1063/1.4838856}{vol.~54},
	\href{http://dx.doi.org/10.1063/1.4838856}{no.~12},
	\href{http://dx.doi.org/10.1063/1.4838856}{p. 122203},
	\href{http://dx.doi.org/10.1063/1.4838856}{2013}.
	
	\bibitem{WWY14}
	M.~M. Wilde, A.~Winter, and D.~Yang, ``Strong converse for the classical
	capacity of entanglement-breaking and {Hadamard} channels via a sandwiched
	{R{\'{e}}nyi} relative entropy,''
	\href{http://dx.doi.org/10.1007/s00220-014-2122-x}{\emph{Communications in
			Mathematical Physics}},
	\href{http://dx.doi.org/10.1007/s00220-014-2122-x}{vol. 331},
	\href{http://dx.doi.org/10.1007/s00220-014-2122-x}{no.~2},
	\href{http://dx.doi.org/10.1007/s00220-014-2122-x}{pp.~593--622},
	\href{http://dx.doi.org/10.1007/s00220-014-2122-x}{Jul 2014}.
	
	\bibitem{Hay02}
	M.~Hayashi, ``Optimal sequence of quantum measurements in the sense of
	{Stein's} lemma in quantum hypothesis testing,''
	\href{http://dx.doi.org/10.1088/0305-4470/35/50/307}{\emph{Journal of Physics
			A: Mathematical and General}},
	\href{http://dx.doi.org/10.1088/0305-4470/35/50/307}{vol.~35},
	\href{http://dx.doi.org/10.1088/0305-4470/35/50/307}{no.~50},
	\href{http://dx.doi.org/10.1088/0305-4470/35/50/307}{pp.~10\,759--10\,773},
	\href{http://dx.doi.org/10.1088/0305-4470/35/50/307}{Dec 2002}.
	
	\bibitem{BS03}
	I.~Bjelakovic and R.~Siegmund-Schultze, ``Quantum {Stein's} lemma revisited,
	inequalities for quantum entropies, and a concavity theorem of {Lieb},''
	2003.
	\href{https://arxiv.org/pdf/quant-ph/0307170}{arXiv:quant-ph/0307170}.
	
	\bibitem{BDJ+05}
	I.~Bjelakovi{\'c}, J.-D. Deuschel, T.~Kr{\"u}ger, R.~Seiler,
	R.~Siegmund-Schultze, and A.~Szko{\l}a, ``A quantum version of {Sanov's}
	theorem,''
	\href{http://dx.doi.org/10.1007/s00220-005-1426-2}{\emph{Communications in
			Mathematical Physics}},
	\href{http://dx.doi.org/10.1007/s00220-005-1426-2}{vol. 260},
	\href{http://dx.doi.org/10.1007/s00220-005-1426-2}{no.~3},
	\href{http://dx.doi.org/10.1007/s00220-005-1426-2}{pp.~659--671},
	\href{http://dx.doi.org/10.1007/s00220-005-1426-2}{2005}.
	
	\bibitem{NH07}
	H.~Nagaoka and M.~Hayashi, ``An information-spectrum approach to classical and
	quantum hypothesis testing for simple hypotheses,''
	\href{http://dx.doi.org/10.1109/tit.2006.889463}{\emph{{IEEE} Transactions on
			Information Theory}},
	\href{http://dx.doi.org/10.1109/tit.2006.889463}{vol.~53},
	\href{http://dx.doi.org/10.1109/tit.2006.889463}{no.~2},
	\href{http://dx.doi.org/10.1109/tit.2006.889463}{pp.~534--549},
	\href{http://dx.doi.org/10.1109/tit.2006.889463}{Feb 2007}.
	
	\bibitem{TH13}
	M.~Tomamichel and M.~Hayashi, ``A {Hierarchy} of {Information} {Quantities} for
	{Finite} {Block} {Length} {Analysis} of {Quantum} {Tasks},''
	\href{http://dx.doi.org/10.1109/TIT.2013.2276628}{\emph{IEEE Transactions on
			Information Theory}},
	\href{http://dx.doi.org/10.1109/TIT.2013.2276628}{vol.~59},
	\href{http://dx.doi.org/10.1109/TIT.2013.2276628}{no.~11},
	\href{http://dx.doi.org/10.1109/TIT.2013.2276628}{pp.~7693--7710},
	\href{http://dx.doi.org/10.1109/TIT.2013.2276628}{Nov. 2013}.
	
	\bibitem{Li14}
	K.~Li, ``Second-order asymptotics for quantum hypothesis testing,''
	\href{http://dx.doi.org/10.1214/13-aos1185}{\emph{The Annals of Statistics}},
	\href{http://dx.doi.org/10.1214/13-aos1185}{vol.~42},
	\href{http://dx.doi.org/10.1214/13-aos1185}{no.~1},
	\href{http://dx.doi.org/10.1214/13-aos1185}{pp.~171--189},
	\href{http://dx.doi.org/10.1214/13-aos1185}{Feb 2014}.
	
	\bibitem{CTT2017}
	C.~T. Chubb, V.~Y.~F. Tan, and M.~Tomamichel, ``Moderate deviation analysis for
	classical communication over quantum channels,''
	\href{http://dx.doi.org/10.1007/s00220-017-2971-1}{\emph{Communications in
			Mathematical Physics}},
	\href{http://dx.doi.org/10.1007/s00220-017-2971-1}{vol. 355},
	\href{http://dx.doi.org/10.1007/s00220-017-2971-1}{no.~3},
	\href{http://dx.doi.org/10.1007/s00220-017-2971-1}{pp.~1283--1315},
	\href{http://dx.doi.org/10.1007/s00220-017-2971-1}{Nov 2017}.
	
	\bibitem{BIS+18}
	S.~Boixo, S.~V. Isakov, V.~N. Smelyanskiy, R.~Babbush, N.~Ding, Z.~Jiang, M.~J.
	Bremner, J.~M. Martinis, and H.~Neven, ``Characterizing quantum supremacy in
	near-term devices,''
	\href{http://dx.doi.org/10.1038/s41567-018-0124-x}{\emph{Nature Physics}},
	\href{http://dx.doi.org/10.1038/s41567-018-0124-x}{vol.~14},
	\href{http://dx.doi.org/10.1038/s41567-018-0124-x}{no.~6},
	\href{http://dx.doi.org/10.1038/s41567-018-0124-x}{pp.~595--600},
	\href{http://dx.doi.org/10.1038/s41567-018-0124-x}{Apr 2018}.
	
	\bibitem{Pre18}
	J.~Preskill, ``Quantum computing in the {NISQ} era and beyond,''
	\href{http://dx.doi.org/10.22331/q-2018-08-06-79}{\emph{Quantum}},
	\href{http://dx.doi.org/10.22331/q-2018-08-06-79}{vol.~2},
	\href{http://dx.doi.org/10.22331/q-2018-08-06-79}{p.~79},
	\href{http://dx.doi.org/10.22331/q-2018-08-06-79}{Aug 2018}.
	
	\bibitem{BBC+93}
	C.~H. Bennett, G.~Brassard, C.~Cr{\'{e}}peau, R.~Jozsa, A.~Peres, and W.~K.
	Wootters, ``Teleporting an unknown quantum state via dual classical and
	{Einstein-Podolsky-Rosen} channels,''
	\href{http://dx.doi.org/10.1103/physrevlett.70.1895}{\emph{Physical Review
			Letters}}, \href{http://dx.doi.org/10.1103/physrevlett.70.1895}{vol.~70},
	\href{http://dx.doi.org/10.1103/physrevlett.70.1895}{no.~13},
	\href{http://dx.doi.org/10.1103/physrevlett.70.1895}{pp.~1895--1899},
	\href{http://dx.doi.org/10.1103/physrevlett.70.1895}{Mar 1993}.
	
	\bibitem{NC09}
	M.~A. Nielsen and I.~L. Chuang, \emph{Quantum Computation and Quantum
		Information}.\hskip 1em plus 0.5em minus 0.4em\relax Cambridge University
	Press, 2009.
	
	\bibitem{HHH+09}
	R.~Horodecki, P.~Horodecki, M.~Horodecki, and K.~Horodecki, ``Quantum
	entanglement,''
	\href{http://dx.doi.org/10.1103/revmodphys.81.865}{\emph{Reviews of Modern
			Physics}}, \href{http://dx.doi.org/10.1103/revmodphys.81.865}{vol.~81},
	\href{http://dx.doi.org/10.1103/revmodphys.81.865}{no.~2},
	\href{http://dx.doi.org/10.1103/revmodphys.81.865}{pp.~865--942},
	\href{http://dx.doi.org/10.1103/revmodphys.81.865}{Jun 2009}.
	
	\bibitem{CLM+14}
	E.~Chitambar, D.~Leung, L.~Man{\v{c}}inska, M.~Ozols, and A.~Winter,
	``Everything you always wanted to know about {LOCC} (but were afraid to
	ask),''
	\href{http://dx.doi.org/10.1007/s00220-014-1953-9}{\emph{Communications in
			Mathematical Physics}},
	\href{http://dx.doi.org/10.1007/s00220-014-1953-9}{vol. 328},
	\href{http://dx.doi.org/10.1007/s00220-014-1953-9}{no.~1},
	\href{http://dx.doi.org/10.1007/s00220-014-1953-9}{pp.~303--326},
	\href{http://dx.doi.org/10.1007/s00220-014-1953-9}{Mar 2014}.
	
	\bibitem{BDF+99}
	C.~H. Bennett, D.~P. DiVincenzo, C.~A. Fuchs, T.~Mor, E.~Rains, P.~W. Shor,
	J.~A. Smolin, and W.~K. Wootters, ``Quantum nonlocality without
	entanglement,''
	\href{http://dx.doi.org/10.1103/PhysRevA.59.1070}{\emph{Physical Review A}},
	\href{http://dx.doi.org/10.1103/PhysRevA.59.1070}{vol.~59},
	\href{http://dx.doi.org/10.1103/PhysRevA.59.1070}{no.~2},
	\href{http://dx.doi.org/10.1103/PhysRevA.59.1070}{p.~1070},
	\href{http://dx.doi.org/10.1103/PhysRevA.59.1070}{1999}.
	
	\bibitem{WM08}
	W.~Matthews and A.~Winter, ``On the {Chernoff} distance for asymptotic {LOCC}
	discrimination of bipartite quantum states,''
	\href{http://dx.doi.org/10.1007/s00220-008-0582-6}{\emph{Communications in
			Mathematical Physics}},
	\href{http://dx.doi.org/10.1007/s00220-008-0582-6}{vol. 285},
	\href{http://dx.doi.org/10.1007/s00220-008-0582-6}{no.~1},
	\href{http://dx.doi.org/10.1007/s00220-008-0582-6}{pp.~161--174},
	\href{http://dx.doi.org/10.1007/s00220-008-0582-6}{Jul 2008}.
	
	\bibitem{BHL+14}
	F.~G. Brand{\~a}o, A.~W. Harrow, J.~R. Lee, and Y.~Peres, ``Adversarial
	hypothesis testing and a quantum stein's lemma for restricted measurements,''
	in \emph{Proceedings of the 5th conference on Innovations in theoretical
		computer science}.\hskip 1em plus 0.5em minus 0.4em\relax ACM, 2014, pp.
	183--194.
	
	\bibitem{OH14}
	M.~Owari and M.~Hayashi, ``Asymptotic local hypothesis testing between a pure
	bipartite state and the completely mixed state,''
	\href{http://dx.doi.org/10.1103/physreva.90.032327}{\emph{Physical Review
			A}}, \href{http://dx.doi.org/10.1103/physreva.90.032327}{vol.~90},
	\href{http://dx.doi.org/10.1103/physreva.90.032327}{no.~3},
	\href{http://dx.doi.org/10.1103/physreva.90.032327}{Sep 2014}.
	
	\bibitem{OH15}
	------, ``Local hypothesis testing between a pure bipartite state and the white
	noise state,'' \href{http://dx.doi.org/10.1109/tit.2015.2492958}{\emph{{IEEE}
			Transactions on Information Theory}},
	\href{http://dx.doi.org/10.1109/tit.2015.2492958}{vol.~61},
	\href{http://dx.doi.org/10.1109/tit.2015.2492958}{no.~12},
	\href{http://dx.doi.org/10.1109/tit.2015.2492958}{pp.~6995--7011},
	\href{http://dx.doi.org/10.1109/tit.2015.2492958}{Dec 2015}.
	
	\bibitem{HO17}
	M.~Hayashi and M.~Owari, ``Tight asymptotic bounds on local hypothesis testing
	between a pure bipartite state and the white noise state,''
	\href{http://dx.doi.org/10.1109/tit.2017.2687932}{\emph{{IEEE} Transactions
			on Information Theory}},
	\href{http://dx.doi.org/10.1109/tit.2017.2687932}{vol.~63},
	\href{http://dx.doi.org/10.1109/tit.2017.2687932}{no.~6},
	\href{http://dx.doi.org/10.1109/tit.2017.2687932}{pp.~4008--4036},
	\href{http://dx.doi.org/10.1109/tit.2017.2687932}{Jun 2017}.
	
	\bibitem{CMM+08}
	J.~Calsamiglia, R.~Mu{\~{n}}oz-Tapia, L.~Masanes, A.~Ac\'in, and E.~Bagan,
	``Quantum Chernoff bound as a measure of distinguishability between density
	matrices: Application to qubit and gaussian states,''
	\href{http://dx.doi.org/10.1103/physreva.77.032311}{\emph{Physical Review
			A}}, \href{http://dx.doi.org/10.1103/physreva.77.032311}{vol.~77},
	\href{http://dx.doi.org/10.1103/physreva.77.032311}{no.~3},
	\href{http://dx.doi.org/10.1103/physreva.77.032311}{Mar 2008}.
	
	\bibitem{CVM+10}
	J.~Calsamiglia, J.~I. de~Vicente, R.~Mu{\~{n}}oz-Tapia, and E.~Bagan, ``Local
	discrimination of mixed states,''
	\href{http://dx.doi.org/10.1103/physrevlett.105.080504}{\emph{Physical Review
			Letters}}, \href{http://dx.doi.org/10.1103/physrevlett.105.080504}{vol. 105},
	\href{http://dx.doi.org/10.1103/physrevlett.105.080504}{no.~8},
	\href{http://dx.doi.org/10.1103/physrevlett.105.080504}{Aug 2010}.
	
	\bibitem{Nat10}
	M.~Nathanson, ``Testing for a pure state with local operations and classical
	communication,'' \href{http://dx.doi.org/10.1063/1.3299294}{\emph{Journal of
			Mathematical Physics}}, \href{http://dx.doi.org/10.1063/1.3299294}{vol.~51},
	\href{http://dx.doi.org/10.1063/1.3299294}{no.~4},
	\href{http://dx.doi.org/10.1063/1.3299294}{p. 042102},
	\href{http://dx.doi.org/10.1063/1.3299294}{Apr 2010}.
	
	\bibitem{YDY12}
	Nengkun Yu, Runyao Duan, and Mingsheng Ying, ``Four locally indistinguishable ququad-ququad
	orthogonal maximally entangled states,''
	\href{http://dx.doi.org/10.1103/physrevlett.109.020506}{\emph{Physical Review
			Letters}}, \href{http://dx.doi.org/10.1103/physrevlett.109.020506}{vol. 109},
	\href{http://dx.doi.org/10.1103/physrevlett.109.020506}{no.~2},
	\href{http://dx.doi.org/10.1103/physrevlett.109.020506}{Jul 2012}.
	
%	\bibitem{YDY14}
%	------, ``Distinguishability of quantum states by positive operator-valued
%	measures with positive partial transpose,''
%	\href{http://dx.doi.org/10.1109/tit.2014.2307575}{\emph{{IEEE} Transactions
%			on Information Theory}},
%	\href{http://dx.doi.org/10.1109/tit.2014.2307575}{vol.~60},
%	\href{http://dx.doi.org/10.1109/tit.2014.2307575}{no.~4},
%	\href{http://dx.doi.org/10.1109/tit.2014.2307575}{pp.~2069--2079},
%	\href{http://dx.doi.org/10.1109/tit.2014.2307575}{Apr 2014}.
	
	\bibitem{LWD17}
	Y.~Li, X.~Wang, and R.~Duan, ``Indistinguishability of bipartite states by
	positive-partial-transpose operations in the many-copy scenario,''
	\href{http://dx.doi.org/10.1103/physreva.95.052346}{\emph{Physical Review
			A}}, \href{http://dx.doi.org/10.1103/physreva.95.052346}{vol.~95},
	\href{http://dx.doi.org/10.1103/physreva.95.052346}{no.~5},
	\href{http://dx.doi.org/10.1103/physreva.95.052346}{May 2017}.
	
	\bibitem{YZ17}
	N.~Yu and L.~Zhou, ``When is the {Chernoff} Exponent for Quantum Operations finite?''
	\href{ 	
		https://doi.org/10.1109/TIT.2021.3067924}{\emph{{IEEE} Transactions on Information Theory}, vol.~67, no.~7, July 2021}.
		
%	\href{https://arxiv.org/pdf/1705.01642}{arXiv:1705.01642 [quant-ph]}.
	
	\bibitem{AK18}
	S.~Akibue and G.~Kato, ``Bipartite discrimination of independently prepared
	quantum states as a counterexample to a parallel repetition conjecture,''
	\href{http://dx.doi.org/10.1103/physreva.97.042309}{\emph{Physical Review
			A}}, \href{http://dx.doi.org/10.1103/physreva.97.042309}{vol.~97},
	\href{http://dx.doi.org/10.1103/physreva.97.042309}{no.~4},
	\href{http://dx.doi.org/10.1103/physreva.97.042309}{Apr 2018}.
	
	\bibitem{CDR19a}
	H.-C. Cheng, N.~Datta, and C.~Rouz{\'{e}}, ``Strong converse bounds in quantum
	network information theory: distributed hypothesis testing and source
	coding,'' 2019.
	\href{https://arxiv.org/pdf/1905.00873}{arXiv:1905.00873 [quant-ph]}.	

	\bibitem{CDR19b}
H.-C. Cheng, N.~Datta, and C.~Rouz{\'{e}}, ``Strong Converse for Classical-Quantum Degraded Broadcast Channels,'' 2019.
\href{https://arxiv.org/pdf/1905.00874}{arXiv:1905.00874 [quant-ph]}.	
	
	\bibitem{Rai01}
	E.~Rains, ``A semidefinite program for distillable entanglement,''
	\href{http://dx.doi.org/10.1109/18.959270}{\emph{{IEEE} Transactions on
			Information Theory}}, \href{http://dx.doi.org/10.1109/18.959270}{vol.~47},
	\href{http://dx.doi.org/10.1109/18.959270}{no.~7},
	\href{http://dx.doi.org/10.1109/18.959270}{pp.~2921--2933},
	\href{http://dx.doi.org/10.1109/18.959270}{2001}.
	
	\bibitem{TDL01}
	B.~M. Terhal, D.~P. DiVincenzo, and D.~W. Leung, ``Hiding bits in {Bell}
	states,'' \href{http://dx.doi.org/10.1103/physrevlett.86.5807}{\emph{Physical
			Review Letters}},
	\href{http://dx.doi.org/10.1103/physrevlett.86.5807}{vol.~86},
	\href{http://dx.doi.org/10.1103/physrevlett.86.5807}{no.~25},
	\href{http://dx.doi.org/10.1103/physrevlett.86.5807}{pp.~5807--5810},
	\href{http://dx.doi.org/10.1103/physrevlett.86.5807}{Jun 2001}.
	
	\bibitem{DLT02}
	D.~DiVincenzo, D.~Leung, and B.~Terhal, ``Quantum data hiding,''
	\href{http://dx.doi.org/10.1109/18.985948}{\emph{{IEEE} Transactions on
			Information Theory}}, \href{http://dx.doi.org/10.1109/18.985948}{vol.~48},
	\href{http://dx.doi.org/10.1109/18.985948}{no.~3},
	\href{http://dx.doi.org/10.1109/18.985948}{pp.~580--598},
	\href{http://dx.doi.org/10.1109/18.985948}{Mar 2002}.
	
	\bibitem{MWW09}
	W.~Matthews, S.~Wehner, and A.~Winter, ``Distinguishability of quantum states
	under restricted families of measurements with an application to quantum data
	hiding,''
	\href{http://dx.doi.org/10.1007/s00220-009-0890-5}{\emph{Communications in
			Mathematical Physics}},
	\href{http://dx.doi.org/10.1007/s00220-009-0890-5}{vol. 291},
	\href{http://dx.doi.org/10.1007/s00220-009-0890-5}{no.~3},
	\href{http://dx.doi.org/10.1007/s00220-009-0890-5}{pp.~813--843},
	\href{http://dx.doi.org/10.1007/s00220-009-0890-5}{Aug 2009}.
	
	\bibitem{LW13}
	C.~Lancien and A.~Winter, ``Distinguishing multi-partite states by local
	measurements,''
	\href{http://dx.doi.org/10.1007/s00220-013-1779-x}{\emph{Communications in
			Mathematical Physics}},
	\href{http://dx.doi.org/10.1007/s00220-013-1779-x}{vol. 323},
	\href{http://dx.doi.org/10.1007/s00220-013-1779-x}{no.~2},
	\href{http://dx.doi.org/10.1007/s00220-013-1779-x}{pp.~555--573},
	\href{http://dx.doi.org/10.1007/s00220-013-1779-x}{Aug 2013}.
	
	\bibitem{LPW18}
	L.~Lami, C.~Palazuelos, and A.~Winter, ``Ultimate data hiding in quantum
	mechanics and beyond,''
	\href{http://dx.doi.org/10.1007/s00220-018-3154-4}{\emph{Communications in
			Mathematical Physics}},
	\href{http://dx.doi.org/10.1007/s00220-018-3154-4}{vol. 361},
	\href{http://dx.doi.org/10.1007/s00220-018-3154-4}{no.~2},
	\href{http://dx.doi.org/10.1007/s00220-018-3154-4}{pp.~661--708},
	\href{http://dx.doi.org/10.1007/s00220-018-3154-4}{Jun 2018}.
	
	\bibitem{Wer89}
	R.~F. Werner, ``Quantum states with {Einstein-Podolsky-Rosen} correlations
	admitting a hidden-variable model,''
	\href{http://dx.doi.org/10.1103/physreva.40.4277}{\emph{Physical Review A}},
	\href{http://dx.doi.org/10.1103/physreva.40.4277}{vol.~40},
	\href{http://dx.doi.org/10.1103/physreva.40.4277}{no.~8},
	\href{http://dx.doi.org/10.1103/physreva.40.4277}{pp.~4277--4281},
	\href{http://dx.doi.org/10.1103/physreva.40.4277}{Oct 1989}.
	
	\bibitem{Pol13}
	Y.~Polyanskiy, ``Saddle point in the minimax converse for channel coding,''
	\href{http://dx.doi.org/10.1109/tit.2012.2236382}{\emph{IEEE Transactions on
			Information Theory}},
	\href{http://dx.doi.org/10.1109/tit.2012.2236382}{vol.~59},
	\href{http://dx.doi.org/10.1109/tit.2012.2236382}{no.~5},
	\href{http://dx.doi.org/10.1109/tit.2012.2236382}{pp.~2576--2595},
	\href{http://dx.doi.org/10.1109/tit.2012.2236382}{May 2013}.
	
	\bibitem{BDM+99}
	C.~H. Bennett, D.~P. DiVincenzo, T.~Mor, P.~W. Shor, J.~A. Smolin, and B.~M.
	Terhal, ``Unextendible product bases and bound entanglement,''
	\href{http://dx.doi.org/10.1103/physrevlett.82.5385}{\emph{Physical Review
			Letters}}, \href{http://dx.doi.org/10.1103/physrevlett.82.5385}{vol.~82},
	\href{http://dx.doi.org/10.1103/physrevlett.82.5385}{no.~26},
	\href{http://dx.doi.org/10.1103/physrevlett.82.5385}{pp.~5385--5388},
	\href{http://dx.doi.org/10.1103/physrevlett.82.5385}{Jun 1999}.
	
	\bibitem{DMS+03}
	D.~P. DiVincenzo, T.~Mor, P.~W. Shor, J.~A. Smolin, and B.~M. Terhal,
	``Unextendible product bases, uncompletable product bases and bound
	entanglement,''
	\href{http://dx.doi.org/10.1007/s00220-003-0877-6}{\emph{Communications in
			Mathematical Physics}},
	\href{http://dx.doi.org/10.1007/s00220-003-0877-6}{vol. 238},
	\href{http://dx.doi.org/10.1007/s00220-003-0877-6}{no.~3},
	\href{http://dx.doi.org/10.1007/s00220-003-0877-6}{pp.~379--410},
	\href{http://dx.doi.org/10.1007/s00220-003-0877-6}{Jul 2003}.
	
	\bibitem{BCJ+15}
	S.~Bandyopadhyay, A.~Cosentino, N.~Johnston, V.~Russo, J.~Watrous, and N.~Yu,
	``Limitations on separable measurements by convex optimization,''
	\href{http://dx.doi.org/10.1109/tit.2015.2417755}{\emph{{IEEE} Transactions
			on Information Theory}},
	\href{http://dx.doi.org/10.1109/tit.2015.2417755}{vol.~61},
	\href{http://dx.doi.org/10.1109/tit.2015.2417755}{no.~6},
	\href{http://dx.doi.org/10.1109/tit.2015.2417755}{pp.~3593--3604},
	\href{http://dx.doi.org/10.1109/tit.2015.2417755}{Jun 2015}.
	
	\bibitem{DZ98}
	A.~Dembo and O.~Zeitouni, \emph{Large Deviations Techniques and
		Applications}.\hskip 1em plus 0.5em minus 0.4em\relax Springer, 1998.

%	\bibitem{YDY12}
%	Nengkun Yu, Runyao Duan, and Mingsheng Ying, ``Four Locally Indistinguishable Ququad-Ququad Orthogonal Maximally Entangled States," \href{https://dx.doi.org/10.1103/PhysRevLett.109.020506}{\emph{Physical Review Letters} 109, 020506, 2012}.
	
	\bibitem{YDY14}
	Nengkun Yu, Runyao Duan, Mingsheng Ying, ``Distinguishability of Quantum States by Positive Operator-Valued Measures With Positive Partial Transpose," 
	\href{https://dx.doi.org/10.1109/TIT.2014.2307575}{\emph{{IEEE} Transactions on Information Theory}, vol.~60, no.~4, pp.~2069--2079, April 2014}.
	
\end{thebibliography}
\end{document}